\newcommand{\E}{\mathbf{E}}
\newcommand{\bigA}{{\cal A}}
\newcommand{\bigD}{{\cal D}}
\def\paragraph#1{\vspace{0.25em}\noindent {\bf #1}}
\renewcommand{\thenote}{\thesection.\arabic{note}}
\newcommand{\mhnote}[1]{{\color{blue}\refstepcounter{note}$\ll${\bf Magnus's Comment \thenote: }{#1}$\gg$\marginpar{\tiny\bf MMH \thenote}}}
\newif\iflong
\newif\ifshort
\title{The Capacity of Smartphone Peer-to-Peer Networks}
\author{Michael Dinitz}
{Johns Hopkins University, Baltimore, MD, United States}
{mdinitz@cs.jhu.edu}
{}
{Supported in part by NSF award CCF-1535887}
\author{Magn\'us M. Halld\'orsson}
{Reykjav\'ik University, Iceland}
\author{Calvin Newport}
{Georgetown University, United States}
\author{Alex Weaver}
{Georgetown University, United States}
{aweaver@cs.georgetown.edu}
{} % (optional) orcid (eg. https://orcid.org/0000-0002-1825-0097)
{Supported in part by NSF award CCF-1733842} % (optional) author-specific funding acknowledgements
\authorrunning{M. Dinitz, M.\,M. Halld\'orsson, C. Newport, and A. Weaver}
\keywords{Capacity, Wireless, Mobile Telephone, Throughput}
\begin{document}

\maketitle

\begin{abstract}

We study three capacity problems in the mobile telephone model, a network abstraction that models the peer-to-peer communication capabilities implemented in most commodity smartphone operating systems.  The \emph{capacity} of a network expresses how much sustained throughput can be maintained for a set of communication demands, and  is therefore a fundamental bound on the usefulness of a network.  Because of this importance, wireless network capacity has been active area of research for the last two decades.  

The three capacity problems that we study differ in the structure of the communication demands.  The first problem is pairwise capacity, where the demands are (source, destination) pairs.  Pairwise capacity is one of the most classical definitions, as it was analyzed in the seminal paper of Gupta and Kumar on wireless network capacity.  The second problem we study is broadcast capacity, in which a single source must deliver packets to all other nodes in the network.  Finally, we turn our attention to all-to-all capacity, in which all nodes must deliver packets to all other nodes.  In all three of these problems we characterize the optimal achievable throughput for any given network, and design algorithms which asymptotically match this performance.  We also study these problems in networks generated randomly by a process introduced by Gupta and Kumar, and fully characterize their achievable throughput.  

Interestingly, the techniques that we develop for all-to-all capacity also allow us to design a one-shot gossip algorithm that runs within a polylogarithmic factor of optimal in every graph. This largely resolves an open question from previous work on the one-shot gossip problem in this model. 

\end{abstract}

% \vspace{4cm}
% \noindent {\em Regular Submission.}\\
\newpage

% turn page numbering back on after title page
% \pagenumbering{arabic} (style class does its own page numbering)

%%%%%%%%%%%%%%%%%%%%%%%%%%%%%%%%%%%%%%%%%%%%%%%%%%%%%%%%%%%%%%%%%%%%%%%%%%%%%%%%
% MAIN BODY
%%%%%%%%%%%%%%%%%%%%%%%%%%%%%%%%%%%%%%%%%%%%%%%%%%%%%%%%%%%%%%%%%%%%%%%%%%%%%%%%

%%%%%%%%%%%%%%%%%%%%%%%%%%%%%%%%%%%%%%%%%%%%%%%%%%%%%%%%%%%%%%%%%%%%%%%%%%%%%%%%
% BEGIN intro.tex
%%%%%%%%%%%%%%%%%%%%%%%%%%%%%%%%%%%%%%%%%%%%%%%%%%%%%%%%%%%%%%%%%%%%%%%%%%%%%%%%

\section{Introduction}
\label{sec:intro}

In this paper, we study the classical capacity problem in the mobile telephone model:
an abstraction that models the peer-to-peer communication capabilities implemented in most
commodity smartphone operating systems.
The capacity of a network expresses how much sustained throughput can be maintained for a set of communication demands.
We focus on three variations of the problem:
{\em pairwise} capacity, in which nodes are divided into pairwise packet flows,
{\em broadcast} capacity, in which a single source delivers packets to the whole network, 
and {\em all-to-all}
capacity, in which all nodes deliver packets to the whole network.

For each variation we prove limits on the achievable throughput and analyze
algorithms that match (or nearly match) these bounds. 
We study these results in both {\em arbitrary} networks and {\em random} networks generated with
the process introduced by Gupta and Kumar in their seminal paper on wireless network capacity~\cite{gupta:2000}.
Finally, we deploy our new techniques to largely resolve an open question from~\cite{newport:2017}
regarding optimal one-shot gossip in the mobile telephone model.
Below we summarize the problems we study and the results we prove,
interleaving the relevant related work.

\paragraph{The Mobile Telephone Model.}
The \emph{mobile telephone model} (MTM),
introduced by Ghaffari and Newport \cite{ghaffari:2016},
 modifies the well-studied 
\emph{telephone model} of wired peer-to-peer networks (e.g., \cite{frieze1985shortest,giakkoupis2011tight,chierichetti2010rumour,giakkoupis2012rumor,fountoulakis2010rumor,giakkoupis2014tight})
to better capture the dynamics of standard smartphone peer-to-peer libraries.
It is inspired, in particular, by the specific interfaces 
provided by Apple's Multipeer Connectivity Framework~\cite{multipeer}.

In this model,
the network is modeled as an undirected graph $G=(V,E)$,
where the nodes in $V$ correspond to smartphones,
and an edge $\{u,v\}\in E$ indicates the devices
corresponding to $u$ and $v$ are close enough to enable a direct peer-to-peer radio link.
Time proceeds in synchronous rounds.
As in the original telephone model,
in each round, each node can either attempt to initiate a connection (e.g., place a telephone call)
with at most one of its neighbors, or wait to receive connection attempts.
Unlike the original model, however,
a waiting node can accept at most one incoming connection attempt.
This difference is consequential,
as many of the celebrated results of the original telephone model depend
on the nodes' ability to accept an unbounded number of incoming connections (see~\cite{ghaffari:2016,daum:2016} for more discussion).\footnote{This behavior
is particularly evident in studying PUSH-PULL rumor spreading in the telephone model in a star network topology.
This simple strategy performs well in this network due to the ability of the points of the star
to simultaneously pull the rumor from the center. In the mobile telephone model, by contrast,
any rumor spreading strategy would be fundamentally slower due to the necessity of the center to connect
to the points one by one.}
This restriction is motivated by the reality that standard smartphone peer-to-peer libraries limit
the number of concurrent connections per device to a small constant (e.g., for Multipeer this limit is $8$).
Once connected, a pair of nodes can participate in a bounded amount of reliable communication (e.g.,
transfer a constant number of packets/rumors/tokens).

Finally, the mobile telephone model also allows each node to broadcast a small $O(\log{n})$-bit advertisement 
to its neighbors at the start of each round before the connection decisions are made.
Most existing smartphone peer-to-peer libraries implement this {\em scan-and-connect} architecture.
Notice, the mobile telephone model is harder than the original telephone model due to its connection restrictions, but also easier due to the presence of advertisements. The results is that the two settings are formally incomparable: each requires its own strategies for solving key problems.

In recent years, 
several standard one-shot peer-to-peer problems have been studied in
the MTM, including rumor spreading \cite{ghaffari:2016}, load balancing \cite{dinitz:2017}, leader election \cite{newport:2017}, and gossip \cite{newport:2017,newport:2019}.
This paper is the first to study ongoing communication in this setting.

\paragraph{The Capacity Problem.}
Capacity problems are parameterized with a network topology $G=(V,E)$,
and a flow set $F$ made up of pairs of the form $(s,R)$ (each of which is a \emph{flow}), where $s\in V$ indicates
a source (sometimes called a sender), and $R\subset V$ indicates a set of destinations (receivers).
For each flow $(s,R) \in F$, source $s$ is tasked with routing an infinite sequence of packets
to destinations in $R$.
The throughput achieved by a given destination for a particular flow is the average number of packets it receives from that flow per round in the limit,
and the overall throughput is the smallest throughput over all the destinations in all flows (see Section~\ref{sec:problem} for formal definitions).
We study three different capacity problems,
each defined by the different constraints they place on the flow set $F$.

\paragraph{Results: Pairwise Capacity.}
The pairwise capacity problem divides nodes into source and destination pairs in $F$, i.e., the given flows are between pairs of nodes rather than from a source to a general destination set.
We begin with pairwise capacity as it was the primary focus of Gupta and Kumar's seminal paper
on the capacity of the protocol and physical wireless network models~\cite{gupta:2000}.
They argued that it provides a useful assessment of a network's ability to handle concurrent communication.

We begin in Section~\ref{sec:mcf} by tackling the following fundamental problem: given an arbitrary connected
network topology graph $G=(V,E)$ and a {flow set} $F$ that divides the nodes in $V$ into sender
and receiver pairs, is it possible to efficiently calculate a packet routing schedule that approximates
the optimal achievable throughput? 
We answer this question in the affirmative by establishing a novel connection between pairwise
capacity and the classical concurrent multi-commodity flow (MCF) problem.
To do so, we first transform a given $G$ and $F$ into an instance of the MCF problem.
We then apply an existing MCF approximation algorithm to generate a fractional flow that achieves
a good approximation of the optimal flow in the network.
Finally, we apply a novel rounding procedure to transform the fractional flow into a schedule.
We prove that this resulting schedule provides a constant approximation of the optimal achievable throughput.

Inspired by Gupta and Kumar~\cite{gupta:2000},
in Section~\ref{sec:gk:random}
we turn our attention to networks and flow pairings that are randomly
generated using the process introduced in~\cite{gupta:2000}.
This process is parameterized with a network size $n\geq 2$ and communication radius $r>0$.
It randomly places the $n$ nodes in a unit square and adds an edge between any
pair of nodes within distance $r$. The source and destination pairs are also randomly generated.

For every given size $n$, we identify a {\em connectivity threshold} value $r_c(n) = \Theta(\sqrt{\log{n}/n})$,
such that for any radius $r \leq r_c(n)$, with constant probability the network generated by the above process for $n$ and $r$ includes a source with no path to its destination---trivializing the optimal achievable throughput to $0$.
We then prove that for every radius $r$ that is at least a sufficiently large constant factor larger than
the threshold, there is a tight bound of $\Theta(r)$ on the optimal achievable throughput.
These results fully characterize our algorithm from Section~\ref{sec:mcf} in
randomly generated networks.

\paragraph{Results: Broadcast Capacity.}
Broadcast capacity is another natural communication problem in which a single {\em source} node 
is provided an infinite sequence of packets to deliver to all other nodes in the network.
Solutions to this problem would be useful, for example,
in a scenario where a large file is being distributed in a peer-to-peer network
of smartphone users in a setting without infrastructure.
In Section~\ref{sec:bcast:upper} we study the optimal achievable throughput for this problem
in arbitrary connected graphs.
To do so, we connect the scheduling of broadcast packets to existing results on {\em graph toughness},
a metric that captures a graph's resilience to disconnection
that was introduced by Chv\'{a}tal~\cite{toughness} in
the context of studying Hamiltonian paths.

In more detail, a graph $G$ has a $k$-tree if there exists a spanning tree of $G$ with maximum degree $k$.
Let $d(G)$ be the smallest $k$ such that $G$ has a $k$-tree. This tree is also called a
 minimum degree spanning tree (MDST) of $G$.
Building on a result of Win~\cite{win:1989} that relates $k$-trees to toughness,
we prove that for any given $G$ with $d(G) > 3$,
there exists a subset $S$ of nodes such that removing $S$ from $G$ partitions
the graph into at least $(d(G) - 2)|S|$ connected components.
%\mhnote{This bound is not correct; it involves $c(G\setminus S)$, but do we really want to define that?}

As we formalize in Section~\ref{sec:bcast:upper}, 
because each node in $S$ can connect to at most one component per round (due to the 
connection restrictions of the 
mobile telephone model), $\Omega(d(G))$ rounds are required to spread each packet
to all components, implying that no schedule achieves throughput better than $O(1/d(G))$.

In Section~\ref{sec:bcast:lower}, we prove this bound tight by exhibiting a matching
 algorithm. 
The algorithm begins by constructing a $k$-tree $T$ with $k\in \Theta(d(G))$
using existing techniques; e.g.,~\cite{fr94,dinitz:2019}.
It then edge colors $T$ and uses the colors as the foundation for a TDMA schedule of length $\Theta(k)$
that allows nodes to simulate the more powerful CONGEST model in which each node can connect with every neighbor 
in a round. In the CONGEST model, a basic pipelined broadcast provides constant throughput.
When combined with the simulation cost the achieved throughput is an asymptotically optimal $\Omega(1/d(G))$.

It is straightforward for a centralized algorithm to calculate this schedule in polynomial time,
but in some cases a pre-computation of this type might be impractical,
or require too high of a setup cost.\footnote{In the mobile telephone model, 
all nodes can learn the entire network topology in $O(n^2)$
rounds and then run a centralized algorithm locally to determine their routing behavior.
Though this setup cost is averaged out when calculating throughput in the limit, 
it might be desirable to minimize it in practice.}
%
%As with this pairwise capacity,
%this algorithm could be deployed in a real network by first gathering the entire network 
%topology locally---which can be accomplished in $O(n^2)$ rounds in our model---and
%then running the centralized solution to generate the relevant routing behavior for each node.
%In theory, this setup cost will average out when calculating throughput in the limit.
%
%
%
With this in mind,
we also provide a {distributed} version of this algorithm
that converges
to $\Omega(1/(d(G) + \log{n}))$ throughput in $\tilde{O}(D(T)d(G) + \sqrt{n})$ rounds,
where $D(T)$ is the diameter of the spanning tree and $\tilde{O}$ hides polylog$(n)$ factors.
The algorithm further converges to an optimal $\Omega(1/d(G))$ throughput after no more than $O(n^2)$ total rounds---providing
a trade-off between setup cost and eventual optimality. 

Finally, in Section~\ref{sec:bcast:random},
we study the performance of our algorithm in networks generated randomly
using the Gupta and Kumar process summarized above.
We prove that for any communication radius sufficiently larger than the connectivity threshold,
the network is likely to include an $O(1)$-tree, enabling our algorithms to converge to constant throughput.
This result indicates that in evenly distributed network deployments the mobile telephone model is well-suited
for high performance broadcast.

\paragraph{Results: All-to-All Capacity.}
All-to-all capacity generalizes broadcast capacity such that now {\em every} node is provided
an infinite sequence of packets it must deliver to the entire network.
Solutions to this problem would be useful, for example, in a local multiplayer gaming scenario
in which each player needs to keep track of the evolving status of all other players connected
in a peer-to-peer network.

Clearly, $n$ separate instances of our broadcast algorithm from Section~\ref{sec:bcast:lower},
one for each of the $n$ nodes as the broadcast source,
can be interleaved with a round robin schedule to produce $\Omega(1/(n\cdot d(G)))$ throughput.
In Section~\ref{sec:all}, we draw on the same graph theory connections as before
to prove that this result is tight for all-to-all capacity.
We then provide a less heavy-handed distributed algorithm for 
achieving this throughput. Instead of interleaving $n$ different broadcast instances,
it executes distinct instances of all-to-all gossip, one for each packet number,
using a flood-based strategy on a low degree spanning tree.
Finally,  we apply the random graph analysis
from
Section~\ref{sec:bcast:random} to establish that for sufficiently large communication radius,
with high probability, the randomly generated graph supports $\Omega(1/n)$-throughput, 
which is trivially optimal in the sense that a receiver
can receive at most one new packet per round in our model.

\paragraph{New Results on One-Shot Gossip.}
As we detail in Section~\ref{sec:all:oneshpt},
our results on all-to-all capacity imply new lower and upper bounds on one-shot gossip in the mobile telephone model.
From the lower bound perspective, they imply that gossiping in graph $G$ in the mobile
telephone model requires $\Omega(n\cdot d(G))$ rounds.
From the upper bound perspective,
when we carefully account for the costs of our routing algorithm applied to spreading only a single packet from
each source,
we solve the one-shot problem with high probability in the following number of rounds: 
\[ O((D+\sqrt{n})\text{polylog}(n) + n(d(G) + \log{n}))= \tilde{O}(d(G)\cdot n),\]

\noindent  where $D$ is the diameter of $G$.
This algorithm
is asymptotically optimal in any graph with $d(G) \in \Omega(\log{n})$ and $D \in O(n/\log^x{n})$ (where $x$
is the constant from the polylog in the MDST construction time), which describes a large family of graphs.
For all other graphs the solution is at most a polylog factor slower than optimal.
This is the first known gossip solution to be optimal, or within log factors of optimal, in {\em all} 
graphs, largely answering a challenge presented by~\cite{newport:2017}.

\paragraph{Motivation.} % citations below from INFOCOM submission
Smartphone operating systems include increasingly robust support for opportunistic device-to-device communication
through standards such as Apple's Multipeer Connectivity Framework~\cite{multipeer}, 
Bluetooth LE~\cite{gomez2012overview}, and WiFi Direct~\cite{camps2013device}.
Though the original motivation for these links was to support information transfer among
a small number of nearby phones, researchers are beginning to explore their potential
to enable large-scale peer-to-peer networks.
Recent work, for example, uses smartphone peer-to-peer networking to 
provide disaster response~\cite{suzuki2012soscast,reina2015survey,lu2016networking}, 
circumvent censorship~\cite{firechat}, extend internet access~\cite{aloi2014spontaneous,oghostpot},
support local multiplayer gaming~\cite{mark2015peer} and improve classroom interaction~\cite{holzer2016padoc}.

It remains largely an open question whether or not it will be possible to build large-scale network
systems on top of smartphone peer-to-peer links.
As originally argued by Gupta and Kumar~\cite{gupta:2000},
bounds for capacity problems can help resolve such questions for a given network model
by establishing the limit to their ability to handle ongoing and concurrent communication.
The results in this paper, as well as the novel technical tools developed to prove them, 
can therefore help resolve this critical question concerning this important emerging network setting.

%%%%%%%%%%%%%%%%%%%%%%%%%%%%%%%%%%%%%%%%%%%%%%%%%%%%%%%%%%%%%%%%%%%%%%%%%%%%%%%%
% END intro.tex
%%%%%%%%%%%%%%%%%%%%%%%%%%%%%%%%%%%%%%%%%%%%%%%%%%%%%%%%%%%%%%%%%%%%%%%%%%%%%%%%

%%%%%%%%%%%%%%%%%%%%%%%%%%%%%%%%%%%%%%%%%%%%%%%%%%%%%%%%%%%%%%%%%%%%%%%%%%%%%%%%
% BEGIN model.tex
%%%%%%%%%%%%%%%%%%%%%%%%%%%%%%%%%%%%%%%%%%%%%%%%%%%%%%%%%%%%%%%%%%%%%%%%%%%%%%%%

\section{Preliminaries}
Here we define our model, the problem we study, and some useful mathematical tools and definitions. \ifshort Due to space constraints, this version of the paper omits most proofs.  All technical details can be found in the full version.\fi

\subsection{Model}
\label{sec:model}

The mobile telephone model describes a smartphone peer-to-peer network topology as an undirected graph 
$G=(V,E)$. 
The nodes in $V$ correspond to the smartphone devices,
and an edge $\{u,v\}\in E$ implies that the devices corresponding to $u$ and $v$ are within
range to establish a direct peer-to-peer radio link.
We use $n=|V|$ to indicate the network size.

Executions proceed in synchronous rounds labeled $1,2,...$,
and we assume all nodes start during round $1$.
%The standard assumption is that all nodes start during the same round.
%We will later address a straightforward strategy to synchronize the round numbers in a setting
%where nodes join the network a different times.
%
At the beginning of each round,
each node $u\in V$ selects an {\em advertisement} of size at most $O(\log{n})$ bits
to broadcast to its neighbors $N(u)$ in $G$.
After the advertisement broadcasts,
each node $u$ can either  send a connection invitation to at most one neighbor,
or wait to receive invitations.
A node receiving invitations can accept at most one, forming a reliable pairwise connection.
It follows from these constraints that the set of connections in a given round forms a matching.

Once connected, a pair of nodes can perform a bounded amount of reliable communication.
For the capacity problems studied in this paper,
we assume that a pair of connected nodes can transfer at most one packet over the connection in a given round.
We treat these packets as black boxes that can only be delivered in this manner (e.g., you cannot
break a packet into pieces, or attempt to deliver it using advertisement bits).

We assume when running a distributed algorithm in this model that each computational process (also called a {\em node})
is provided a unique ID that can fit into its advertisement and an estimate of the network size.
It is provided no other {\em a priori} information about the network topology,
though any such node can easily learn its local neighborhood in a single round if all nodes advertise their ID.

\subsection{Problem}
\label{sec:problem}

In this paper we measure capacity as the achievable throughput for various combinations of packet flow and network types.
We begin by providing a general definition of {\em throughput} that applies to all settings we study.
This definition makes use of an object we call a {\em flow set}, which is a set $F = \{(s_i, R_i) : 1 \leq i \leq k\}$ (for some $k \geq 1$) where each $s_i \in V$ and $R_i \subseteq V$ (for node set $V$).
For a given flow set $F$, each $(s_i,R_i) \in F$
describes a packet flow of \emph{type $i$}; i.e., source $s_i$ is tasked with sending packets to all the destinations in set $R_i$. We refer to the packets from $s_i$ as \emph{$i$-packets}.

A {\em schedule} for a given $G$ and $F$ describes a movement of packets through the flows defined by $F$. Formally, a schedule is an infinite
sequence of directed matchings, $M_1, M_2, ...$ on $G$, such that
the edges in each $M_t$ are labelled by packets, where we define
a packet as a pair $(i, j)$ with $i \in [|F|]$ and  $j \in \mathbb{N}$ (i.e., $(i,j)$ is the $j$'th packet of type $i$).
We require that the packet labels for a schedule
satisfy the property that if edge $(u,v)$ in $M_t$ is labelled
with packet $p = (i,j)$, then there is a path in $\bigcup_{l <t} M_l$
from $s_i$ to $u$ where all edges on the path are labelled with $p$. (It is easy to see by induction that this corresponds precisely to the intuitive notion of packets moving through a mobile telephone network).  We say that a packet $p$ is \emph{received} by a node $u$ in round $r$ if there is an edge $(v,u) \in M_r$ which is labelled $p$.
A packet $(i,j)$ is \emph{delivered} by round $r$ if
every $x \in R_i$ receives it in some round $t$ with $t \le r$. %\mdnote{Let's discuss this when we're all back} 

Given a schedule $S$ for a graph $G$ and flow set $F$,
we can define the throughput achieved by 
the slowest rate, indicated in packets per round, at which any
of the flows in $F$ are satisfied in the limit.
Formally:
% each labeled with a packet,
% satisfying the property that if an edge $(u,v)$ in matching 

% The goal of a {\em routing algorithm} in this setting is to satisfy the flows of a flow set provided as input,
% while attempting to maximize the worst-case throughput over all the destinations.
% Formally:

\iffalse
\begin{definition}
Fix a schedule $S$ defined with respect to
 network topology graph $G=(V,E)$  and flow set $F$.
We say $S$ 
{\em achieves throughput $t$}  with respect to $G$ and $F$,
if  there exists a {\em convergence round} $r_0\geq 1$, such
that for every $r \geq r_0$, $(s,R) \in F$, and $x\in R$:

\[ \left( \frac{del_{s,x}(r)}{r} \right)\ge t, \]

\noindent where $del_{s,x}(r)$ is the largest $j$ such

that for every $i \leq j$, there exists a round $r' \leq r$ where $M_{r'}$ contains an edge $(u,x)$ which is labelled $((s,R),i)$.
\end{definition}
\fi

\begin{definition}
Fix a schedule $S$ defined with respect to
 network topology graph $G=(V,E)$  and flow set $F$.
We say $S$ 
{\em achieves throughput $t$}  with respect to $G$ and $F$,
if  there exists a {\em convergence round} $r_0\geq 1$, such
that for every $r \geq r_0$ and every packet type $i$:
  \[ \left( \frac{del_i(r)}{r} \right)\ge t, \]
where $del_i(r)$ is the largest $j$ such that
for every $l \le j$, packet $(i,l)$ has been delivered by round $r$. 
\end{definition}

The above definition of throughput concerns performance in the limit, since $r_0$ can be arbitrarily large. 
In some cases, though, we might also be concerned with how quickly we achieve this limit.  Our notion of convergence round allows us to quantify this, so we will provide bounds on the convergence round where relevant.  

Many of the results in this paper concern algorithms that produce schedules.
Our centralized algorithms take $G$ and $F$ as input and efficiently
produce a compact description of an infinite schedule (i.e., an infinitely repeatable
finite schedule). Our distributed algorithms assume a computational process running
at each node in $G$, and for each $(s_i,R_i)\in F$, the source $s_i$ is provided an infinite sequence
of packets to deliver to $R_i$.
An execution of such a distributed algorithm might contain communication other than
the {\em flow packets} provided as input; e.g., the algorithm might distributedly (in the mobile telephone model) compute a routing
structure to coordinate efficient packet communication.
However, a unique schedule can be extracted from each such execution by considering
only communication corresponding to the flow packets. \iflong (It is here that we leverage
the model assumption that the set of connections in a given round is a matching and
each connection can send at most one flow packet per round.)  \fi

While our definition of throughput is for schedules and not algorithms, we will say that an algorithm \emph{achieves}  throughput $\alpha$ if it results in a schedule that achieves throughput $\alpha$.

In the sections that follow, we consider three different types of capacity: {\em pairwise}, {\em broadcast},
and {\em all-to-all}. Each capacity type can be formalized as a set of constraints on the allowable flow sets.
For each capacity type we study achievable throughput with respect to both {\em arbitrary} and {\em random} network topology graphs.
In the arbitrary case, the only constraints on the graph is that it is connected.
For the random case, we must describe a process for randomly generating the graph.
To do so, we use the approach introduced for this purpose by Gupta and Kumar~\cite{gupta:2000}: randomly place nodes in a unit square, and then add an edge between all pairs within some fixed radius. 
Formally:

\begin{definition}
For a given real value {\em radius} $r$, $0 < r \leq 1$,  and network size $n\geq 1$, 
the {\em GK$(n,r)$ network generation process} randomly generates a network topology $G=(V,E)$ as follows:
\begin{enumerate}
	\item Let $V=\{u_1,u_2, ..., u_{n}\}$.
	Place each of the $n$ nodes in $V$ uniformly at random in a unit square in the Euclidean plane. 
	
	\item Let $E =\{ (u_i, u_j) : d(u_i, u_j) \leq r\}$, where $d$ is the Euclidean distance metric.
\end{enumerate}
\end{definition}

We will use the notation $G \sim GK(n,r)$ to denote that $G$ is a random graph generated by the $GK(n,r)$ process.  When studying a specific definition of capacity with respect to a network randomly generated with the $GK$
process, it is necessary to specify how the flow set is generated. Because these details differ for each
of the three capacity definitions, we defer their discussion to their relevant sections.

\subsection{Mathematical Preliminaries}
\label{sec:prelim}

\iflong
\paragraph{Probabilistic Preliminaries.}
Several proofs will make use of the following Chernoff bound form:

\begin{theorem}
Suppose $X_1, ..., X_k$ are independent random variables. Let $X=\sum_{i=1}^k X_i$ and $\mu = \E(X)$.
%It follows:
Then,
\begin{itemize}
    \item For $0 < \delta < 1$: $\Pr[X \leq (1-\delta)\mu)] \leq \exp{\left(\frac{-\delta^2 \mu}{2}\right)}$
    \item For $\delta > 0$: $\Pr[X \geq (1+\delta)\mu)] \leq \exp{\left(\frac{-\delta^2 \mu}{2+\delta}\right)}$
\end{itemize}
\label{thm:chernoff}
\end{theorem}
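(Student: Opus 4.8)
The plan is to use the classical exponential moment (Chernoff) method via Markov's inequality. As is standard for a Chernoff bound stated in this precise multiplicative form, I take each $X_i$ to lie in $[0,1]$ (the only regime in which the theorem is invoked in this paper). For the upper tail, fix a parameter $t>0$ and apply Markov's inequality to the nonnegative random variable $e^{tX}$:
\[ \Pr[X \geq (1+\delta)\mu] = \Pr\bigl[e^{tX} \geq e^{t(1+\delta)\mu}\bigr] \leq e^{-t(1+\delta)\mu}\,\E\bigl[e^{tX}\bigr]. \]
By independence, $\E[e^{tX}] = \prod_{i=1}^{k} \E[e^{tX_i}]$. Since $x \mapsto e^{tx}$ is convex, on $[0,1]$ it lies below the chord through $(0,1)$ and $(1,e^t)$, so $e^{tX_i} \leq 1 + X_i(e^t-1)$ pointwise; taking expectations and writing $p_i = \E[X_i]$ gives $\E[e^{tX_i}] \leq 1 + p_i(e^t-1) \leq e^{p_i(e^t-1)}$ via $1+y \leq e^y$. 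Multiplying over $i$ yields $\E[e^{tX}] \leq e^{\mu(e^t-1)}$, hence $\Pr[X \geq (1+\delta)\mu] \leq \exp\!\bigl(\mu(e^t - 1 - t(1+\delta))\bigr)$.

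Next I optimize the exponent over $t$: the minimizer of $e^t - 1 - t(1+\delta)$ is $t = \ln(1+\delta) > 0$, which yields the familiar bound $\Pr[X \geq (1+\delta)\mu] \leq \bigl(e^{\delta}/(1+\delta)^{1+\delta}\bigr)^{\mu}$. To reach the stated form it suffices to show $e^{\delta}/(1+\delta)^{1+\delta} \leq e^{-\delta^2/(2+\delta)}$, i.e., after taking logarithms and simplifying, $\ln(1+\delta) \geq 2\delta/(2+\delta)$. The lower tail is symmetric: for $t>0$, Markov applied to $e^{-tX}$ together with the same convexity/MGF bound gives $\Pr[X \leq (1-\delta)\mu] \leq \exp\!\bigl(\mu(e^{-t} - 1 + t(1-\delta))\bigr)$; choosing $t = -\ln(1-\delta) > 0$ (valid since $0<\delta<1$) produces $\Pr[X \leq (1-\delta)\mu] \leq \bigl(e^{-\delta}/(1-\delta)^{1-\delta}\bigr)^{\mu}$, and it remains to check $e^{-\delta}/(1-\delta)^{1-\delta} \leq e^{-\delta^2/2}$, equivalently $(1-\delta)\ln(1-\delta) \geq \delta^2/2 - \delta$.

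The only remaining work is the two scalar inequalities, each dispatched by a one-line monotonicity argument. For the first, set $f(\delta) = \ln(1+\delta) - \tfrac{2\delta}{2+\delta}$; then $f(0)=0$ and $f'(\delta) = \tfrac{1}{1+\delta} - \tfrac{4}{(2+\delta)^2} = \tfrac{\delta^2}{(1+\delta)(2+\delta)^2} \geq 0$, so $f \geq 0$ on $[0,1)$. For the second, set $g(\delta) = (1-\delta)\ln(1-\delta) + \delta - \delta^2/2$; then $g(0)=0$ and $g'(\delta) = -\ln(1-\delta) - \delta \geq 0$ (since $-\ln(1-\delta) = \sum_{m \geq 1}\delta^m/m \geq \delta$), so $g \geq 0$ on $[0,1)$. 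I expect this last step — passing from the raw $\bigl(e^{\pm\delta}/(1\pm\delta)^{1\pm\delta}\bigr)^{\mu}$ expressions to the clean $\exp(-\delta^2\mu/2)$ and $\exp(-\delta^2\mu/(2+\delta))$ forms — to be the main (though entirely routine) obstacle; everything before it is a direct application of Markov's inequality plus independence. Alternatively, one may simply cite a standard reference (e.g., Mitzenmacher–Upfal or Motwani–Raghavan) where exactly this form is established.
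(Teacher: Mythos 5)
Your proof is correct: it is the standard exponential-moment (Markov plus MGF, optimize over $t$) derivation, and your two calculus lemmas reducing $\bigl(e^{\pm\delta}/(1\pm\delta)^{1\pm\delta}\bigr)^{\mu}$ to $\exp(-\delta^2\mu/2)$ and $\exp(-\delta^2\mu/(2+\delta))$ check out. The paper itself states Theorem~\ref{thm:chernoff} without proof, as a standard tool, so there is no in-paper argument to compare against; your derivation is precisely the textbook proof that the citation-style statement implicitly rests on. One small but worthwhile observation: as written in the paper the hypothesis of boundedness is missing (independence alone does not give these bounds), and you were right to add the assumption $X_i \in [0,1]$ --- this is consistent with every use in the paper, where the theorem is applied only to sums of indicator variables (and, via Theorem~\ref{thm:gk:negative}, to negatively associated indicators, for which the same MGF product bound $\E[e^{tX}] \le \prod_i \E[e^{tX_i}]$ still holds).
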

\fi

\iflong \paragraph{Graph Theory Preliminaries.} \fi
We begin with some basic definitions.
Fix some connected undirected graph $G=(V,E)$.
We define $c(G)$ to be the number of components in $G$. In a slight abuse of notation, 
we define $G\setminus S$, for $S \subseteq V$, to be the graph defined when we remove from $G$ the nodes in $S$ and their adjacent edges.
For a fixed integer $k>1$, we say $G$ {\em has a $k$-tree} if there exists a spanning tree in $G$ with maximum degree $k$. 
Finally, let $d(G)$ be the smallest $k$ such that $G$ has a $k$-tree. That is, $d(G)$ describes the maximum degree of the
{\em minimum degree spanning tree} (MDST) in $G$.

\ifshort
Some of our results will use the following simple corollary of a theorem of Win~\cite{win:1989}.  The proof, which utilizes the notion of \emph{graph toughness}~\cite{win:1989}, can be found in the full version.
\begin{theorem}
Fix an undirected graph $G=(V,E)$ and degree $k \geq 3$. If $d(G) > k$, then there exists a non-empty subset of nodes $S \subset V$
such that there are more than $c(G \setminus S) > (k-2)\cdot |S|$.
\label{thm:toughness}
\end{theorem}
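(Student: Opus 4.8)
The plan is to obtain the statement as the contrapositive of a classical theorem of Win~\cite{win:1989} connecting graph toughness to low‑degree spanning trees. Recall that the \emph{toughness} of a connected non‑complete graph $G$ is $t(G) = \min\{\,|S|/c(G\setminus S)\,\}$, where the minimum ranges over all \emph{cutsets} $S$, i.e., sets $S \subset V$ with $c(G\setminus S) \ge 2$. Win's theorem states that for every positive integer $m$, if $G$ is connected and $t(G) \ge 1/m$, then $G$ has a spanning tree of maximum degree at most $m+2$; in our notation this says $d(G) \le m+2$. The whole task is then to feed the right $m$ into this statement.

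Concretely, I would prove the contrapositive: assuming $c(G\setminus S) \le (k-2)\,|S|$ for every non‑empty $S \subset V$, I will show $d(G) \le k$. First dispatch the degenerate case: if $G$ has at most two vertices, or if $G$ is complete, then $G$ has a Hamiltonian path and hence a spanning tree of maximum degree at most $2 \le k$, so $d(G) \le k$. Otherwise $G$ has at least one cutset (take $S = V \setminus \{u,v\}$ for any non‑adjacent pair $u,v$), and for every cutset $S$ the hypothesis gives $|S|/c(G\setminus S) \ge 1/(k-2)$; taking the minimum over all cutsets yields $t(G) \ge 1/(k-2)$. Since $k \ge 3$, the integer $m := k-2$ is positive, so Win's theorem applies with this $m$ and produces a spanning tree of maximum degree at most $m+2 = k$. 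Hence $d(G) \le k$, which is the contrapositive of the claimed statement, and the theorem follows.

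The argument is essentially bookkeeping once Win's theorem is invoked, so I do not expect a real obstacle. The only points needing care are: (i) quantifier matching — Win's hypothesis concerns \emph{all cutsets}, so we must separately handle the possibility that $G$ has no cutset at all, which is exactly the complete‑graph case above; (ii) ensuring $m = k-2 \ge 1$, which is precisely why the assumption $k \ge 3$ is present; and (iii) keeping track of strict versus non‑strict inequalities when translating between the conclusion ``$c(G\setminus S) > (k-2)|S|$'', its negation ``$c(G\setminus S) \le (k-2)|S|$ for all $S$'', and the toughness bound ``$t(G) \ge 1/(k-2)$''. If one prefers to avoid the complete‑graph caveat, an alternative is to phrase the contrapositive hypothesis directly in terms of the inequality $c(G\setminus S)\le(k-2)|S|$ and cite whichever equivalent formulation of Win's theorem is stated without reference to toughness.
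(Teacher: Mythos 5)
Your proof is correct and rests on exactly the same key ingredient as the paper's: Win's theorem linking toughness $t(G)\ge 1/(k-2)$ to the existence of a $k$-tree. The only difference is cosmetic — the paper applies the contrapositive of Win's theorem to get $t(G)<1/(k-2)$ and then extracts the witness set $S$ directly from the definition of toughness, whereas you prove the contrapositive of the stated theorem and handle the complete-graph/no-cutset case explicitly — so this is essentially the paper's argument.
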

\fi
%\bigskip

%\noindent 
\iflong
Several of our capacity results build on a graph metric called {\em toughness}, introduced by Chv\'{a}tal~\cite{toughness} in
the context of studying Hamiltonian paths. It is defined as follows:

\begin{definition}
An undirected graph $G = (V,E)$ has toughness $t(G)$ if $t(G)$ is the largest number $t$ such that for every $S\subseteq V$:
if $c(G\setminus S) > 1$, then $|S| \geq t \cdot c(G \setminus S)$.
\end{definition}

\noindent Intuitively, to have toughness $t$ means that you need to remove $t$ nodes for every component you hope to create. 
%A large toughness value describes a graph that is hard to disconnect, and therefore, in the vernacular of Chv\'{a}tal, is {\em tough} graph
%
Win~\cite{win:1989} formalized this by establishing a  link between toughness and $k$-trees: 

\begin{theorem}[\cite{win:1989}]
For any $k\geq 3$, if $t(G) \geq \frac{1}{k-2}$, then $G$ has a $k$-tree.
\label{thm:win}
\end{theorem}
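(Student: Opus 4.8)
The plan is to prove the contrapositive. Assuming $G$ is connected but has no $k$-tree, I would produce a vertex set $S$ with $c(G\setminus S)>1$ and $|S|<\tfrac{1}{k-2}\,c(G\setminus S)$; since $t(G)$ is the infimum of $|S'|/c(G\setminus S')$ over separating sets $S'$, this forces $t(G)<\tfrac{1}{k-2}$, contradicting the hypothesis. The first step is to fix an \emph{extremal} spanning tree $T$ of $G$: among all spanning trees take one of minimum maximum degree $m$, and among those take one whose ``degree sequence read from the top'' --- the vector $(n_m,n_{m-1},n_{m-2},\dots)$, where $n_j$ counts vertices of $T$-degree exactly $j$ --- is lexicographically smallest. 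Because $G$ has no $k$-tree, $m\ge k+1\ge 4$.

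The engine of the argument is an exchange lemma. Given an edge $\{x,y\}\in E\setminus E(T)$, let $P$ be the $x$--$y$ path in $T$, let $v$ be a vertex of largest $T$-degree on $P$, and let $a$ be the neighbor of $v$ along $P$ toward $x$. Then $T'=T+\{x,y\}-\{v,a\}$ is again a spanning tree, and only four degrees change: $\deg(v)$ and $\deg(a)$ drop by one, $\deg(x)$ and $\deg(y)$ rise by one. If $\deg_T(v)>\max(\deg_T(x),\deg_T(y))+1$, then in $T'$ all four affected vertices have degree at most $\deg_T(v)-1$, no coordinate of the ``top'' vector above index $\deg_T(v)$ changes, and the coordinate at $\deg_T(v)$ strictly decreases --- contradicting the minimality of $T$ (and if instead $\Delta(T')<m$, that alone contradicts the minimality of $m$). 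So at the extremal $T$, every non-tree edge $\{x,y\}$ of $G$ satisfies $\deg_T(v)\le\max(\deg_T(x),\deg_T(y))+1$, where $v$ is the highest-degree vertex on the $x$--$y$ path in $T$.

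With this in hand I would let $S$ be the set of vertices of $T$-degree at least $m-1$ --- nonempty, since a degree-$m$ vertex exists. Suppose for the moment that no non-tree edge of $G$ joins two components of the forest $T\setminus S$. Then $c(G\setminus S)=c(T\setminus S)$, and a direct edge count in $T\setminus S$ --- using that $S$ induces a subforest of $T$ and that every vertex of $S$ has degree at least $m-1\ge k$ --- yields $c(T\setminus S)\ge(m-3)|S|+2\ge(k-2)|S|+2$. In particular $c(G\setminus S)\ge k\ge 3>1$, and $(k-2)|S|\le c(G\setminus S)-2<c(G\setminus S)$, so $|S|<\tfrac{1}{k-2}c(G\setminus S)$, the desired contradiction.

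The hard part --- indeed the technical heart of the theorem --- is discharging the assumption that no non-tree edge crosses between components of $T\setminus S$. The exchange lemma already narrows the possibilities sharply: any such crossing edge $\{x,y\}$ has $\deg_T(x),\deg_T(y)\le m-2$, its $T$-path meets $S$, so the highest-degree vertex on that path has degree at least $m-1$; the lemma then pins that degree to exactly $m-1$ and forces $\max(\deg_T(x),\deg_T(y))=m-2$. So the only survivors have an endpoint of degree exactly $m-2$ and a $T$-path passing only through degree-$(m-1)$ vertices, and against these a single exchange is worthless --- it merely promotes $x$ (and perhaps $y$) to degree $m-1$ and may even increase $n_{m-1}$. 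Ruling these out is where Win's argument~\cite{win:1989} concentrates its effort: one must either chain exchanges into an improving sequence forced to terminate in a genuine drop of the ``top'' vector, or analyze directly the subgraph induced by the near-maximum-degree vertices, exploiting the slack of $2$ in the toughness bound (the same slack that made $c(G\setminus S)\ge(k-2)|S|+2$, rather than merely $(k-2)|S|$, suffice above). I would expect this last step to demand the bulk of the work; the remainder is the bookkeeping sketched here.
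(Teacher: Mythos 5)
The paper never proves this statement at all: it is imported verbatim as Win's theorem from~\cite{win:1989} and used as a black box (the only thing proved locally is the easy corollary, Theorem~\ref{thm:toughness}, via the contrapositive and the definition of toughness). So the relevant question is whether your sketch stands on its own as a proof of Win's theorem, and it does not. You choose a spanning tree $T$ minimizing the maximum degree $m$ and then the lexicographic ``top'' degree vector, set $S=\{v:\deg_T(v)\ge m-1\}$, and correctly count $c(T\setminus S)\ge(m-3)|S|+2\ge(k-2)|S|+2$, which would contradict $t(G)\ge\frac{1}{k-2}$ \emph{provided} $c(G\setminus S)=c(T\setminus S)$. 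But that proviso --- no non-tree edge of $G$ joins two components of $T\setminus S$ --- is exactly the technical heart of the theorem, and your proposal explicitly does not establish it: you note that single-exchange local optimality only rules out crossing edges whose endpoints have $T$-degree at most $m-3$, leaving alive edges with an endpoint of degree $m-2$ whose tree path runs through degree-$(m-1)$ vertices, and you then defer to ``where Win's argument concentrates its effort.'' Deferring the decisive step to the cited source means the proposal is an outline of the known proof strategy, not a proof.

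Concretely, the missing content is the mechanism that handles those surviving crossing edges: either a propagation/chaining argument (a sequence of exchanges whose net effect strictly improves the extremal tree, with a termination argument showing the chain cannot cycle), or, as in Win's paper and in the F\"urer--Raghavachari witness-set analysis~\cite{fr94}, a more carefully constructed witness set $S$ (not simply all vertices of degree $\ge m-1$, but those reachable by blocked improvement steps) for which one \emph{proves} that every $G$-edge between distinct components of $T\setminus S$ is impossible at a local optimum. Without one of these, the step ``$c(G\setminus S)=c(T\setminus S)$'' can genuinely fail for your $S$, and the contradiction with the toughness hypothesis does not go through. Everything else in your sketch (the exchange lemma, the component count in the tree, the use of the slack $+2$) is sound bookkeeping, but the theorem is not proved until that gap is filled.
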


\indent Win's theorem captures the intuition that a small toughness indicates a small number of strategic node removals can generate a large
number of components. This in turn implies the existence of a spanning tree containing some high degree nodes (i.e., the nodes whose removal
creates many components).
We formalize this intuition 
with the following straightforward corollary of Win's theorem:

\begin{theorem}
Fix an undirected graph $G=(V,E)$ and degree $k \geq 3$. If $d(G) > k$, then there exists a non-empty subset of nodes $S \subset V$
such that $c(G \setminus S) > (k-2)\cdot |S|$.
\label{thm:toughness}
\end{theorem}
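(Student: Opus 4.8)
The plan is to derive Theorem~\ref{thm:toughness} as a contrapositive consequence of Win's theorem (Theorem~\ref{thm:win}). Suppose, for contradiction, that $d(G) > k$ but \emph{no} non-empty $S \subset V$ satisfies $c(G \setminus S) > (k-2)|S|$. I would first argue that this failure assumption forces a lower bound on the toughness $t(G)$. Recall $t(G)$ is the largest $t$ such that $|S| \ge t \cdot c(G \setminus S)$ holds for every $S$ with $c(G \setminus S) > 1$. By the failure assumption, for every such $S$ we have $c(G\setminus S) \le (k-2)|S|$, i.e. $|S| \ge \frac{1}{k-2} c(G\setminus S)$, which is exactly the statement that $t(G) \ge \frac{1}{k-2}$.

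With $t(G) \ge \frac{1}{k-2}$ in hand, Win's theorem (Theorem~\ref{thm:win}), applicable since $k \ge 3$, immediately yields that $G$ has a $k$-tree, hence $d(G) \le k$, contradicting $d(G) > k$. This completes the argument. So the core of the proof is just the observation that ``no $S$ creates more than $(k-2)|S|$ components'' is a verbatim restatement of ``$t(G) \ge \frac{1}{k-2}$.''

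One small subtlety I would want to handle carefully: the definition of toughness quantifies only over sets $S$ with $c(G \setminus S) > 1$, whereas the target statement phrases things in terms of an arbitrary non-empty $S$. When $c(G\setminus S) \le 1$ the inequality $c(G\setminus S) > (k-2)|S|$ cannot hold anyway (since $(k-2)|S| \ge k-2 \ge 1$ for non-empty $S$ and $k \ge 3$), so restricting attention to $S$ with $c(G\setminus S) > 1$ loses nothing; the two phrasings are equivalent for the purpose at hand. I would state this reconciliation in a sentence rather than belabor it.

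I do not expect any real obstacle here — this is a routine ``unwind the definitions and invoke the cited theorem'' argument, with the only genuine content being Win's theorem itself, which is imported as a black box. The main thing to get right is the direction of the contrapositive and the precise handling of the $c(G\setminus S) > 1$ versus non-empty-$S$ distinction noted above.
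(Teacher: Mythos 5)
Your proof is correct and follows essentially the same route as the paper's: both unwind the definition of toughness and invoke Win's theorem in contrapositive form, the only difference being that you phrase it as a proof by contradiction while the paper argues directly via a set $S$ attaining $|S| = t(G)\cdot c(G\setminus S)$. Your aside on reconciling ``$c(G\setminus S)>1$'' with ``non-empty $S$'' (using $k\ge 3$) is a fine, slightly more careful touch than the paper's treatment.
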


\begin{proof}
Since $d(G) > k$, the contrapositive of Thm.~\ref{thm:win} implies that 
$t(G) < 1/(k-2)$. 
By the definition of toughness,
there exists an $S \subset V$ such that $|S| = t(G) \cdot c(G \setminus S)$. 
For this set, $c(G\setminus S) = |S|/ t(G) > (k-2) |S|$.
\end{proof}
\fi
\iffalse
%% Old proof
\begin{proof}
Fix some graph $G=(V,E)$ and degree  $k$ such that $d(G) > k$.
It follows that $t(G) < 1/(k-2)$.
To see why this is true assume for contradiction that $t(G) \geq 1/(k-2)$.
By Theorem~\ref{thm:win}, 
it would then follow that $G$ has a $k$ tree. 
Therefore, $d(G) \leq k$ which contradicts the assumption that $d(G) > k$.

Next, we note that by the definition of toughness,
there exists some $S \subset V$ such that $|S| = t(G) \cdot c(G \setminus S)$. 
Fix this set $S$. It follows that $|S|/t(G) = c(G\setminus S)$ 
By our above observation that $t(G) < 1/(k-2)$,
it follows:

\[ (k-2)\cdot |S| < |S|/t(G) = c(G\setminus S) \Rightarrow c(G\setminus S) > (k-2)\cdot |S|. \]

\noindent Therefore, this fixed set $S$ satisfies the claims of theorem statement.
%
\iffalse
\begin{eqnarray*}
 t(G) &<& 1/(k-2) \Rightarrow \\
 %
 \exists S\subset V: |S|/c(G \setminus S) &<& 1/(k-2) \Rightarrow \\
 %
 \exists S: c(G\setminus S) &>& |S|(k-2).
\end{eqnarray*}

%\[  T(G) < 1/(d-2) \Rightarrow \exists X\subset V, |X|/c(X) < 1/(d-2) \Rightarrow \exists X, c(X) > |X|(d-2).\]

The $X$ fixed in the above statement satisfies the claim.
\fi
%
\end{proof}
\fi %% Commented old proof

%%%%%%%%%%%%%%%%%%%%%%%%%%%%%%%%%%%%%%%%%%%%%%%%%%%%%%%%%%%%%%%%%%%%%%%%%%%%%%%%
% END model.tex
%%%%%%%%%%%%%%%%%%%%%%%%%%%%%%%%%%%%%%%%%%%%%%%%%%%%%%%%%%%%%%%%%%%%%%%%%%%%%%%%

%%%%%%%%%%%%%%%%%%%%%%%%%%%%%%%%%%%%%%%%%%%%%%%%%%%%%%%%%%%%%%%%%%%%%%%%%%%%%%%%
% BEGIN gk.tex
%%%%%%%%%%%%%%%%%%%%%%%%%%%%%%%%%%%%%%%%%%%%%%%%%%%%%%%%%%%%%%%%%%%%%%%%%%%%%%%%

\section{Pairwise Capacity}
\label{sec:gk}

In their seminal paper~\cite{gupta:2000},
Gupta and Kumar approached the question of network capacity by considering 
the maximum throughput achievable for a collection of disjoint pairwise flows, each consisting of a single 
source and destination. They studied achievable capacity in both {\em arbitrary} networks as well
as {\em random} networks. In this section, we apply this approach to the mobile telephone model.

To do so, we formalize the {\em pairwise capacity} problem as the following constraint on the allowable flow sets (see Section~\ref{sec:problem}): for every pair $(s_i,R_i)\in F$, it must be the case that $R_i=\{x\}$ (i.e., $|R| =1$),
and neither $s$ nor $x$ shows up in any other pair in $F$.

% \bigskip
%{\bf Note:} The section title of this next section should be changed to ``Arbitrary Networks."

%%%%%%%%%%%%%%%%%%%%%%%%%%%%%%%%%%%%%%%%%%%%%%%%%%%%%%%%%%%%%%%%%%%%%%%%%%%%%%%%
% BEGIN gk-upper.tex
%%%%%%%%%%%%%%%%%%%%%%%%%%%%%%%%%%%%%%%%%%%%%%%%%%%%%%%%%%%%%%%%%%%%%%%%%%%%%%%%

\subsection{Arbitrary Networks}
\label{sec:mcf}
%\subsection{Optimal Upper Bound}

%We now seek to describe an upper bound that under the assumption that $r$ is sufficiently large
%to ensure connectivity, that the achievable capacity matches within constant factors
%our $\Omega(1/r)$ lower bound established above.
%Notice, we cannot directly implement the strategy used to establish the tightness of the lower
%bound as it leveraged knowledge of both $r$ and the location of the nodes---neither of which
%we assume are available in practice. \mdnote{Just reading this now (sorry), but this seems out of place.  Where's this lower bound?  I don't see it above.  And for \emph{arbitrary} networks, there shouldn't be any geometry anyway, should there?  The lower bound is the concurrent flow, right?}

%Clearly in an arbitrary network the adversary can make the throughput extremely small.  For example, if the network is a path $(x_1, x_2, \dots, x_n)$ with $n$ even and the flow set is $F = \{(x_i, x_{n-i+1}) : i \in [n/2[\}$, then every packet for every flow must use the edge $(x_{n/2}, x_{n/2 + 1})$ and thus the maximum throughput is only $O(1/n)$ 
% \mdnote{check this}. MMH: Checked.

We begin by designing algorithms that (approximate) the maximum achievable throughput in an arbitrary network.  For now we will not focus on the convergence time, since our definition of capacity applies in the limit, so we describe the following as a centralized algorithm (the time required for each node to gather the full graph topology and run
this algorithm locally to generate an optimal routing schedule is smoothed out over time).  But as usual when considering centralized algorithms, we will care about the running time. 
%\mdnote{Can we provide any more justification for this? I'm a little concerned that we haven't mentioned computation time as interesting until here.}% as capacity is measured in the limit,
%and therefore the time required for each node to gather the full graph topology and run
%this algorithm locally to generate an optimal routing schedule is smoothed out over time.

Formally, we define the Pairwise Capacity problem to be the optimization problem where we are given a graph $G= (V, E)$ and a pairwise flow set $F$, and are asked to output a description of an (infinite) schedule which maximizes the throughput.  Our algorithm will in particular output a finite schedule which is infinitely repeated.  Our approach is to establish a strong connection between multi-commodity flow and optimal schedules, and then apply existing flow solutions as a step toward generating a near optimal solution for the current network. In other words, we give an approximation algorithm for Pairwise Capacity via a reduction to a multi-commodity flow problem. 
%\mdnote{Since we're concerned with limit here, do we actually care about efficiency?}\mhnote{I'd say efficiency is a tertiary consideration, but we could also drop its mention.}
%% Define problem

%\mdnote{Changed the theorem statement slightly -- check if OK}
\begin{theorem}
There is a (centralized) algorithm for Pairwise Capacity that achieves throughput which is a $(3/2+\epsilon)$-approximation of the optimal throughput, for any $\epsilon > 0$. The convergence time is $n^{O(1)}{\epsilon^{-2}}$ and the running time is $n^{O(1)}{\epsilon^{-1}}$.
\label{thm:pairwise-appx}
\end{theorem}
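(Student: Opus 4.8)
The plan is to reduce Pairwise Capacity to concurrent multi-commodity flow (MCF): first write down a fractional relaxation whose optimum upper-bounds the best achievable throughput, then solve it to a $(1+\epsilon)$ factor with a standard MCF approximation algorithm, and finally round the resulting fractional flow into an infinitely repeated finite schedule while losing only a factor of $3/2$. The $3/2$ is inherent to the rounding step, and that is where the approximation ratio comes from.

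\emph{The relaxation and its upper bound.} For each flow $(s_i,\{x_i\}) \in F$, introduce flow variables on the two orientations of every edge of $G$, with the usual conservation constraints routing $t$ units from $s_i$ to $x_i$, and maximize $t$. The one constraint coupling the commodities encodes the matching restriction of the model: in a single round a node is in at most one connection, which carries at most one packet, so over any $R$ consecutive rounds a node touches at most $R$ matching edges; in the limit this forces $\sum_{e\ni v} f_e \le 1$ for every $v$, where $f_e$ is the total flow on $e$ over both orientations and all commodities. Call the optimum $\mathrm{OPT}_{\mathrm{LP}}$. To see that $\mathrm{OPT}_{\mathrm{LP}}$ upper-bounds the optimal throughput, take any schedule with throughput $t$ and look at rounds $1,\dots,R$ for $R$ large: at least $tR-1$ packets of each type are delivered, and by the path property of schedules each delivered packet $(i,j)$ follows a simple $s_i$--$x_i$ path whose edges are used in distinct rounds $\le R$; assigning $1/R$ unit of commodity-$i$ flow to each such path gives a feasible fractional solution of value $\ge t-1/R$ per commodity, since each (edge incident to $v$, round) pair lies on the delivery-path of at most one packet and there are at most $R$ such pairs. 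Letting $R\to\infty$ gives $t \le \mathrm{OPT}_{\mathrm{LP}}$, and hence $\mathrm{OPT} \le \mathrm{OPT}_{\mathrm{LP}}$.

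\emph{Solving and rounding.} Compute a fractional solution of value $t^* \ge (1-\epsilon)\,\mathrm{OPT}_{\mathrm{LP}}$ using a standard $(1+\epsilon)$-approximation algorithm for concurrent MCF with linear side constraints (the node constraints can be folded in directly or via node-splitting); its values have polynomially bounded bit complexity, so, after scaling by a common denominator $P$, every directed edge $e$ carries an integer $D_e$ with $\sum_{e\ni v} D_e \le P$. Decompose each commodity into a polynomial number of weighted $s_i$--$x_i$ paths, and build the multigraph $H$ on $V$ with $D_e$ parallel copies of each directed edge $e$. Since $\Delta(H)\le P$, Shannon's theorem yields a proper edge coloring of $H$ with at most $\lfloor 3P/2\rfloor$ colors, each color class projecting to a matching in $G$. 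Take the period to be $P' = \lceil 3P/2\rceil$, play color $c$'s matching in round $c$ of each period, and reserve for each decomposition path a disjoint block of edge-copies on each of its edges. Packets then pipeline along their paths: a packet may cross its next edge one period after the previous one, incurring $O(|V|)$ periods of latency and $O(|V|)$ buffering, but no rate loss, so each commodity sustains throughput $t^* P / P' \ge \frac{2}{3}(1-o(1))\,t^* \ge \frac{2}{3}(1-O(\epsilon))\,\mathrm{OPT}_{\mathrm{LP}} \ge \frac{2}{3}(1-O(\epsilon))\,\mathrm{OPT}$. Reparametrizing $\epsilon$ gives the claimed $(3/2+\epsilon)$-approximation; the convergence round is the period length times the pipeline depth, and the running time is dominated by the MCF solve and the edge-coloring of $H$, both polynomial in $n$ and $1/\epsilon$, which yields the stated bounds.

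\emph{The main obstacle.} I expect the rounding step to be the crux, and it rests on two facts. First, Shannon's bound $\chi'(H) \le \frac{3}{2}\Delta(H)$ --- equivalently, the $3/2$ gap between the fractional and integral matching polytopes --- which is exactly where the unavoidable loss appears: on a triangle, rate $1/2$ on each edge is fractionally feasible, but $1/3$ per edge is the best that any integral (matching-based) schedule can achieve. Second, one needs the observation that meeting the per-edge rates suffices: once each edge of a path is scheduled at that path's rate, packets can be streamed end to end at that rate using only bounded buffers and a bounded startup delay, so the scheduling problem genuinely decouples into an edge-rate (edge-coloring) problem plus routine pipelining.
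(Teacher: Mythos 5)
Your proposal follows essentially the same route as the paper: a concurrent multi-commodity flow relaxation whose node constraint encodes ``one packet touches a node per round'' (the paper realizes this via node-splitting with capacities $1+t_v\tau/2$ and a flow value of $2T$, which is the same accounting as your $\sum_{e\ni v} f_e \le 1$), an approximate MCMF solve, and then Shannon's $3\Delta/2$ edge (multi)coloring to turn per-edge rates into a periodic matching schedule, with pipelining absorbing the latency; the paper even records the same triangle-style example showing the $3/2$ loss is inherent to the flow reduction. So the relaxation, the upper-bound direction, and the $3/2+\epsilon$ ratio are all argued correctly and in the paper's spirit.

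The one genuine gap is the step ``its values have polynomially bounded bit complexity, so, after scaling by a common denominator $P$, every directed edge carries an integer $D_e$,'' followed by the claim that the coloring of $H$ and the convergence round are polynomial. Polynomial \emph{bit} complexity only bounds $P$ by $2^{\mathrm{poly}(n)}$, not by a polynomial, and both your costs scale with the \emph{value} of $P$: the multigraph $H$ has $\sum_e D_e$ parallel edges to color, and the schedule period is $\lceil 3P/2\rceil$ rounds with convergence roughly $O(nP)$, so as written you cannot conclude the stated $n^{O(1)}\epsilon^{-2}$ convergence time or polynomial running time. The paper closes exactly this hole with its flow-rounding lemma (Lemma~\ref{lem:flow-rounding}): decompose each commodity into at most $m$ weighted paths and round each path weight down to a multiple of $1/N$ with $N=\Theta(Km\epsilon^{-1})$; since every commodity carries value at least $1/K$, this loses only a $(1-\epsilon/4)$ factor, and then the common denominator is the polynomially bounded $N$, making the period, the multicoloring instance, and the convergence round polynomial. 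Your construction accommodates this fix verbatim (you already invoke a path decomposition), but without it the running-time and convergence-time claims in the theorem are not justified.
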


\iffalse %% This proved to be not quite correct
\begin{theorem}
There is a (centralized) fully polynomial-time approximation scheme (FPTAS) for Pairwise Capacity. Namely, given $\epsilon > 0$ and a capacity instance, we can produce in  $\frac{n^{O(1)}}{\epsilon}$ computational steps a schedule that attains an $1+\epsilon$-approximation of optimal throughput.
\label{thm:fptas}
\end{theorem}
\fi

%\mdnote{Should probably have a theorem (or corollary) which relates this theorem back to the notion of pairwise capacity we defined earlier.  In other words, should formalize (with the right convergence round) the connection to throughput / capacity.}

%% MMH: I'm leaving my problem definition in place for now. I have been relying on being able to refer to a source/destination *pair* by an index.
\paragraph{Multi-Commodity Flow.}
%% MCMF definition
In the \emph{maximum concurrent multi-commodity flow (MCMF)} problem, we are given a triple $(D,M,cap)$, where $D=(V_D,E_D)$ is a digraph, $M$ is collection $M \subseteq V_D \times V_D$ of node-pairs (each representing a \emph{commodity}),
and $cap:E_D \rightarrow \mathbb{R}^+_0$ are flow capacities on the edges.
Let $K = |M|$ be the number of commodities.
The output is a collection $f=(f_1, f_2, \ldots, f_K)$ of flows satisfying conservation and capacity constraints. Namely, for each flow $f_i$ and for each vertex $v \in G$ where $v \not\in \{s_i, t_i\}$, the flow into a node equals the flow going out: $\sum_{e=(u,v)\in E_D} f_i(e) = \sum_{e'=(v,w)\in E_D} f_i(e')$.
Also, the flow through each edge is upper bounded by its capacity: $f(e) = \sum_{i=1}^K f_i(e) \le cap(e)$.
%node $v \in V_D$ is unit-constrained: $\sum_{i=1}^K \sum_{e=(u,v) \in E_D} f(e) \le 1$ and $\sum_{i=1}^K \sum_{e'=(v,w) \in E_D} f(e') \le 1$. 
Let $v(f_i) = \sum_{w, e=(s_i, w) \in E_D} f_i(e)$ be the \emph{value} of flow $i$, or the total flow of commodity $i$ leaving its source.  
%\mdnote{Adding overall value and objective -- fix if wrong}.  MMH: Good.
The value of the total flow $f$ is $v(f) = \min_{i=1}^K v(f_i)$, and our goal is to maximize $v(f)$. 
%\end{definition}+
We refer to $f$ as an \emph{MCMF flow} and the constituent commodity flows as \emph{subflows}.
%Also, we can view an undirected graph $G$ as containing directed edges in both directions for each undirected edge.

%% Solving MCMF
The MCMF problem can be solved in polynomial-time by linear programming. There are also combinatorial approximation schemes known, and our version of the problem can be approximated within a $(1+\epsilon)$-factor in time $\tilde{O}((m+K) n/\epsilon^2)$ \cite{Madry2010}. 

%% Rounding
% \iflong
We first show how to round an MCMF flow to use less precision while limiting the loss of value.
We say that a MCMF flow is \emph{$\phi$-rounded}
if the flow of each commodity on each edge is an integer multiple of $1/\phi$: $\lfloor f_i(e) \cdot \phi \rfloor = f_i(e)\cdot \phi$, for all $i$, and all edges $e$. We show how to produce a rounded flow of nearly the same value.

\begin{lemma}
Let $f$ be a MCMF flow and $\phi$ be a number. There is a rounding of $f$ to a $\phi$-rounded flow $f'$ with value at least $v(f') \ge v(f)(1 - K m/\phi)$, and it can be generated in polynomial time.
\label{lem:flow-rounding}
\end{lemma}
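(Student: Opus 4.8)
The plan is to combine the classical flow-decomposition theorem with a single ``round every path down'' step. First, for each commodity $i$, decompose the subflow $f_i$ into at most $m=|E_D|$ weighted $s_i$--$t_i$ path flows together with some weighted cycle flows, using the standard greedy procedure: repeatedly trace a path from $s_i$ to $t_i$ (or a cycle) consisting of edges carrying positive flow and subtract its bottleneck value, which zeroes out at least one edge per step. Thus the decomposition has at most $m$ pieces, is computable in polynomial time, and we may immediately discard every cycle flow, since cycles contribute nothing to $v(f_i)$ and removing them only lowers the flow on each edge. After this we can write $f_i=\sum_{j=1}^{k_i} a^i_j\,\chi_{P^i_j}$ with $k_i\le m$, each $P^i_j$ an $s_i$--$t_i$ path, each $a^i_j>0$, $\chi_P$ the $0/1$ indicator of the edges of $P$, and $v(f_i)=\sum_{j} a^i_j$.

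Next, define $f'_i:=\sum_{j=1}^{k_i}\big(\lfloor a^i_j\phi\rfloor/\phi\big)\,\chi_{P^i_j}$, i.e. round each path weight down to the nearest integer multiple of $1/\phi$, and set $f'=(f'_1,\dots,f'_K)$. Each rounded path flow is still a legal $s_i$--$t_i$ flow, so $f'_i$ satisfies conservation; since every path weight only decreased we have $f'_i(e)\le f_i(e)$ for all $e$, hence $f'(e)=\sum_i f'_i(e)\le\sum_i f_i(e)=f(e)\le cap(e)$, so all capacity constraints still hold; and $f'_i(e)$, being a sum of integer multiples of $1/\phi$, is itself an integer multiple of $1/\phi$, so $f'$ is $\phi$-rounded. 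The construction is plainly polynomial time. This is the step that needs the most care, and it is exactly why one rounds path flows rather than edge values directly: rounding edge values individually would break flow conservation, whereas rounding path flows keeps the object feasible ``for free'' and only ever decreases edge flows.

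Finally, bound the loss. For a fixed commodity $i$, each of the $k_i\le m$ path weights drops by strictly less than $1/\phi$, so $v(f'_i)=\sum_j\lfloor a^i_j\phi\rfloor/\phi> v(f_i)-m/\phi\ge v(f)-m/\phi$, where the last inequality uses $v(f)=\min_i v(f_i)$. Hence $v(f')=\min_i v(f'_i)> v(f)-m/\phi$, which — since each commodity is only one of the $K$ commodities — in particular yields $v(f')\ge v(f)\,(1-Km/\phi)$ as stated (indeed the per-commodity argument gives the slightly stronger additive bound $v(f)-m/\phi$). There is no genuine obstacle here beyond keeping the feasibility bookkeeping straight; the load-bearing facts are just flow decomposition into $\le m$ pieces and the observation that rounding path flows down never violates conservation or capacity.
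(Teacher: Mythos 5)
Your construction is the same as the paper's: decompose each subflow $f_i$ into at most $m$ weighted $s_i$--$t_i$ paths, round each path weight down to a multiple of $1/\phi$, and observe that feasibility is preserved and the per-commodity loss is at most $m/\phi$. Up to that point the argument is fine (your explicit handling of cycles and of capacity constraints is a welcome bit of care that the paper waves through).

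There is, however, a genuine gap in the last step. From $v(f'_i) \ge v(f_i) - m/\phi$ you conclude $v(f') \ge v(f)(1-Km/\phi)$ ``since each commodity is only one of the $K$ commodities,'' but that is a non sequitur: the additive bound implies the multiplicative one only when $v(f) \ge 1/K$. For an arbitrary MCMF flow this can fail outright --- take a single commodity routed along one path with weight $1/(2\phi)$; then $f'$ is the zero flow, so $v(f')=0$, while $v(f)(1-Km/\phi)>0$ whenever $\phi > Km$. So your claim that the additive bound is ``slightly stronger'' is also only true under the same hypothesis. The paper closes exactly this hole: it notes that $v(f_i) \ge 1/K$ for the flows to which the lemma is applied, because in $\bigD_\tau$ every capacity is at least $1$, so one can always route $1/K$ of each commodity along a single simple path; then $m/\phi \le K\,v(f_i)\,m/\phi$, which turns the additive loss into the stated multiplicative factor $1-Km/\phi$. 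You need to supply this lower bound on the flow value (or restrict the lemma to flows with $v(f_i)\ge 1/K$) --- it is precisely where the factor $K$ in the statement comes from.
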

\begin{proof}
We focus on each subflow $f_i$.
By standard techniques, each subflow $f_i$ can be decomposed into a collection of paths $P_1, \ldots, P_s$ and values $\alpha_1, \ldots, \alpha_s$, with $s \le m = |E|$, such that 
$f_i(e) = \sum_{j, P_j \ni e} \alpha_j$ for each edge $e$. 
Let $\alpha'_j = \lfloor \alpha_j \cdot \phi\rfloor / \phi$, for each $j$, and 
observe that $\alpha'_j \ge \alpha_j - 1/\phi$. We form the $\phi$-rounded flow $f'$ by
$f'_i(e) = \sum_{j, P_j \ni e} \alpha'_j$, for each edge $e$. It is easily verified that conservation and capacity constraints are satisfied.
By the bound on $\alpha'$, it follows that the value of the rounded flow is bounded from below by
$v(f'_i)\ge v(f_i) - s/\phi \ge v(f_i) - m/\phi$. The value of each flow is trivially bounded from below by $v(f_i) \ge 1/K$ (which is achieved by sending $1/K$ of each commodity flow along a single path). Thus, $v(f'_i) \ge (1-Km/\phi)v(f_i)$.
\end{proof}
% \fi 
%\paragraph{Equivalence of MCMF and Pairwise Capacity}

%We shall show that the two problems -- MCMF and Pairwise Capacity -- are equivalent, up to lower order terms. I.e., 
We now turn to the reduction of Pairwise Capacity to MCMF.
Given $G=(V,E)$ and $F$, along with a parameter $\tau$, we form the flow network $\bigD_\tau = (D, M, cap_\tau)$ as follows.
The undirected graph $G=(V,E)$ is turned into a digraph $D=(V_D,E_D)$ with two copies $v^{in}, v^{out}$ of each vertex: $V_D = \{v^{in}, v^{out} : v\in V\}$
and edges $E_D = \{ (u^{out}, v^{in}) : uv \in E\} \cup \{(v^{in},v^{out}) : v \in V\}$. 
%The edges where $E_D = \{ (u,v), (v,u) : uv \in E\}$ by including both directions of each undirected edge. 
The source/destination pairs carry over: $M = \{(s^{in}, t^{out}) : (s,\{t\}) \in F\}$.
Finally, capacities of edges in $E_D$ are
$cap_\tau(u_{out}, v_{in}) = \infty$ and $cap_\tau(v_{in}, v_{out}) = 1 + t_v\cdot \tau/2$, 
where $t_v$ is the number of source/destination pairs in $F$ in which $v$ occurs.
Observe that there is a one-to-one correspondence between simple paths in $G$ and in $D$ (modulo the in/out version of the start/end node).  

%Given a graph $G=(V,E)$ and a set $M$ of source/destination pairs, we form the network digraph $D=(V,E_D)$ on the same vertex set
%A pairwise capacity instance has a natural corresponding flow instance using the same graph and a commodity for each source/destination pair. The only difference is that each undirected edge in $G$ is turned into two directed edges in both directions in $D$. 

\begin{lemma}
%If $\tau$ is the optimal Pairwise Capacity on $(G,F)$, then 
The throughput of any schedule on $(G,F)$ is at most 
$\tau^*/2$, 
where $\tau^*$ is the largest value such that $\bigD_{\tau^*}$ has MCMF flow of value $\tau^*$. 
%half the value of optimal MCMF flow in $\bigD_\tau$.
\label{lem:flow-lb}
\end{lemma}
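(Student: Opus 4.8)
The plan is to prove the equivalent bound $\tau^* \ge 2t$ for an arbitrary schedule $S$ of $(G,F)$ that achieves throughput $t$ (the case $t=0$ is vacuous, so assume $t>0$). From a converged prefix of $S$ I will manufacture a \emph{feasible} MCMF flow in $\bigD_\tau$ whose value equals $\tau$, for a value $\tau \ge 2t$ read off from that prefix; since $\tau^*$ is by definition the largest value for which $\bigD_{\tau^*}$ has an MCMF flow of value $\tau^*$, this yields $\tau^* \ge \tau \ge 2t$, i.e. $t \le \tau^*/2$.

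First I would fix a window length $T$ at least as large as the convergence round of $S$ and set $N := \min_i del_i(T)$, so that $N \ge tT$ by the definition of throughput. For each flow $(s_i,\{x_i\}) \in F$ (we may assume $s_i \ne x_i$) and each $j \le N$, packet $(i,j)$ is delivered by round $T$, so $x_i$ receives it along some edge of a matching $M_r$ with $r \le T$; appending that receiving edge to the $s_i$-rooted path guaranteed by the labelling property of schedules yields an $s_i$--$x_i$ walk inside $\bigcup_{r \le T} M_r$ every edge of which carries the label $(i,j)$, and I extract from it a simple directed $s_i$--$x_i$ path $Q_{i,j}$. In $\bigD_\tau$ I then send $2/T$ units of commodity $i$ along the $D$-path $\hat Q_{i,j}$ corresponding to $Q_{i,j}$, and sum over $j \le N$ to obtain the subflow $f_i$. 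The resulting $f$ obeys flow conservation automatically (it is a sum of source-to-sink path flows), and since every $\hat Q_{i,j}$ leaves $s_i^{in}$ through the unique out-edge $(s_i^{in},s_i^{out})$, we get $v(f_i) = 2N/T$ for every $i$, hence $v(f) = 2N/T =: \tau \ge 2t$.

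The core of the proof is checking that $f$ respects the capacities of $\bigD_\tau$. The edges $(u^{out},v^{in})$ have capacity $\infty$, so the only constraints are $f(v^{in},v^{out}) \le 1 + t_v\tau/2$ for $v \in V$. Let $a_v$ and $b_v$ count the chosen packets $(i,j)$, $j \le N$, for which $v$ is an internal vertex, respectively an endpoint, of $Q_{i,j}$; then $f(v^{in},v^{out}) = \tfrac{2}{T}(a_v+b_v)$, since $\hat Q_{i,j}$ traverses $(v^{in},v^{out})$ exactly when $v \in Q_{i,j}$. Here I invoke the defining restriction of the mobile telephone model: every $M_r$ is a matching, so $v$ lies on at most one edge of $M_r$, and the edges of the various $Q_{i,j}$ incident to $v$ occupy pairwise-distinct $(\text{round},\text{edge})$ slots (two edges of one $Q_{i,j}$ lie in distinct matchings; edges of two different packets cannot coincide in one matching, since a matching edge has a single packet label). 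A packet whose path has $v$ internal uses two such slots and one whose path has $v$ as an endpoint uses one, so $2a_v + b_v \le T$. Moreover $b_v = t_v N$, because the chosen packets having $v$ as a path endpoint are exactly those of the $t_v$ flows in which $v$ is the source or the destination, $N$ packets each. Hence $a_v \le (T - t_v N)/2$ and $f(v^{in},v^{out}) = \tfrac{2}{T}(a_v+b_v) \le \tfrac{2}{T}\cdot\tfrac{T + t_v N}{2} = 1 + t_v N/T = 1 + t_v\tau/2$, which matches $cap_\tau(v^{in},v^{out})$ exactly. Thus $f$ is feasible, $\bigD_\tau$ has an MCMF flow of value $\tau$, and $\tau^* \ge \tau \ge 2t$.

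I expect the main obstacle to be exactly this per-node counting argument: convincing oneself that the path edges at $v$ used by all $|F|\cdot N$ chosen packets really do land in $T$ distinct per-round slots (this is the step where the matching structure of the model is indispensable), and pinning down the scale factor $2/T$. A scale $c/T$ with $c>2$ would already overflow the capacity $1$ at any vertex belonging to no flow pair, so $c=2$ is forced, and it is precisely this factor $2$ that converts the constructed flow of value $\tau$ into the bound $t \le \tau^*/2$ rather than the weaker $t \le \tau^*$.
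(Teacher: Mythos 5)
Your proposal is correct and follows essentially the same route as the paper's proof: you build a feasible MCMF flow in $\bigD_\tau$ of value $\tau\ge 2t$ by scaling each delivered packet's (loop-erased) trajectory by $2/T$, and your slot-counting inequality $2a_v+b_v\le T$ is exactly the paper's bound $2q(v)+\sum_i(a_i(v)+b_i(v))\le r_0$ used to verify the capacities $1+t_v\tau/2$. The only cosmetic difference is that you express the flow as a sum of per-packet path flows (making conservation automatic and the matching-based counting more explicit), whereas the paper defines the edge flows directly from packet counts and checks conservation by hand.
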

% \iflong
\begin{proof}
Let $\bigA$ be a mobile telephone schedule and let $T$ be its throughput. 
We want to show that $\bigD_{2T}$ has MCMF flow of value $2T$; this is sufficient to imply the lemma.
We assume that packets flow along simple paths, and we achieve that by eliminating loops from paths, if necessary.
%there is an MCMF flow in $\bigD_{\tau^*}$ of value $\tau^*$. 
By the throughput definition, there is a round $r_0 = r^{\bigA,T}$ such that for every round $r \ge r_0$ and every source/destination pair $i$, the number of $i$-packets delivered by round $r$ is at least $T \cdot r$.
Let $X_i$ be the first $T r_0$ $i$-packets delivered (necessarily by round $r_0$), for each type $i$, and let $X = \cup_i X_i$.
For each edge $e=uv$ and pair $i$, let $q_i(u,v)$ be the number of packets in $X_i$ that passed through $e$, from $u$ to $v$. Also, for a vertex $v$, let $a_i(v)$ denote the number of $i$-packets originating at $v$, i.e., $a_i(v)=Tr_0$ if $v = s_i$ and $a_i(v)=0$ otherwise. Similarly, let $b_i(v)$ be the number of $i$-packets with $v$ as its destination.
Finally, let $q_i(v)$ be the number of packets in $X_i$ that flow through $v$, but did not originate or terminate at $v$, and observe that $q_i(v) = \sum_{w, vw\in E} q_i(v,w) - a_i(v) = \sum_{u, uv\in E} q_i(u,v) - b_i(v)$.

Define the collection $f=(f_1, f_2, \ldots, f_K)$ of functions where for each $i$, 
$f_i(u_{out}, v_{in}) = 2 q_i(u,v)/r_0$, for each edge $e=uv \in E$, and
$f_i(v_{in}, v_{out}) = 2(q_i(v) + a_i(v) + b_i(v))/r_0$, for each vertex in $V$.
Observe that the flow $f_i(u_{out}, v_{in})$ corresponds to twice the number of $i$-packets going from $u$ to $v$ (scaled by factor $1/r_0$). The flow $f_i(v_{in},v_{out})$ from $v_{in}$ to $v_{out}$ corresponds to the number of packets in $X_i$ coming into $v$ plus the number of those going out of $v$ (scaled by factor $1/r_0$), counting those that go through $v$ twice, but those originating or terminating at $v$ only \emph{once}.
% This is because a node in $V$ must handle a through-going packet twice (in two rounds), but an originating/terminating packet only once. 
We claim that $f$ is a valid MCMF flow in $\bigD_{2T}$ of value $2T$, which implies the lemma. Let $f_i^a(v) = 2a_i(v)/r_0$ ($f_i^b(v) = 2b_i(v)/r_0$) be the amount of type-$i$ flow originating (terminating) at $v$, respectively.

First, to verify flow conservation at nodes, consider a type $i$, and observe first that all packets in $X$ start at the source $s_i$ and end at the destination $t_i$. 
\begin{align*} 
f_i(v_{in}, v_{out}) &= \frac{2(q_i(v) + a_i(v) + b_i(v))}{r_0} = \frac{2a_i(v)}{r_0} + \sum_{u, uv\in E} \frac{2 q_i(u,v)}{r_0} \\
 &= f_i^a(v) + \sum_{u, (u_{out},v_{in})\in E_D} f_i(u_{out}, v_{in})   \ . \end{align*}
That is, the flow from each node $v_{in}$ equals the flow coming in plus the flow generated at the node (noting also that no flow terminates at the node). Similarly,
the flow into $v_{out}$ equals the flow terminating at the node plus the node going out:
\begin{align*} 
f_i(v_{in}, v_{out}) &= \frac{2(q_i(v) + a_i(v) + b_i(v))}{r_0} = \frac{2b_i(v)}{r_0} + \sum_{w, vw\in E} \frac{2 q_i(v,w)}{r_0} \\
 &= f_i^b(v) + \sum_{w, (v_{out}, w_{in})\in E_D} f_i(v_{out}, w_{in})   \ . \end{align*}

Second, to verify capacity constraints, observe that if $q(v) = \sum_i q_i(v)$ is the number of packets that flow through node $v$, then 
\[ 2q(v) + \sum_i (a_i(v) + b_i(v)) \le r_0\ , \]
since $v$ needs to handle flowing-through packets in two separate rounds and it can only process a single packet in a round. Thus, the flow through $(v_{in}, v_{out})$ is bounded by
\[ f(v_{in}, v_{out}) 
  = \frac{2}{r_0} (q(v) + \sum_i (a_i(v)+b_i(v))) 
  = \frac{1}{r_0} (2q(v) + \sum_i (a_i(v)+b_i(v))) + t_v T 
  \le 1 + t_v T\ ,\]
satisfying the capacity constraints.

Finally, it follows directly from the definition of $f_i^a$ (or $f_i^b$) that the flow value is $2T$.
\end{proof}
% \fi

% \iflong
To prove Theorem~\ref{thm:pairwise-appx} we need to introduce edge multicoloring. 

\begin{definition}
Given a graph $G=(V,E)$ and a \emph{color requirement} 
$r(e) \in \mathbb{N}$ for each edge $e \in E$. 
An \emph{edge multicoloring} of $(G,r)$
is a function $\pi: E \rightarrow 2^\mathbb{N}$ that satisfies the following: 
a) if $e_1,e_2\in E$ are adjacent then $\pi(e_1) \cap \pi(e_2) = \emptyset$, and
b) $|\pi(e)| \ge r(e)$, for each edge $e \in E$.
The number of colors used is $|\cup_e \pi(e)|$, the size of the support for $\pi$.
\end{definition}

We shall use the follow result on edge multicolorings.

\begin{theorem}[Shannon \cite{Shannon1949}]
Given a graph $G=(V,E)$ and a \emph{color requirement} $r(e) \in \mathbb{N}$ for each edge $e \in E$, 
there is a polynomial-time algorithm that edge multicolors $(G,r)$ using at most $3\Delta_r(v)/2$ colors,
where $\Delta_r(v) = \sum_{e \ni v} r(e)$.
\label{thm:shannon-edgemulticol}
\end{theorem}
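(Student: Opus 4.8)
The statement is the classical theorem of Shannon~\cite{Shannon1949}, and I would prove it constructively. The plan is to pass from $(G,r)$ to an ordinary proper edge coloring of a multigraph and then color that multigraph's edges one at a time by a Kempe-chain recoloring argument. Set $\Delta := \max_{v\in V}\Delta_r(v)$ and let $H$ be the multigraph obtained from $G$ by replacing each edge $e$ with $r(e)$ parallel copies; then the degree of every vertex $v$ in $H$ is exactly $\Delta_r(v)$, so $\Delta(H)=\Delta$, and $H$ has $\sum_{e} r(e)$ edges (polynomially bounded in the instances we use, where $r$ comes from a $\phi$-rounded flow with $\phi$ polynomial in $n$ and $1/\epsilon$). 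A proper edge coloring of $H$ with $c$ colors assigns the $r(e)$ copies of each $e$ a set $\pi(e)$ of $r(e)$ distinct colors such that adjacent edges of $G$ receive disjoint sets, i.e.\ it is exactly an edge multicoloring of $(G,r)$ with $c$ colors. So it suffices to properly edge-color $H$ using $k:=\lfloor 3\Delta/2\rfloor$ colors in polynomial time.

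To do that I would order the edges of $H$ arbitrarily and color them one at a time, maintaining a proper partial coloring $\varphi$ with palette $[k]$; write $M(v)$ for the set of colors not used on any colored edge at $v$. When we come to color $e=xy$, each of $x,y$ meets at most $\Delta-1$ already-colored edges, so $|M(x)|,|M(y)|\ge k-(\Delta-1)=\lfloor\Delta/2\rfloor+1$, and every other vertex $w$ has $|M(w)|\ge k-\Delta\ge\lfloor\Delta/2\rfloor$. If $M(x)\cap M(y)\ne\emptyset$ we simply color $e$ with a common missing color. Otherwise $M(x)$ and $M(y)$ are disjoint; pick $\alpha\in M(x)$ and $\beta\in M(y)$ and take the maximal path $P$ alternating in colors $\alpha$ and $\beta$ that starts at $y$. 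Since $\beta$ is missing and $\alpha$ present at $y$, $P$ starts with the $\alpha$-edge at $y$; if $P$ does not reach $x$, swapping $\alpha\leftrightarrow\beta$ along $P$ makes $\alpha$ missing at $y$ while leaving it missing at $x$, so we color $e$ with $\alpha$.

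The remaining case --- $P$ runs all the way from $y$ to $x$ --- is the crux, and is where the slack $k=\lfloor 3\Delta/2\rfloor$ (rather than the $\Delta+1$ that works for simple graphs but is false for multigraphs) is needed. Here I would build a Vizing-style fan at $x$ rooted at $e$: a maximal sequence of distinct colored edges $e_1=xy_1,e_2=xy_2,\dots$ at $x$ together with colors $\alpha_0,\alpha_1,\dots$, where $\alpha_i$ is a color missing at $y_i$ and (by disjointness) present at $x$, and $e_{i+1}$ is the edge of $x$ colored $\alpha_i$; the fan is finite since $x$ has at most $\Delta$ edges. Maximality forces one of two configurations: either some fan tip $y_i$ shares a missing color with $x$, in which case a rotation of the fan (recoloring $e_i,e_{i-1},\dots,e_1$, each shifted up by one color) frees a color at $x$ and $e$ gets colored; or the fan closes (the next required color repeats one already used in the fan), in which case one chooses $\alpha\in M(x)$ and $\beta\in M(y_i)$ for an appropriate tip, swaps along the $\alpha,\beta$-Kempe chain, and then applies a fan rotation. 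The disjointness of $M(x)$ from the missing-color set of each fan tip, together with all these sets having size $\ge\lfloor\Delta/2\rfloor+1$, is exactly what guarantees a usable $(\alpha,\beta)$-pair and that the swap removes the obstruction without creating a new conflict. (For even $\Delta$ there is a slicker route avoiding the fan: complete $H$ to a $\Delta$-regular multigraph, decompose it into $\Delta/2$ two-factors via Petersen's theorem, and color each two-factor --- a disjoint union of cycles --- with $3$ colors, for a total of $3\Delta/2$.)

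Finally, coloring one edge costs building a fan of length $\le\Delta+1$, at most one Kempe-chain swap of length $\le|E(H)|$, and at most one fan rotation --- all polynomial --- and there are $\sum_e r(e)$ edges, so the procedure runs in polynomial time; reading the colors back onto the parallel copies yields an edge multicoloring of $(G,r)$ with $\lfloor 3\Delta/2\rfloor\le 3\Delta/2$ colors. The main obstacle is making the fan/closed-fan analysis of the third paragraph fully rigorous: one must verify that in the closed-fan case the available slack of roughly $\Delta/2$ missing colors always supplies an $(\alpha,\beta)$-pair whose Kempe swap can be followed by a fan rotation that liberates a color at $x$, and this is the single place where Shannon's bound --- as opposed to any weaker bound such as $2\Delta-1$ --- is actually used.
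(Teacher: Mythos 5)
The paper does not actually prove this statement---it is quoted as a classical result of Shannon, with the algorithmic version taken from the literature---so your attempt is measured against the standard proof rather than anything in the paper. Your reduction is fine: replacing each edge $e$ by $r(e)$ parallel copies turns an edge multicoloring of $(G,r)$ into a proper edge coloring of a multigraph $H$ with $\Delta(H)=\max_v \Delta_r(v)$, and the Petersen two-factorization route you mention in parentheses is indeed a correct classical proof for even $\Delta$ (it does not, as you note, handle odd $\Delta$, where completing to regularity overshoots $\lfloor 3\Delta/2\rfloor$). The polynomial-time claim is really pseudo-polynomial in $\sum_e r(e)$, which is consistent with how the paper uses the theorem (there $r(e)=N f'_i(e)$ with $N$ polynomial), so that is only a minor caveat.

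The genuine gap is in the third paragraph, and it sits exactly where Shannon's constant has to be earned. In the hard case---the $(\alpha,\beta)$-Kempe chain from $y$ reaches $x$---you fall back on a generic Vizing-style fan at $x$ with rotations and one more Kempe swap, and assert that the ``slack'' of roughly $\Delta/2$ missing colors per vertex makes it work. But the long-fan rotation argument is precisely the proof of Vizing's multigraph bound $\chi'\le \Delta+\mu$, which for multigraphs can be as large as $2\Delta$ and does not imply $\lfloor 3\Delta/2\rfloor$; nothing in your sketch explains where the factor $3/2$ enters. The classical proof uses a fan of length two: let $z$ be the neighbor of $x$ on the $(\alpha,\beta)$-chain (so $xz$ is colored $\beta$), and show that if the uncolored edge cannot be accommodated then the missing sets $M(x)$, $M(y)$, $M(z)$, each of size at least $k-\Delta+1$ (resp.\ $k-\Delta$), must be essentially pairwise disjoint inside the palette of $k$ colors; counting then forces $k\le\lceil 3\Delta/2\rceil-1$, a contradiction when $k=\lfloor 3\Delta/2\rfloor$. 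That disjointness-plus-counting step is the theorem, it is absent from your writeup, and you yourself flag this case as the unresolved obstacle---so as it stands the proof is incomplete at its decisive point, even though the surrounding reduction and the easy cases are correct.
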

% \fi

\iffalse
This connection will enable us to prove Theorem~\ref{thm:pairwise-appx}.  The details are quite complex and can be found in the full version, but at a high level, we first find the (approximately) largest $\tau$ for which there is an MCMF flow $f$ of value at least $\tau$ in $\bigD_{\tau}$.  By Lemma~\ref{lem:flow-lb}, $\tau/2$ is an upper bound on the maximum achievable throughput.  We then scale the flow by a large value $N = poly(n,\epsilon^{-1})$ and round it to integral values. 
%Scheduling this flow now corresponds to an edge multicoloring instance. 
We can then use this flow to define a schedule: if the subflow for commodity $i$ on edge $e$ is $f_i(e)$, then we want to create a schedule of some polynomial length $2N$ so that $f_i(e) \cdot N$ rounds of the schedule send type-$i$ packets across edge $e$.  This unfortunately is not possible in general, but by using a classical theorem of Shannon~\cite{Shannon1949} about edge multicolorings, we are able to create a schedule of length $3N$ so that $f_i(e) \cdot N$ rounds of the schedule send type-$i$ packets across edge $e$.  Thus if we repeat this schedule \emph{ad infinitum},  the throughput achieved for every pair $(s_i, \{t_i\}) \in F$ is at least $\tau / 3 - \epsilon$.  Since the optimal solution has throughput at most $\tau / 2$, this is a $3/2+\epsilon$-approximation.  We note that our analysis is essentially tight:  in the  full version, we show an instance in which this reduction to MCMF does indeed lose a $3/2$ factor.
\fi
% \iflong
We can now prove Theorem~\ref{thm:pairwise-appx}.

\begin{proof}[Proof of Theorem~\ref{thm:pairwise-appx}]
Let $(G,F)$ be a given Pairwise Capacity instance and let $\epsilon > 0$. 
%Given a capacity instance and $\epsilon > 0$,
We perform binary search to find a value $\tau$ such that:
a) An $1+\epsilon/4$-approximate MCMF algorithm produces flow $f$ of value at least $\tau(1-\epsilon/4)$ on $\bigD_\tau$, and
b) The same does not hold for $\tau (1+\epsilon/4)$. 
The resulting flow $f = (f_1, \ldots, f_K)$ is then of value at least $\tau^*(1-\epsilon/4)/(1+\epsilon/4) \ge \tau^*(1-\epsilon/4)^2 \ge \tau^*(1-2\epsilon/4)$. Recall that $K$ is the number of commodities, and so $K = |F|$. 

Let $N = 4 \epsilon^{-1} K m$. We apply Lemma \ref{lem:flow-rounding} to create from $f$ an $N$-rounded flow  $f'=(f'_1, \ldots, f'_K)$. 
By Lemma \ref{lem:flow-rounding}, 
this decreases the flow value by a factor of at most $1-K m/N = 1-\epsilon/4$, i.e., $v(f') \ge (1-\epsilon/4)v(f) \ge (1-3\epsilon/4)\tau^*$. 

We then form an edge multicoloring instance on $G$ as follows. Each edge $e$ requires 
$r(e)$ colors, where $r(e) = \sum_i r_i(e)$ and $r_i(e) = f'_i(e)\cdot N$. 
The weighted degree of each node $v$ is then $d_r(v) = \sum_{e \ni v} r(e) 
= N \sum_{e \ni v} f'(e) \le N \sum_{e \ni v} f(e) = \sum_{e=(v,u)\in E_D} f(e) + \sum_{e'=(w,v) \in E_D} f(e) \le 2N$, by node capacity constraints. 
We apply the algorithmic version of Shannon's Theorem \ref{thm:shannon-edgemulticol} to edge multicolor $(G,r)$ with at most $3N$ colors.
This induces an initial schedule of length $3N$, which is then repeated as needed.
Within each $3N$ rounds, $\sum_{v, e=(s_i,v)} r_i(e) = N \cdot v(f'_i)$ $i$-packets depart from its source $s_i$. 

Let $r_0 = \frac{4n}{\epsilon} (3N)$. 
Consider the situation after round $r \ge r_0$.  
%Let $T = \frac{1}{\epsilon} n(N+1)$. We produce a schedule by concatenating $n/\epsilon$ copies of these colorings (i.e., each round $x$ consists of the color class $x \bmod{N+1}$). 
Observe that each packet is forwarded at least once during each $3N$ rounds, and thus it is delivered within $n (3N)$ rounds after it is transmitted from its source, since each path used is simple.
Thus, the total number of type-$i$ packets that remain in the system in the end is at most a $\epsilon/4$-fraction of the delivered packets.  
%The total number of type-$i$ packets delivered is then $v(f'_i) \cdot 3n/\epsilon \cdot N (1 - \epsilon/3)$.
Averaged over the $r$ rounds gives throughput of
\[ \frac{N \cdot v(f'_i)}{3N} \cdot (1-\epsilon/4) = 
\frac{1}{3}v(f'_i) \cdot (1-\epsilon/4) \ . \]

%\[ v(f'_i) \frac{N}{3N} \cdot (1-\epsilon/4)  \ge \frac{1}{3}v(f'_i) \cdot (1-\epsilon/4)^2 \ge \frac{1}{3}v(f'_i) (1-2\epsilon/4) \ . \]
Hence, the throughput achieved is at least
\begin{equation}
    T \ge \frac{1}{3} v(f') (1-\epsilon/4)\ge \frac{1}{3} v(f) (1-\epsilon/4) \ge \frac{1}{3} \tau^* (1-\epsilon)\ .
\label{eq:flowcap}
\end{equation}
By Lemma \ref{lem:flow-lb}, the throughput is then $3/2 + \epsilon$-approximation of optimal. 

The computation performed is dominated by the application of Shannon's algorithm, which runs in time $O((\Delta_r+n) \hat{m})$, where $\hat{m}$ is the number of multiedges and $\Delta_r \le 2N$ is the maximum weighted degree. Here, $\hat{m} = \sum_e q(e) = N \sum_e \sum_i f_i(e) \le N \cdot m$.
Hence, the number of computational steps is at most $O(m N^2) = O(m^3 K^2 \epsilon^{-2})$.
The convergence time is $r_0 = \frac{4n}{\epsilon} (3N) = O(n m K \epsilon^{-2})$.
\end{proof}
We note that the factor $3/2$ cannot be avoided in a reduction to flow. Consider the graph $G$ on six vertices $V=\{s_i, t_i: i=0,1,2\}$ and edges $\{s_i t_{i'}, t_i t_{i'} : i=0,1,2, i' = i-1 \bmod{3}\}$. The optimal throughput is $1/3$, with respect to $F = \{(s_i,t_i):i=0,1,2\}$. 
This corresponds to the directed graph $D$ on nine nodes: $\{s_i, t_i^{in}, t_i^{out} : i=0,1,2\}$ and edges $\{(s_i, t_{i'}^{in}), (t_{i'}^{out}, t_i^{in}), (t_i^{in}, t_i^{out}) : i=0,1,2, i'= i-1 \bmod{3}\}$, and three subflows: $M=\{s_i, t_i^{out} : i=0,1,2\}$. Then, $\bigD_1 = (D,M, cap_1)$, where $cap_1(t_i^{in}, t_i^{out}) = 2$, has flow of value 1. %\mdnote{I'd suggest getting rid of this paragraph.  I don't think that most reviewers will care, and to make this technically accurate the directed graph $D$ should have 12 vertices, so it's relatively complicated}
% \fi

\iffalse
\begin{theorem}
Let $G$ be a graph on $n$ vertices, and $T = \Omega(Kn^5)$.
Then,
\[ F(G) = v(f)/2 \cdot (1-O(1/n)) = F_T(G) (1-O(1/n))\ , \]
where $f$ is optimal MCMF flow on $G$ and $F_T(G)$ is the $T$-capacity of $G$. 
\end{theorem}
\fi

%%%%%%%%%%%%%%%%%%%%%%%%%%%%%%%%%%%%%%%%%%%%%%%%%%%%%%%%%%%%%%%%%%%%%%%%%%%%%%%%
% BEGIN gk-upper.tex
%%%%%%%%%%%%%%%%%%%%%%%%%%%%%%%%%%%%%%%%%%%%%%%%%%%%%%%%%%%%%%%%%%%%%%%%%%%%%%%%

\subsection{Random Networks}
\label{sec:gk:random}
We now consider achievable throughput for the pairwise capacity problem in networks
randomly generated with the $GK$ process defined in Section~\ref{sec:problem}.
Following the lead of the original Gupta and Kumar capacity paper~\cite{gupta:2000},
we assume the flow sets are also randomly generated with uniform randomness
and contain all the nodes (i.e., every node shows up as a source or destination).
A minor technical consequence of this definition
is that it requires us to constrain our attention to even network sizes.

We begin in Section~\ref{sec:gk:conn} by identifying a threshold value for
the radius $r$ below which the randomly generated network
is likely to be disconnected, trivializing the achievable throughput to $0$.
In Sections~\ref{sec:gk:upper} and~\ref{sec:gk:lower},
we then prove that for any radius value $r$ that is at least a sufficiently
large constant factor greater than the threshold,
with high probability in $n$, 
the optimal achievable throughput is in $\Theta(r)$.

\subsubsection{Connectivity Threshold}
\label{sec:gk:conn}

When analyzing networks and flows generated by the $GK(n,r)$ network generation process,
we must consider the radius parameter $r$. If $r$ is too small, then we expect a network in which some sources
are disconnected from their corresponding destinations, making the best achievable throughput trivially $0$.
Here we study a {\em connectivity threshold} value $r_c(n) = \sqrt{\frac{\alpha\log{n}}{n}}$,
defined with respect to a network size $n$ and a constant fraction  $\alpha$.
We prove that for any $r \leq r_c(n)$,
with probability at least $1/2$,
given a network generated by $G(n,k)$ and a random pairwise flow set $F$,
there exists at least one pair in $F$ that is disconnected. 

\begin{theorem}
There is some constant $\alpha > 0$ so that for every sufficiently large even network size $n$
and radius $r \leq r_{c}(n) =\sqrt{\frac{\alpha\log{n}}{n}}$, if $G \sim GK(n, r)$ and $F$ is a random pairwise flow set, then with probability at least $1/2$ there exists $(s,\{x\}) \in F$ such that $s$ is disconnected from $x$ in $G$. 
\label{thm:gk:threshold}
\end{theorem}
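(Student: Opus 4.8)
The plan is to prove something slightly stronger: with probability tending to $1$ (hence at least $1/2$ once $n$ is large), $G \sim GK(n,r)$ contains an \emph{isolated} node, i.e. a node with no neighbor. This suffices. Since $F$ is a random pairwise flow set covering all $n$ nodes and $n$ is even, $F$ is a perfect matching on $V$; an isolated node $v$ is then paired with some distinct node $w$, and whichever of $v,w$ is designated the source of that flow, the source is disconnected from the destination, because $v$ lies in a singleton connected component of $G$. The randomness of $F$ plays no further role here: the probability estimate comes entirely from the node placement. Throughout, write $a = \pi r_c(n)^2 = \pi\alpha\log n/n$, and note that since $r \le r_c(n)$, two fixed nodes are adjacent in $G$ with probability at most the area $\pi r^2 \le a$ of a radius-$r$ disk (this crude one-sided bound conveniently sidesteps boundary effects).

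Let $X = \sum_{i=1}^n X_i$, where $X_i$ indicates that $u_i$ is isolated. For the first moment, conditioning on the position of $u_i$, each of the other $n-1$ nodes avoids the disk of radius $r$ around it independently with probability at least $1-a$, so $\Pr[X_i = 1] \ge (1-a)^{n-1} \ge n^{-c\alpha}$ for an absolute constant $c$, whence $\E[X] \ge n^{1-c\alpha} \to \infty$ provided $\alpha$ is a sufficiently small positive constant.

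The substance is the second moment. For $i \ne j$ I would express $\E[X_i X_j]$ as an integral over the positions $x,y$ of $u_i,u_j$, with integrand $(1 - A(x) - A(y) + A(x,y))^{n-2}$ (the probability that the remaining $n-2$ nodes avoid both disks), where $A(\cdot)$ denotes clipped disk area and $A(x,y)$ the clipped area of the intersection, and split on $d(x,y)$. If $d(x,y) > 2r$ the two disks are disjoint, $A(x,y)=0$, and using $1 - A(x) - A(y) \le (1-A(x))(1-A(y))$ together with $\int (1-A(x))^{n-2}\,dx \le (1-a)^{-1}\E[X_i]$ this part of the integral is at most $(1-a)^{-2}\E[X_i]\E[X_j] = (1+o(1))\E[X_i]\E[X_j]$. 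If $d(x,y) \le 2r$, bound the integrand by $1$; the region has measure $O(r^2) = O(a)$, contributing $O(a)$. Summing over all ordered pairs, $\E[X^2] - \E[X]^2 \le \E[X] + o(\E[X]^2) + O(a n^2)$. Dividing by $\E[X]^2 \ge n^{2-2c\alpha}$: the first term is $1/\E[X] = o(1)$, the second is $o(1)$, and the third is $O(a n^2)/\E[X]^2 = O(\log n \cdot n^{-1+O(\alpha)}) = o(1)$ once $\alpha$ is small enough. Hence $\mathrm{Var}(X) = o(\E[X]^2)$, and Chebyshev's inequality gives $\Pr[X=0] \le \mathrm{Var}(X)/\E[X]^2 \to 0$, so $\Pr[X \ge 1] \ge 1/2$ for all large $n$.

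The step I expect to be most delicate is the near-independence estimate on the far-apart region—verifying $(1-A(x)-A(y))^{n-2} \le (1-A(x))^{n-2}(1-A(y))^{n-2}$ and $\int (1-A(x))^{n-2}\,dx \le (1-a)^{-1}\E[X_i]$—and then the bookkeeping that pins down a single small constant $\alpha$ making $\E[X]\to\infty$ while forcing all three error terms to vanish. The rest is routine.
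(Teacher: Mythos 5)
Your proof is correct, but it takes a genuinely different route from the paper. You run the classical continuum second-moment argument for isolated vertices in a random geometric graph: lower-bound $\E[X]$ via $(1-\pi r^2)^{n-1}$, control $\E[X_iX_j]$ by splitting at distance $2r$ (the inequality $1-A(x)-A(y)\le(1-A(x))(1-A(y))$ and the bound $\int(1-A(x))^{n-2}\,dx\le(1-a)^{-1}\E[X_i]$ both check out, and bounding the near region's contribution by its measure $O(a)$ is indeed enough once $\alpha$ is small), and finish with Chebyshev. The paper instead discretizes: it tiles the unit square with boxes of side $r$, groups them into $3\times 3$ regions, and shows (with $\alpha = 1/32$) that with probability at least $3/4$ some region has all eight boundary boxes empty, handling the dependence among emptiness events via negative association plus a Chernoff bound rather than a variance computation; it then uses the \emph{randomness of the flow set} to argue that, with probability at least $3/4$, some pair is split across the isolated center box, and a union bound gives the stated $1/2$. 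The trade-offs: your argument yields a stronger conclusion (failure probability $o(1)$ rather than $1/2$, and an actually isolated vertex), and it needs nothing about $F$ beyond the fact that it is a perfect matching on all of $V$ — the paper's conditioning step on a random split pair disappears entirely. The paper's discretization, in exchange, avoids all clipped-disk integrals and boundary bookkeeping, uses only tools already deployed elsewhere in the paper (NA + Chernoff), and pins down an explicit constant $\alpha=1/32$, whereas your $\alpha$ is implicit in the final bookkeeping (any $\alpha$ below roughly $1/(8\pi)$ works). Both are "large first moment plus concentration" arguments; the isolation event and the concentration tool are what differ.
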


At a high level, 
to prove this theorem
we divide the unit square into a grid consisting of {\em boxes} of side length $r$,
and then group these boxes into {\em regions} made up of
$3\times 3$ collections of boxes. 
If a given region has a node $u$ in the center box, and all its other boxes are empty, then $u$ is disconnected from any node not 
in its own box.
Our proof calculates that for a sufficiently small constant fraction $\alpha$ used in the definition
of the connectivity threshold,
with probability at least $1/2$,
there will be a node $u$ such that $u$ is isolated as described above,
and $u$ is part of a source/destination pair with another node $v$ located in a different box.

Given this setup, the main technical complexity in the proof is carefully navigating the various
probabilistic dependencies.
One place where this occurs is in proving the likelihood of empty regions.
For sufficiently small $\alpha$ values,
the expected number of non-empty regions is non-zero, but we cannot directly concentrate on this expectation due to the dependencies between emptiness events. 
These dependencies, however, are  dispatched by  leveraging the 
negative association
between the indicator variables describing a region's emptiness (e.g., if region $i$ is not empty,
this {\em increases} the chance that region $j\neq i$ is empty).
\iflong
In particular, we will apply the following results concerning 
negative association derived in~\cite{daum:2012} based
on the more general results of~\cite{dubhashi:1998}:

\begin{theorem}[\cite{daum:2012,dubhashi:1998}]
Consider an experiment in which weighted balls are thrown into $n$ bins according to some distribution.
Fix some $a \geq 0$,
and let $Y_i$ be the indicator random variable defined such that $Y_i = 1$ iff there are no
more than $a$ balls in bin $i$. 
The variables $Y_1, Y_2, ..., Y_n$ are negatively associated, and
therefore standard Chernoff bounds apply to their sum.
\label{thm:gk:negative}
\end{theorem}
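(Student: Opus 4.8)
The plan is to obtain the statement by assembling the standard closure properties of negatively associated (NA) collections, exactly in the spirit of Dubhashi and Ranjan~\cite{dubhashi:1998}. Suppose $m$ balls are thrown, each ball $j$ independently into a bin according to its own distribution, and let $B_{i,j}$ be the indicator that ball $j$ lands in bin $i$. Write $L_i$ for the load of bin $i$, i.e.\ $L_i=\sum_j B_{i,j}$ in the unweighted statement (or $L_i=\sum_j w_j B_{i,j}$ with weights $w_j\ge 0$ in the weighted variant), and $Y_i=\mathbf{1}[L_i\le a]$. I will show $(L_1,\dots,L_n)$ is NA, deduce that $(Y_1,\dots,Y_n)$ is NA, and then invoke the standard fact that NA variables satisfy the same moment-generating-function inequality used in the proof of Theorem~\ref{thm:chernoff}.

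The core argument has four steps. First, for each fixed ball $j$ the vector $(B_{1,j},\dots,B_{n,j})$ is a $0$--$1$ vector with exactly one coordinate equal to $1$; such a vector is NA by the ``Zero-One Lemma'' of~\cite{dubhashi:1998} — concretely, for disjoint index sets $S,T$ and non-decreasing $f,g$, at most one of the coordinates in $S\cup T$ is hit, and a short case analysis on which coordinate (if any) is hit yields $\E[f(B_{S,j})g(B_{T,j})]\le \E[f(B_{S,j})]\,\E[g(B_{T,j})]$. Second, the vectors $(B_{\cdot,j})$ for distinct $j$ are mutually independent, so by closure of NA under unions of independent NA families the whole collection $\{B_{i,j}\}_{i\in[n],\,j\in[m]}$ is NA. Third, each load $L_i$ is a non-decreasing function of the sub-collection $\{B_{i,j}\}_{j\in[m]}$, and these sub-collections are indexed by disjoint sets of $(i,j)$-pairs, so by closure of NA under coordinatewise-monotone functions of disjoint blocks, $(L_1,\dots,L_n)$ is NA. Fourth, $Y_i=\mathbf{1}[L_i\le a]$ is a non-increasing function of $L_i$ alone; applying the same monotone-blocks closure once more (all block functions now non-increasing) gives that $(Y_1,\dots,Y_n)$ is NA.

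Finally, to justify ``standard Chernoff bounds apply to $\sum_i Y_i$'': NA of $(Y_1,\dots,Y_n)$ gives $\E\big[\prod_i h_i(Y_i)\big]\le \prod_i \E[h_i(Y_i)]$ for any non-decreasing non-negative $h_i$; taking $h_i(y)=e^{ty}$ with $t\ge 0$ bounds the MGF of $\sum_i Y_i$ by the product of the individual MGFs, which is precisely the inequality the upper-tail Chernoff derivation uses, and the same reasoning applied to $1-Y_i$ (again a monotone function of a single block, so NA is preserved) handles the lower tail; hence the two-sided bound of Theorem~\ref{thm:chernoff} goes through verbatim. The only non-mechanical ingredient is the base case (the Zero-One Lemma), and even that is a short direct verification; the remaining work is bookkeeping with the two closure properties. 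I would state the two closure lemmas and the Zero-One Lemma with citations to~\cite{dubhashi:1998} and cite~\cite{daum:2012} for the packaged ``bin-occupancy'' version. The one subtlety to flag is that in a single application of the monotone-blocks closure all block functions must be monotone in the \emph{same} direction, which is why Steps~3 and~4 are done as two separate applications rather than one.
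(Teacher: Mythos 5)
Your argument is correct, but note that the paper does not prove this statement at all: it is imported wholesale from~\cite{daum:2012,dubhashi:1998}, which is why it is stated with those citations and used as a black box in the proof of Theorem~\ref{thm:gk:threshold}. What you have written is essentially a reconstruction of the standard Dubhashi--Ranjan argument that those references rely on: the Zero--One Lemma for the per-ball indicator vectors, closure of negative association under independent unions, closure under same-direction monotone functions of disjoint blocks (applied once to get the loads $L_i$, once more to get the threshold indicators $Y_i$), and the MGF product inequality to push the usual Chernoff derivation through, with the lower tail handled via $1-Y_i$. All of these steps are sound, and you correctly flag the one place people slip, namely that each application of the monotone-blocks closure requires all block functions to be monotone in the same direction. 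The only thing to add if you wanted this to be self-contained is to make explicit that independence across balls is assumed (the statement's ``according to some distribution'' is silent on this, and the result is false without it), and that the two-set NA inequality extends to the $n$-factor product bound $\E\bigl[\prod_i e^{tY_i}\bigr]\le\prod_i\E\bigl[e^{tY_i}\bigr]$ by a routine induction; both points are consistent with how the theorem is applied in Section~\ref{sec:gk:conn}, where nodes are placed independently and uniformly.
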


We can proceed to the main proof:

\begin{proof}[Proof (of Theorem~\ref{thm:gk:threshold}).]
We consider the network generated with the threshold connectivity value $r= r_{c}(n) =\sqrt{\frac{\alpha\log{n}}{n}}$
defined in the theorem statement. Clearly if the network is disconnected for this
radius it is also disconnected for smaller radii.
We will show the theorem claim holds for $r_c(n)$ for sufficiently large $n$ and $\alpha = 1/32$.

We begin by structuring the unit square into which nodes are randomly placed by the $GK$ process.
First, we divide the unit square into a grid of square {\em boxes} of side length $r$ (ignore left over space).
We then partition these {boxes} into {\em regions} made up of $3\times 3$ collections of boxes
(ignore left over boxes).
Finally, we label these regions $1,2,...,k$, where
\begin{align*}
	k & = \lfloor \text{(\# of boxes)}/9  \rfloor =  \lfloor  ((1/r)^2)/9 \rfloor =  \left \lfloor  \left(\sqrt{\frac{n}{\alpha\log{n}}}\right)^2/9 \right\rfloor = \lfloor  n/(9\alpha\log{n})\rfloor.
\end{align*}

For each region $i$, 
let $c_i$ refer to the center box of the $3\times 3$ pattern of boxes that defines the region. 
We call the remaining $8$ boxes the {\em boundary}
boxes for region $i$.  
We now calculate the probability that $GK(n, r_c(n))$ process places
nodes such that boundary boxes of a given region $i$ are all empty.

By the definition of the $GK(n,r_c(n))$ process,
the probability that a given node $u$ is placed in a given box is equal to the total area, $a_b$,
of the box.
Therefore, the probability $u$ is not placed in {\em any} of the $8$ boundary boxes of a given
region is $1-8\cdot a_b$.

Pulling these pieces together with the fact that  $a_b = r_c(n)^2 = (\alpha\log{n})/n$,
it follows that the probability that {\em no} node is placed in the boundary
boxes of a given region $i$ is lower bounded as:

\begin{align*}
\Pi_{u\in V} (1-8a_b) &= (1-8a_b)^n \geq (1/4)^{8\cdot a_b \cdot n} = (1/4)^{8\alpha\log{n}},
\end{align*}
where the second step follows from the well-known inequality that $(1-p) \geq (1/4)^p$ for any $p \leq 1/2$ (for sufficiently large $n$, it is clear that $p=8a_b \leq 1/2$).  Because we assumed $\alpha = 1/32$, we can further simplify:
\begin{align*}
	(1/4)^{8\alpha\log{n}} &= (1/4)^{(1/4)\log{n}} = \left((1/4)^{1/2}\right)^{(1/2)\log{n}}  = (1/2)^{\log{n^{1/2}}} =  1/\sqrt{n}
\end{align*}
	
 We now lower bound the probability that \emph{some} region has empty boundary boxes.  To do this, we first define the random indicator variables $Y_1, Y_2, ..., Y_k$, where $Y_i = 1$ iff the boundary boxes of region $i$ are empty.	Let $Z=\sum Y_i$. We want to lower bound the probability that $Z>0$. By linearity of expectations, 
\[
\E(Z) \geq k/\sqrt{n} = \left\lfloor \frac{\sqrt{n}}{9\alpha\log{n}} \right\rfloor.%\footnote{A square root here is  larger than what we actually need here, as something logarithmic in $n$ is fine, but	the algebra works out cleaner for this value and we are losing only constant factors	to this looseness.}
\]
	
Because each node is equally likely to be placed in each region, we know from Theorem~\ref{thm:gk:negative} (with $a=0$) that the variables $Y_1, Y_2, ..., Y_k$, are negatively associated.  Therefore the Chernoff bounds from Theorem~\ref{thm:chernoff} apply to $Z$.
	In particular, it follows that the probability that $Z \leq \E(Z)/2$ is upper bounded
	by:
	
	\[ \exp{\left(- \E(Z)/8 \right)} \leq \exp{\left( -  \frac{\sqrt{n}}{72\alpha\log{n}}  \right)}\]
	
	\noindent For our fixed $\alpha = 1/32$,
	it follows that 
	for sufficiently large $n$, two things are true:
	$\E(Z) \geq 2$ (and therefore $\E(Z)/2 > 0$), and this probability is upper bounded by $1/4$.
	Therefore, for sufficiently large $n$, the probability that there are no regions
	with empty boundary boxes is at most $1/4$.

	%Now we consider a related probability.
	%Fix some center box $c_i$. 
	%Let $X$ be the number of nodes placed into $c_i$.
	%By our above calculations, $E(X) = n\cdot a_b = \alpha \log{n}$.
	%Because we can express $X = \sum_{u\in V} X_u$, where $X_u$ is the random indicator variable
	%that evaluates to $1$ iff $u$ is placed in $C_i$,
	%We can apply the Chernoff bounds from Theorem~\ref{thm:chernoff} to $E(X)$.
	%It follows that the probability that $X \geq 2\E(X)$ is upper bounded
	%by $\exp\left(-\frac{E(X)}{3}  \right) = \exp\left(-\frac{\alpha}{3}\log{n}  \right)$.
	%For the fixed $\alpha=1/32$, it is clear that for sufficiently large $n$ this probability is also
	%upper bounded by $1/4$.
	
	Conditioned on the event that a given region $i$ has empty boundary boxes,
	we want to now bound the probability that there exists a source/destination pair $(u, \{v\})\in F$ such that $u$
	is in $c_i$ and $v$ is not.
	
	For a given $(u,\{v\}) \in F$, this occurs with probability $p_1p_2$,
	where $p_1$ is the probability that $u$ is in $c_i$ and $p_2$ is the probability
	that $v$ is not in $c_i$. Given that $p_1 = a_b$ (where $a_b$ is the area contained
	in a box) and $p_2$ is clearly greater than $1/2$, we crudely bound this product as
	\[ p_1p_2 > a_b/2 = \frac{\alpha \log{n}}{2n}.\]
	So the probability that this splitting event {\em fails} to occur for all $n/2$ pairs in $F$ is upper bounded by
    \begin{align*} 
    (1-p_1p_2)^{n/2} &< \left(1-\frac{\alpha \log{n}}{2n}\right)^{n/2} \leq e^{-(1/4)\alpha\log{n}},
    \end{align*}
    As before, for our fixed $\alpha = 1/32$, for sufficiently large $n$ this
    probability is upper bounded by $1/4$.

    We have shown the following two bounds: (1) the probability that there are no regions with empty boundary
    boxes is at most $1/4$; and (2) the probability that given a region with empty boundary boxes, that there
    are no pairs split by the region, is also at most $1/4$. We can combine these events with a union
    bound to establish that the probability
    that at least one of these two events fails is less than $1/2$, satisfying the theorem statement.
\end{proof}
\fi

\subsubsection{Bound on Achievable Throughput}
\label{sec:gk:upper}

In the previous section, we identified a radius threshold $r_c(n)$ below which a randomly generated network
is likely to disconnect a source and destination, reducing the achievable throughput to a trivial $0$.
Here we study the properties of the networks generated with radius values on the other side of this threshold.
In particular, we show that for any radius $r \geq r_c(n)$, 
with high probability, the randomly generated network and flow set will allow an optimal
throughput bounded by $O(r)$. 
The intuition for this argument is that if nodes are evenly distributed
in the unit square, a constant fraction of senders will have to deliver packets
from one half of the square to the other, necessarily requiring
many packets to flow through a small column in the center of the square, bounding the achievable throughput.

% \begin{theorem}
% For every sufficiently large even network size $n$, radius $r \geq r_c(n)$, and routing algorithm ${\cal A}$,
% given
% a network $G=(V,E)$ generated by the $GK(n,r)$ process,
% and a random pairwise flow set $F$ defined with respect to $V$,
% with high probability in $n$:
% the largest achievable throughput of ${\cal A}$ with respect to $G$ and $F$ is in $O(r)$.
% \label{thm:gk:optimal}
% \end{theorem}

\begin{theorem}
For every sufficiently large even network size $n$ and radius $r \geq r_c(n)$, given a network $G \sim GK(n,r)$ and a random pairwise flow set $F$, 
the throughput of every schedule (w.r.t.\ $G$ and $F$) is $O(r)$ with high probability. 
%with high probability: every schedule $S$ achieves a throughput in $O(r)$ with respect to $G$ and $F$. 
\label{thm:gk:optimal}
\end{theorem}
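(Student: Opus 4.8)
The plan is to prove this bound by a cut argument. I would exhibit a thin vertical strip through the middle of the unit square whose node set $S$ is small (of size $O(rn)$) but which separates a constant fraction of the source/destination pairs, so that all the traffic of those pairs must be funneled through $S$; since each node of $S$ forwards at most one packet per round (connections form a matching, one packet each), this caps the throughput. Throughout I may assume $r$ is below a fixed small constant, since throughput never exceeds $1$ and so the bound $O(r)$ is vacuous once $r = \Omega(1)$. Concretely, partition $V$ by $x$-coordinate into $L = \{v : x_v < \tfrac12 - \tfrac r2\}$, $R = \{v : x_v > \tfrac12 + \tfrac r2\}$, and $S = V\setminus(L\cup R)$, the nodes of the central strip of width $r$. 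Any edge joining $L$ to $R$ would have horizontal extent exceeding $r$, hence Euclidean length exceeding $r$, which is impossible; thus every path of $G$ from $L$ to $R$ passes through $S$. Call a flow $(s_i,\{x_i\})\in F$ \emph{crossing} if one of $s_i,x_i$ lies in $L$ and the other in $R$.

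Next I would carry out the combinatorial bottleneck step. Fix any schedule achieving throughput $T$ with convergence round $r_0$, and fix a round $r\ge r_0$. For a crossing flow $i$ and an index $j\le del_i(r)$, the path/label property of schedules yields a connected set of edges, all labelled $(i,j)$ and all occurring in matchings of rounds at most $r$, linking $s_i$ to $x_i$; since $s_i\in L$ and $x_i\in R$, this set contains an edge $e$ from $L$ into $S$, occurring in some round $\ell\le r$. Charging packet $(i,j)$ to the pair $(\ell,e)$ defines an injection from the delivered crossing packets into the set of (round, $L$-to-$S$ edge) pairs with round at most $r$, because a matching edge in a fixed round carries a single label. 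As each matching has at most $|S|$ edges incident to $S$, there are at most $|S|\cdot r$ such pairs, so $\sum_{i\ \mathrm{crossing}} del_i(r)\le |S|\cdot r$. Since $del_i(r)\ge Tr$ for every $i$, this gives $T\le |S|/(\text{number of crossing flows})$, provided at least one crossing flow exists.

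Finally I would supply the two concentration estimates. The strip has area $r$, so $|S|$ is a sum of $n$ independent $\{0,1\}$ variables with mean $rn$; since $r\ge r_c(n)$ we have $rn=\Omega(\sqrt{n\log n})=\omega(\log n)$, so the Chernoff bound of Theorem~\ref{thm:chernoff} gives $|S|\le 2rn$ with high probability. For the crossing count, the $n/2$ pairs of $F$ are vertex-disjoint and node positions are independent, so each pair is crossing independently with probability $2(\tfrac12-\tfrac r2)^2\ge \tfrac13$ (for $r$ below a small constant); the count is then a sum of $n/2$ independent indicators with mean $\Omega(n)$, so Chernoff gives $\Omega(n)$ crossing flows with high probability. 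A union bound over the two failure events then yields $T\le |S|/\Omega(n)=O(rn)/\Omega(n)=O(r)$ with high probability.

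The step I expect to be the main obstacle is the combinatorial bottleneck: one must argue carefully, using the matching structure of each round together with the inductive path/label property of schedules, that distinct delivered crossing packets get charged to distinct (round, $L$-to-$S$ edge) pairs, and that every crossing packet really does traverse such an edge. The geometric separator claim and the two Chernoff estimates are routine by comparison.
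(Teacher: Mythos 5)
Your proposal is correct and follows essentially the same route as the paper: a central strip of width $r$ acting as a bottleneck, Chernoff bounds showing the strip has $O(rn)$ nodes while $\Omega(n)$ randomly chosen pairs must route through it, and a counting argument that the strip can forward only $O(rn)$ packets per round, forcing some pair's throughput down to $O(r)$. Your explicit charging of each delivered crossing packet to a distinct (round, $L$-to-$S$ edge) pair is just a more formal rendering of the paper's observation that each center node handles at most one packet per round.
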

%\mhnote{Rephrased the last part of the theorem statement.} \mdnote{made further changes}
\iflong
To build up to this proof,
we consider a series of helper lemmas.
These results assume that we divide the unit square into three columns (regions of height 1) such that the center region has width $r$ and the two outer regions width $(1-r)/2$. We first show that, in expectation, there are many source/destination pairs such that all paths between the source and destination require a node in the center region to send a packet to a node in an outer region. Slightly more formally, we say that a source/destination pair $(s_i, t_i)$ requires a node in the center region if every path from $s_i$ to $t_i$ in $G$ contains at least one node from the center region.  

For the lemmas that follow, since the theorem is trivially true for constant $r$, we assume without loss generality, that $r$ is relatively small (e.g., $r<1/2$).

\begin{lemma}  \label{lem:leavesprobability}
For a particular source/destination pair $(s_i, t_i)$, the probability that $(s_i, t_i)$ requires a node from the center region is at least $\frac{1}{2}(1-r^2)$.
\end{lemma}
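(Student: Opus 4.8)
The plan is to reduce the statement to a one‑line geometric fact about the $GK$ process, combined with an elementary computation on the (independent) positions of $s_i$ and $t_i$. Recall the partition of the unit square into three vertical strips: a left strip $L$ of width $(1-r)/2$, a center strip $C$ of width $r$, and a right strip $R$ of width $(1-r)/2$; I take $C$ closed and $L,R$ half‑open, so that for a point $u$ placed uniformly at random, $\Pr[u\in L]=\Pr[u\in R]=(1-r)/2$ and $\Pr[u\in C]=r$. The geometric fact I will use is that \emph{in any $G\sim GK(n,r)$ there is no edge between a node of $L$ and a node of $R$}: the $x$‑coordinates of such nodes differ by strictly more than $(1+r)/2-(1-r)/2=r$, so their Euclidean distance exceeds $r$ and the $GK$ process adds no edge between them. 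Hence \emph{every} path in $G$ from a node of $L$ to a node of $R$ must pass through at least one node of $C$.

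With this in hand I would isolate two disjoint events, each of which forces $(s_i,t_i)$ to require a node from the center region. The first is $\{s_i\in L\}\cap\{t_i\in C\cup R\}$: if $t_i\in C$ then $t_i$ is itself a node of the center region lying on every $s_i$–$t_i$ path (being an endpoint), and if $t_i\in R$ the geometric fact above shows every $s_i$–$t_i$ path meets $C$. The second event is the mirror image $\{s_i\in R\}\cap\{t_i\in L\cup C\}$. These are disjoint, since one requires $s_i\in L$ and the other $s_i\in R$, and within each event the position of $s_i$ is independent of that of $t_i$.

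It then only remains to add the two probabilities. Using $\Pr[t_i\in C\cup R]=r+(1-r)/2=(1+r)/2$ and the analogous identity for $L\cup C$, each event has probability $\frac{1-r}{2}\cdot\frac{1+r}{2}=\frac{1-r^2}{4}$, so the probability that $(s_i,t_i)$ requires a center node is at least $2\cdot\frac{1-r^2}{4}=\frac12(1-r^2)$, as claimed. I do not expect a genuine obstacle here; the only points needing care are (i) the shared boundaries of the strips, which have measure zero and can be assigned to $C$ without affecting any probability, and (ii) that the endpoints of a path count as nodes ``from the center region,'' which is precisely what makes the two events above valid even when the non‑$L$ (resp.\ non‑$R$) endpoint lands in $C$ itself rather than across it in $R$ (resp.\ $L$). (One could in fact prove the stronger bound $1-\tfrac12(1-r)^2$ by also accounting for configurations with $s_i\in C$, but this is not needed.)
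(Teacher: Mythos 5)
Your proposal is correct and follows essentially the same argument as the paper: split into two disjoint events, use the fact that the center strip of width $r$ separates the two outer strips (so any $L$--$R$ path must meet $C$), note that an endpoint landing in $C$ trivially lies on every path, and add the probabilities to get $\frac{1-r}{2}\cdot\frac{1+r}{2}\cdot 2=\frac12(1-r^2)$. The only difference is cosmetic: the paper's second event places $s_i$ in the center strip (events ``different outer strips'' and ``$s_i\in C$, $t_i$ outer''), whereas you fold the center case into $t_i$'s location, which is just a symmetric re-partition of the same union of configurations.
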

\begin{proof}
% \mdnote{I simplified this proof -- change back if something's wrong}
Note that $(s_i, t_i)$ requires a node in the center region if one of the following two disjoint events occur: $s_i$ and $t_i$ are in different outer regions, or $s_i$ is in the center region but $t_i$ is in an outer region.  The first event is sufficient since the width of the center region means that there are no edges between the two outer regions, while the second event is sufficient since every $s_i - t_i$ path includes $s_i$.  

The first event occurs with probability $2(((1-r)/2)^2 = (1-r)^2 / 2$, and the second event occurs with probability $r(1-r)$.  Thus the total probability that every $s_i - t_i$ path includes an outgoing edge from a node in the center region is at least $(1-r)^2 / 2 + r(1-r) = \frac12 (1-r^2)$. 
\end{proof}

Next we relate this probability to the number of such source/destination pairs.  
% \mdnote{in a lot of what follows, want to assume that $r$ is relatively small (at least bounded away from $1$.  Worth just stating once that $r < 1/2$ or something similar?}
\begin{lemma} \label{lem:pairchernoff}
With very high probability, the number of source/destination pairs in $F$ that require a node in the center region 
%for which all paths \mhnote{Which paths?}\cnnote{I think
%he means all paths in the network graph between the source and destination.}
%require a node in the center region to send a packet to a node in an outer region 
is at least $\Omega (n(1-r))$. 
\end{lemma}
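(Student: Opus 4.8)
The plan is to write the desired count as a sum of independent indicator variables and apply the Chernoff bound of Theorem~\ref{thm:chernoff}. The only subtlety is that the event ``pair $(s_i,t_i)$ requires a node in the center region'' depends in principle on the whole random graph, so these events need not be independent across pairs. First I would get around this by working instead with the \emph{sufficient} condition isolated inside the proof of Lemma~\ref{lem:leavesprobability}: for $i = 1, \dots, n/2$, let $X_i$ be the indicator of the event that either $s_i$ and $t_i$ land in different outer regions, or $s_i$ lands in the center region and $t_i$ in an outer region. Both sub-events force every $s_i$--$t_i$ path to pass through the center region, so $X_i = 1$ implies that pair $i$ requires a center node; and the calculation in that proof shows $\E(X_i) = \Pr(X_i = 1) \ge \tfrac12(1-r^2) \ge \tfrac12(1-r)$.

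The key point is that $X_i$ is determined solely by the two (uniform, independent) positions of the nodes making up the $i$-th pair of $F$. Since $F$ is a pairwise flow set, distinct pairs are node-disjoint, so $X_1, \dots, X_{n/2}$ are mutually independent. (This holds for every fixed matching $F$, hence also conditionally on the random choice of $F$, which is all we need.) Set $X = \sum_{i=1}^{n/2} X_i$, so that $X$ lower-bounds the number of pairs requiring a center node. By linearity, $\mu := \E(X) \ge \tfrac{n}{2}\cdot\tfrac12(1-r) = \tfrac{n}{4}(1-r)$, and since we may assume $r < 1/2$, also $\mu \ge n/8$.

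Finally I would invoke the lower-tail Chernoff bound of Theorem~\ref{thm:chernoff} with $\delta = 1/2$:
\[ \Pr\!\left[X \le \tfrac{\mu}{2}\right] \le \exp\!\left(-\tfrac{\mu}{8}\right) \le \exp\!\left(-\tfrac{n(1-r)}{32}\right) \le \exp\!\left(-\tfrac{n}{64}\right), \]
the last inequality using $r < 1/2$. Thus with probability at least $1 - e^{-n/64}$ we have $X \ge \mu/2 \ge \tfrac{n}{8}(1-r)$, which gives the claimed $\Omega(n(1-r))$ lower bound on the number of source/destination pairs requiring a node in the center region.

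The only real obstacle is the independence step: one has to resist working directly with the graph-dependent event ``requires a center node'' (which is correlated across pairs through the shared placement of all the other nodes) and instead use the purely position-based sufficient event, for which disjointness of the pairs in $F$ yields genuine independence. The rest is a one-line expectation bound together with a textbook Chernoff application.
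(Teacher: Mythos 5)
Your proposal is correct and follows essentially the same route as the paper: indicator variables for the two position-based events from Lemma~\ref{lem:leavesprobability}, independence across the disjoint pairs, linearity of expectation, and the lower-tail Chernoff bound of Theorem~\ref{thm:chernoff}. Your explicit justification of independence (each $X_i$ depends only on the two positions of its own pair) is a welcome sharpening of the paper's one-line assertion, but it is not a different argument.
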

\begin{proof}
% \mdnote{simplified this proof too}
For each source/destination pair $(s_i, t_i)$, let $X_i$ be an indicator random variable for the union of the two events analyzed in Lemma~\ref{lem:leavesprobability}, such that $\E[X_i] \geq \frac12 (1-r^2)$.  Observe that, clearly, these events are independent
%  \mdnote{think about whether this is true, or whether just the sufficient conditions we analyzed in the previous lemma are independent}.
and let $X = \sum_{i=1}^{n/2} X_i$ denote the total number of pairs where $X_i=1$.  By linearity of expectations, we know that $\E[X] \geq (n/4)(1-r^2)$.  So the Chernoff bound from Theorem~\ref{thm:chernoff} implies that 
\begin{align*}
    \Pr[X < (n/8)(1-r^2)]\leq \Pr[X < \mu/2]\leq \exp{\left(\frac{-n(1-r^2)}{16}\right)}.
\end{align*}
Therefore, with very high probability, the number of source/destination pairs that meet the conditions of Lemma~\ref{lem:leavesprobability} is $\Omega((n/8)(1-r^2))=\Omega(n(1-r^2))=\Omega(n(1-r))$ for $r < 1/2$. Furthermore, since by Lemma~\ref{lem:leavesprobability} each of these source/destination pairs requires a node in the center region, the number of pairs as described by the lemma statement is also $\Omega(n(1-r))$.
\end{proof}

Now that we have successfully lower bounded the number of source/destination pairs that require a node in the center region to send a packet to a node in an outer region, we need an estimate for how many nodes in the center region exist to send these packets at one time.

\begin{lemma}\label{lem:centernodes}
With high probability, there are $O(rn)$ nodes in the center region.
\end{lemma}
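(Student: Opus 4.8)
The plan is a direct concentration argument with no clever tricks. By the construction used for Lemmas~\ref{lem:leavesprobability}--\ref{lem:pairchernoff}, the center region is a rectangle of width $r$ and height $1$, hence of area $r$. Under the $GK(n,r)$ process the $n$ nodes are placed independently and uniformly in the unit square, so each node lands in the center region independently with probability exactly $r$. Let $Y=\sum_{i=1}^n Y_i$, where $Y_i$ is the indicator that node $u_i$ lies in the center region; the $Y_i$ are independent and $\mu:=\E[Y]=rn$.

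Next I would apply the upper-tail Chernoff bound of Theorem~\ref{thm:chernoff} with $\delta=1$, which gives $\Pr[Y\ge 2rn]\le \exp(-rn/3)$. It then remains only to check that $rn$ is large enough for this to be a high-probability statement. Since we are in the regime $r\ge r_c(n)=\sqrt{\alpha\log n/n}$, we have $\mu = rn \ge \sqrt{\alpha n\log n}$, which is $\omega(\log n)$; hence $\exp(-rn/3)\le \exp(-\sqrt{\alpha n\log n}/3)=n^{-\omega(1)}$, beating $n^{-c}$ for every constant $c$ once $n$ is sufficiently large. Thus with high probability $Y\le 2rn = O(rn)$, as claimed.

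There is essentially no obstacle here: the independence of the node placements makes the Chernoff bound immediate, and the hypothesis $r\ge r_c(n)$ is precisely what forces the mean $rn$ to be (more than) polylogarithmically large, so that the tail bound translates into a ``with high probability'' guarantee. The only bookkeeping point is that the constant $\alpha$ is the one inherited from the connectivity-threshold analysis of Section~\ref{sec:gk:conn}, but any fixed positive value of $\alpha$ suffices for the estimate above.
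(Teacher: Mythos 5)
Your proposal is correct and follows essentially the same argument as the paper: independent indicators for center-region placement with mean $rn$, followed by the upper-tail Chernoff bound with $\delta=1$ to get $\Pr[Y\ge 2rn]\le \exp(-rn/3)$. Your additional remark that $r\ge r_c(n)$ forces $rn\ge\sqrt{\alpha n\log n}$, making the tail bound superpolynomially small, is a useful explicit justification of the "with high probability" claim that the paper leaves implicit.
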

\begin{proof}
% \mdnote{I think this proof was incorrect before -- it used Chernoff in the wrong direction.  This new version should be correct.}
Let $Y$ be a random variable denoting the number of vertices in the center region.  Each node is put into the center region independently with probability $r$, and thus $\E[Y] = rn$.  Since the placement of each node is independent, we can use the Chernoff bound from Theorem~\ref{thm:chernoff} to get that $\Pr[Y \geq 2rn] \leq \exp{(-rn/3)}$.  Thus with very high probability, there are at most $2rn = O(rn)$ nodes in the center region.
\end{proof}

% This lemma follows directly from the fact that the probability a node is in a region of size $R$ is exactly $R$. Since the center region has area $r$, this probability is $r$. Let the indicator variable $Y_u$ signify the event that node $u$ is in the center region such that:
% \begin{align*}
% Y_u&=\begin{cases}
% 1\text{ with probability } r \\
% 0\text{ otherwise}
% \end{cases}\\
% E[Y_u] &= r
% \end{align*}
% Let $Y=\sum_{u\in[n]}Y_u$ such that $\mu=E[Y]=rn$ and apply the Chernoff bound from Theorem~\ref{thm:chernoff} (with $\delta=1/2$) to lower bound the number of such nodes:
% $$\Pr[Y > (\mu/2)] \leq \exp{\left(-\frac{rn}{8}\right)}$$
% Therefore, with very high probability (exponentially small in $n$ and upper bounded by a constant for sufficiently-large $n$), there are at most $O((rn)/8)=O(rn)$ nodes inside the center region.

We now have everything we need to upper bound the pairwise throughput.

\begin{proof}[Proof (of Theorem~\ref{thm:gk:optimal}).]
From Lemma $\ref{lem:pairchernoff}$ we know that with high probability that there are $\Omega(n(1-r))$ source/destination pairs that require one of the $O(rn)$ nodes in the center region. Since each of these nodes can send at most one packet per round by the constraints of the mobile telephone model, by round $\Psi$ at most $O(\Psi\cdot rn)$ packets can be delivered. Therefore, on average for each source/destination  pair $(s_i, t_i)$, the number of packets delivered by round $\Psi$ is $O(\Psi\cdot rn)/\Omega(n(1-r))=O(\Psi r)$.  Thus in any schedule there must exist some $(s_i, t_i)$ so that at round $\Psi$, only $O(\Psi r)$ packets from $s_i$ have been delivered to $t_i$, and hence the throughput is only $O(r)$.
\end{proof}

% such that:
% \begin{align*}
% O(\Delta r)=q^{\cal A}_i(r)\geq T\cdot \Delta\\
% O(r)\geq T
% \end{align*}
% Since this holds on average, it must be the case that $T$ is $O(r)$ for {\it some} source/destination pair $(s_i,t_i)$. Therefore, it can't be the case that the pairwise throughput is $\omega(r)$ since $T$ would have to be $\omega(r)$ for {\it all} source/destination pairs, upper bounding the pairwise throughput by $O(r)$.

\fi

\subsubsection{Tightness of the Throughput Bound}
\label{sec:gk:lower}

In Section~\ref{sec:gk:upper},
we proved an upper bound of $O(r)$ on the achievable throughput
in a network generated by $GK(n,r)$, for $r \geq r_c(n)$, and random pairwise
flows. Here we show
this result is tight by showing how to produce a schedule that achieves throughput in $\Omega(r)$ with respect to 
a random $G$ and $F$.
Formally:
\begin{theorem}
There exists a constant $\beta > 1$ such that, for any sufficiently large network size $n \geq 2$ 
and radius $r \geq \beta r_c(n)$, if $G \sim GK(n,r)$ and $F$ is a random pairwise flow set, then with high probability in $n$ there exists a schedule that achieves throughput in $\Omega(r)$ with respect to $G$ and $F$.
\label{thm:gk:existential}
\end{theorem}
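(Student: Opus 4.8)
The plan is to exhibit, with high probability, a fractional multicommodity flow of value $\Omega(r)$ in the digraph $\bigD_\tau$ built in Section~\ref{sec:mcf}, and then invoke the schedule construction from the proof of Theorem~\ref{thm:pairwise-appx}, which turns any such flow into a schedule with throughput a constant fraction of the flow value. This is attractive because the flow reduction already absorbs all of the mobile-telephone scheduling subtleties (the matching constraint, one packet per connection, etc.), so it reduces the claim to a purely geometric statement about $GK(n,r)$ graphs; combined with the matching $O(r)$ bound of Section~\ref{sec:gk:upper} it also pins the optimal throughput at $\Theta(r)$. As already observed, we may assume $r$ is smaller than a suitable constant, since for constant $r$ the graph is dense and the claim is immediate.

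First I would impose a grid on the unit square: partition it into square \emph{cells} of side length $cr$ for a small absolute constant $c$, chosen so that any two nodes lying in the same cell, or in two cells adjacent horizontally, vertically, or diagonally, are within distance $r$ and hence joined by an edge; leftover boundary strips are merged into their neighbouring cells. There are $\Theta(1/r^2) = O(n/\log n)$ cells. Since $r \ge \beta r_c(n)$, the expected number of nodes in each cell is $\Theta(r^2 n) \ge \beta^2 \alpha \cdot \Theta(\log n)$, which exceeds any desired constant multiple of $\log n$ once $\beta$ is a large enough constant; a Chernoff bound (Theorem~\ref{thm:chernoff}) plus a union bound over all cells then shows that, with high probability, every cell contains $\Theta(r^2 n)$ nodes, in particular is non-empty.

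Next I would define, for each flow pair $(s_i,\{t_i\}) \in F$, a canonical ``L-shaped'' route through the cell grid: travel horizontally along the row of cells containing $s_i$ until reaching the column of $t_i$, then vertically to the cell of $t_i$; this visits $O(1/r)$ pairwise-adjacent cells. For a fixed cell $C$, route $i$ passes through $C$ only if $s_i$ lies in $C$'s row of cells or $t_i$ lies in $C$'s column, an event of probability $O(r)$; since distinct pairs involve disjoint, independently placed nodes, the \emph{load} of $C$ (the number of routes through it) is a sum of $n/2$ independent indicators with mean $O(rn)=\Omega(\log n)$, so another Chernoff-plus-union-bound argument gives that, with high probability, every cell has load $O(rn)$. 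Conditioning on the (high-probability) event that every cell simultaneously has $\Theta(r^2 n)$ nodes and load $O(rn)$, I would then define the flow: each commodity $i$ sends $\gamma r$ units along its canonical route, splitting its flow evenly among all vertices of each cell it enters (routing across the complete bipartite graph between consecutive cells, which exists because consecutive cells are adjacent). The flow through any fixed vertex $v$ of a cell $C$ is then $O(\mathrm{load}(C))\cdot \gamma r / |C| = O\!\left(rn\cdot \gamma r /(r^2 n)\right) = O(\gamma)$, plus an extra $\gamma r$ if $v$ is the source or destination of some commodity; choosing $\gamma$ a sufficiently small constant makes this at most $1 + t_v\tau/2$ (recall each node occurs in exactly one pair, so $t_v=1$), while the $u^{out}\!\to v^{in}$ edges have infinite capacity. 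Hence $\bigD_{\gamma r}$ has an MCMF flow of value $\gamma r$, so in the notation of Lemma~\ref{lem:flow-lb} we have $\tau^* \ge \gamma r$; running the schedule construction from the proof of Theorem~\ref{thm:pairwise-appx} (say with $\epsilon = 1/2$) yields a schedule of throughput at least $\tfrac13\tau^*(1-\epsilon) = \Omega(r)$, as required.

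The step I expect to be the main obstacle is the probabilistic bookkeeping: we need every one of the $\Theta(1/r^2)$ cells to be well-behaved in both node count and route load simultaneously, and this is precisely why $r$ must exceed the connectivity threshold by a large constant factor $\beta$ — that is what pushes the relevant expectations above $\Theta(\log n)$ so that Chernoff bounds survive the union bound over all cells. The remaining work (fixing the cell side length so connectivity is guaranteed, handling the boundary remainders, verifying the node-capacity inequality with the right constant $\gamma$, and checking the degenerate cases where $s_i$ and $t_i$ share a cell, row, or column, which only shorten the route) is routine.
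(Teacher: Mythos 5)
Your proposal is correct in its essentials, but it reaches the theorem by a genuinely different route than the paper. The paper's proof (Lemmas~\ref{lem:gk:good} and~\ref{lem:gk:routing}) also starts from a grid decomposition and a with-high-probability ``good'' event --- every box has $\Theta(n/k^2)$ nodes and every (row, column) demand count $Y_{i,j}$ is $\Theta(n/k^2)$, which is the analogue of your cell-occupancy and cell-load concentration --- but it then builds the schedule \emph{directly}: it designates labelled core nodes in each box, pipelines packets up/down columns to the destination row and then left/right along rows, and inserts explicit rebalancing and delivery steps so that no core node ever holds more than a constant number of packets; one packet per pair is delivered every $O(k)=O(1/r)$ rounds, giving throughput $\Omega(r)$ with no reference to the flow machinery. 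You instead exhibit a fractional MCMF flow of value $\gamma r$ in $\bigD_{\gamma r}$ (L-shaped cell routes, flow split evenly over the nodes of each cell, node capacities checked against the load bound) and then invoke the rounding-plus-Shannon-multicoloring construction inside the proof of Theorem~\ref{thm:pairwise-appx}, via inequality~\eqref{eq:flowcap} and $\tau^* \ge \gamma r$, to convert it into a schedule of throughput at least $\tfrac13(1-\epsilon)\tau^* = \Omega(r)$. Note that you correctly use the internal bound~\eqref{eq:flowcap} rather than the bare $(3/2+\epsilon)$-approximation statement, since the latter alone says nothing absolute about the optimum. What each approach buys: yours reduces the theorem to a purely geometric flow-existence statement and reuses already-proved machinery, at the cost of the constant lost in the $3N$-color Shannon step (irrelevant for an $\Omega(r)$ claim) and of inheriting the hypotheses of that construction; the paper's hand-built schedule is self-contained, avoids any dependence on Section~\ref{sec:mcf}, and makes the routing and its constants concrete, but requires the careful rebalancing argument to prevent congestion at individual core nodes --- exactly the bookkeeping your even fractional splitting sidesteps. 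The remaining gaps you flag (boundary cells, degenerate pairs sharing a cell/row/column, the precise Chernoff-plus-union-bound constants driving the choice of $\beta$) are indeed routine and are handled in essentially the same way by the paper.
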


At a high level, 
our argument divides the unit square into box of side length $\approx r$.
We prove that with high probability, both nodes and pairwise demands are evenly distributed among the boxes.
This allows a schedule that efficiently moves many packets in parallel up and down columns to the row of their destination,
and then moves these packets left and right along the rows to reach their destination.
The time required for a given packet to make it to its destination is bounded by the column and row length
of $\approx 1/r$, yielding an average throughput in $\Theta(r)$.
The core technical complexity of this argument is the careful manner in which packets are moved onto
and off a set of parallel paths while avoiding more than a small amount of congestion at any point in their routing.

\iflong
Our approach is to isolate the probabilistic elements of the proof. To do so, 
we need some preliminary definitions to help structure our argument.
We begin by fixing a canonical way of covering the unit square into which
the $GK$ process places nodes with a grid.

\begin{definition}
Fix some radius $r, 0 < r \leq 1$.
An {\em $r$-grid} is a partition of the unit square into {\em boxes} of
side length $\hat r$, where $\hat r$ is the largest value such that: 
(a) $\hat r \leq r$; (b) $k=1/\hat r$ is a whole number; and (c) the distance between any
two locations in boxes that share a side is at most $r$ (i.e., $\hat r \leq r/\sqrt{5}$).
We call $\hat r$ the {\em grid radius} and $k$ the {\em grid size} of the $r$-grid.
\end{definition}

We next define some useful properties of node placements and flow set definitions
with respect to this grid structure.

\begin{definition}
Fix some even network size $n \geq 2$, and radius $r>0$.
Let $G \sim GK(n,r)$, and let $F$ be a random pairwise flow set.
Consider the $r$-grid with grid size $k$. For each $i,j \in [1,k]$,
we define the following two random variables:

\begin{itemize}
    \item $X_{i,j}$ is the number of nodes placed in the grid box in row $i$ and column $j$.
    \item $Y_{i,j}$ is the number of pairs $(s_{\ell}, \{t_{\ell}\}) \in F$ where $s_{\ell}$ is placed in column $j$ and $t_{\ell}$ is placed in row $i$.
\end{itemize}

\noindent We say $G$ and $F$ are {\em good}
if $(1/4)\cdot n/k^2 \leq X_{i,j}, Y_{i,j} \leq 4\cdot n/k^2$.
\label{def:good}
\end{definition}

Now we show that $G$ and $F$ are good with high probability as long as the radius is a sufficiently large
constant factor larger than the connectivity threshold $r_c(n)$ (see Section~\ref{sec:gk:conn}).

\begin{lemma}
There exists a constant $\beta > 1$ such that, for any sufficiently large network size $n \geq 2$ 
and radius $r \geq \beta r_c(n)$, if $G \sim GK(n,r)$ and $F$ is a random pairwise flow set, then $G$ and $F$ are good with high probability in $n$.
\label{lem:gk:good}
\end{lemma}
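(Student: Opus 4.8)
The plan is a routine concentration-and-union-bound argument: I would show that each of the $\Theta(1/r^2)$ random variables $X_{i,j}$ and $Y_{i,j}$ is sharply concentrated around a mean of size $\Omega(\log n)$, where the hidden constant can be forced as large as needed by taking $\beta$ large, and then union bound over all $i,j$. First I would fix the geometry. By the definition of the $r$-grid, its grid radius is $\hat r = 1/k$ where $k$ is the least integer with $1/k \le r/\sqrt 5$, so $\hat r = \Theta(r)$ and each grid box has area $\hat r^2 = 1/k^2 = \Theta(r^2)$. Writing $\mu = n/k^2 = n\hat r^2$, we thus have $\mu = \Theta(nr^2)$, and since $r \ge \beta r_c(n) = \beta\sqrt{\alpha\log n/n}$ this gives $\mu = \Omega(\beta^2 \log n)$. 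Hence for any target constant $C$ there is a $\beta$ and an $n_0$ with $\mu \ge C\log n$ for all $n \ge n_0$; I would fix $C$ at the very end, once the union bound dictates how large it must be.

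Next, the node counts. Fix a box in row $i$, column $j$. Under $GK(n,r)$ the $n$ nodes are placed independently, each landing in this box with probability equal to its area $1/k^2$, so $X_{i,j}$ is a sum of $n$ independent indicators with $\E[X_{i,j}] = \mu$. Applying the lower tail of Theorem~\ref{thm:chernoff} with $\delta = 1/2$ and the upper tail with $\delta = 1$ yields
\[
\Pr\!\left[ X_{i,j} \notin [\mu/2,\, 2\mu] \right] \;\le\; \exp(-\mu/8) + \exp(-\mu/3) \;\le\; 2\exp(-\mu/8) \;\le\; 2 n^{-C/8}.
\]
Since $\mu = n/k^2$, the interval $[\mu/2, 2\mu]$ lies inside $[(1/4)\,n/k^2,\, 4\,n/k^2]$, which is exactly the range the definition of \emph{good} requires for $X_{i,j}$.

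Now the demand counts. I would condition on the random pairing $F$ (the bound proved will hold for every pairwise $F$, hence unconditionally). Fix a row $i$ and a column $j$. For each of the $n/2$ pairs $(s_\ell,\{t_\ell\}) \in F$, the source $s_\ell$ falls in column $j$ with probability $1/k$ (a column has area $1/k$) and, independently since $s_\ell \ne t_\ell$, the destination $t_\ell$ falls in row $i$ with probability $1/k$; thus pair $\ell$ contributes to $Y_{i,j}$ with probability $1/k^2$. Crucially, the pairwise-capacity constraint forces the $n$ endpoints of the $n/2$ pairs to be distinct nodes, so these per-pair indicators are mutually independent across $\ell$. Hence $Y_{i,j}$ is a sum of $n/2$ independent indicators with $\E[Y_{i,j}] = \mu/2 \ge (C/2)\log n$, and the same two-sided Chernoff estimate gives $\Pr[Y_{i,j}\notin[\mu/4,\mu]] \le 2\exp(-\mu/16) \le 2n^{-C/16}$; the interval $[\mu/4,\mu]$ again lies inside $[(1/4)\,n/k^2,\,4\,n/k^2]$.

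Finally, union bound over the $k^2 \le n$ boxes (for the $X_{i,j}$) and the $k^2 \le n$ row/column pairs (for the $Y_{i,j}$): the probability that some variable escapes its target range is at most $2n \cdot 2n^{-C/16} = 4n^{1 - C/16}$. Choosing $\beta$ (and hence $C$) large enough that $C/16 > 2$ makes this $O(n^{-1})$, so $G$ and $F$ are good with high probability in $n$, which proves Lemma~\ref{lem:gk:good}. There is no real obstacle beyond bookkeeping; the only step that genuinely uses the problem structure is the independence of the per-pair indicators for $Y_{i,j}$, which rests on the pairwise-capacity requirement that no node appears in more than one flow.
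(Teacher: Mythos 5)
Your proposal is correct and follows essentially the same route as the paper's proof: the same grid geometry giving per-box mean $\Theta(nr^2)=\Omega(\beta^2\log n)$, the same decomposition of each $X_{i,j}$ and $Y_{i,j}$ into sums of independent indicators (handling the pairing by fixing $F$ first, which is exactly the paper's "pair up, then place" reformulation), the same two-sided Chernoff bounds, and a final union bound with $\beta$ chosen large enough to beat the polynomial number of boxes. No gaps worth noting.
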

\begin{proof}
Suppose that we first pair up the nodes (each pair consisting of a source and a destination) and then randomly place the nodes in the unit square.  Clearly this is equivalent to first running $GK$ and then choosing a random pairwise flow set, so we analyze this modified process here.  
%so we analyze it here. 
%One of many equivalent processes for randomly generating $G$ and $F$ is the following:
%pair up the nodes into sources and destinations in advance,
%and then randomly place the endpoints of each such pair in the unit square.
%This is the process we analyze here.
To do so, first fix an $n$ and $r \geq \beta r_c(n)$ as specified in the lemma statement,
where we will bound the specific constant $\beta$ we need later in this proof.
Consider the $r$-grid with grid radius $\hat r$ and grid size $k$.
Finally, fix a row $i$ and column $j$ from this grid.

We begin by bounding $Y_{i,j}$.
Let $F$ be our predefined set of pairwise flows (i.e., pairs of sources and destinations).
Label these pairs arbitrarily, $1,2,...,(n/2)$.
We define $Y_{i,j} = \sum_{\ell = 1}^{n/2} Z_{\ell}$, where $Z_{\ell}$ is the random indicator variable that equals
$1$ iff the source for pair $\ell$ in $F$ is placed in grid column $j$ and the destination for pair $\ell$ is placed
in grid row $i$.

Each grid row and column takes up a $1/k$ fraction of the union square.
It follows that $\Pr(Z_{\ell} = 1) = (1/k)^2$, and therefore, by linearity of expectation: 
$\E(Y_{i,j}) = \sum_{\ell=1}^{n/2} \E(Z_{\ell}) = n/(2k^2)$.
Because $Y_{i,j}$ is the sum of independent indicator random variables, 
we can apply our Chernoff forms from Theorem~\ref{thm:chernoff} to concentrate on this expectation.
In particular, if we use parameters $\mu= \E(Y_{i,j}) = n/(2k^2)$ and $\delta =1/2$, 
and also apply the loose bound $\hat r \geq r/6$, we get the following
from the lower bound form of Theorem~\ref{thm:chernoff}:

\begin{eqnarray*}
\Pr(Y_{i,j} \leq (1/2)\mu) & \leq & \exp{\left(\frac{- \mu}{8}\right)}\\
& = & \exp{\left(\frac{- n}{16k^2}\right)} \\
& = &\exp{\left(\frac{- n \hat r^2}{16}\right)} \\
& \leq & \exp{\left(\frac{- n (\beta r_c(n))^2}{96}\right)}  \\
& = & \exp{\left(\frac{- n \left(\beta \sqrt{\frac{\alpha\log{n}}{n}}\right)^2}{96}\right)}  \\
& = & \exp{\left(\frac{- n \beta^2 \frac{\alpha\log{n}}{n}}{96}\right)}  \\
& = & \exp{\left(\frac{- \beta^2 \alpha\log{n}}{96}\right)}  \\
\end{eqnarray*}

Similarly, if we instead set $\delta =1$, we get the following from the upper bound form:

\[\Pr(Y_{i,j} \geq 2\mu) \leq \exp{\left(\frac{-\mu}{3}\right)} \leq \exp{\left(\frac{- \beta^2 \alpha\log{n}}{36}\right)},\]

\noindent where the last step follows from adapting the lower bound derivation above to replace the $8$ with a $3$ in
the initial step.
For any constant $c\geq 1$,
there is a sufficiently large constant value for $\beta$,
defined independently of $n$,
such that both these probabilities are less than $n^{-c}$ (e.g., $\beta \geq \sqrt{(96c)/\alpha}$).
Call this value $\beta(c)$.

We now turn our attention to $X_{i,j}$. 
Our process for generating $G=(V,E)$ and $F$ places each node with uniform randomness in the unit square.
With this in mind, for each $u\in V$, let $Z'_u$ be the independent random indicator variable that equals $1$ iff node $u$ is
placed in grid box $(i,j)$. We can then define $X_{i,j} = \sum_{u\in V} Z'_u$.
Because each grid box has area $1/k^2$, it follows that $\Pr(Z'_u = 1) = 1/k^2$, and therefore,
by linearity of expectation, $\E(X_{i,j}) = n/k^2$.
Because $\E(X_{i,j}) = 2\E(Y_{i.j})$, and $X_{i,j}$ is also the sum of independent random indicators,
the same concentration bounds derived above for $Y_{i,j}$ still apply to $X_{i,j}$,
albeit now more loosely than before (the slightly larger expectation intensifies the concentration).

To conclude the proof, assume our goal is to end up with a failure probability less than $n^{-c}$, for some
constant $c\geq 1$.
We show that $\beta(c+3)$ is a sufficiently large definition of constant $\beta$ to satisfy the lemma
statement for this bound.

To do so, we first note that Definition~\ref{def:good} requires that every $X_{i,j}$ and $Y_{i,j}$
be within the range $[\mu/4, 4\mu]$, for $\mu = n/k^2$. Applying our above bounds
with constant $\beta(c+3)$, it follows that any given $X_{i,j}$ or $Y_{i,j}$ is within
this range with probability at least $1-n^{-(c+3)}$. 
By a union bound, the probability this fails to hold for {\em any} $X$ is less than $1/n^{c+2}$,
and the same holds for {\em any} $Y$.
A final union bound provides that the probability either the $X$ or $Y$ condition fails is
itself still less than $1/n^c$, as required.
\end{proof}

Our final result of this section proves that if $G$ and $F$ are good,
then there exists a schedule that achieves throughput in $\Omega(r)$.
To do, we describe an algorithmic process for generating this schedule.
This result is existential because the process makes use of the locations
in the unit square used by $GK$ to generate $G$. 

\begin{lemma}
Let $G \sim GK(n,r)$ for some even network size $n\geq 2$ and radius $r>0$,
and let $F$ be a random pairwise flow set.
If $G$ and $F$ are good then there exists a schedule that achieves throughput in $\Omega(r)$
with respect to $G$ and $F$.
\label{lem:gk:routing}
\end{lemma}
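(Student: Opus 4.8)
The plan is to turn the $GK$-locations into a fixed network of node-disjoint ``highway'' paths running along the columns and rows of the $r$-grid, and to pipeline packets along them; no further randomness is needed, since once the grid boxes are fixed, goodness (Definition~\ref{def:good}) deterministically guarantees $\Theta(nr^2)$ nodes in every box and $Y_{i,j}=\Theta(nr^2)$ flows that ``turn'' at box $(i,j)$ (source in column $j$, destination in row $i$), while the $r$-grid conditions guarantee that any two nodes in the same box, or in two side-sharing boxes, are within distance $r$ in $G$ --- so every edge used below is present.

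Write $k=1/\hat r=\Theta(1/r)$ for the grid size. First I would split the $\Theta(nr^2)$ nodes of each box into three pools of size $\Theta(nr^2)$ (a \emph{vertical}, a \emph{horizontal}, and an \emph{interchange} pool --- there is room because goodness lower-bounds the box population by a concrete constant times $nr^2$ and we take a small constant fraction for each pool). Using the vertical pools, carve out of each column $h=\Theta(nr^2)$ node-disjoint top-to-bottom paths, each taking one vertical-pool node from every box of that column; symmetrically carve $h$ node-disjoint left-to-right paths from the horizontal pools of each row. Consecutive nodes of such a path lie in side-sharing boxes, so each is a genuine path of $G$. A column carries $m_j:=\sum_i Y_{i,j}=\Theta(nr^2)\cdot k=\Theta(nr)$ source-flows and $m_j/h=\Theta(1/r)$, so I would assign these flows round-robin to the column's vertical highways, making each highway responsible for $O(1/r)$ flows, and likewise assign each flow, round-robin, a horizontal highway in the row of its destination. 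The flow with source in box $(a,j)$ and destination in box $(i,b)$ then follows the walk: source $\to$ its vertical highway at box $(a,j)$ $\to$ (pipelined along column $j$) $\to$ box $(i,j)$ $\to$ (via an interchange-pool node of box $(i,j)$) $\to$ its horizontal highway $\to$ (pipelined along row $i$) $\to$ box $(i,b)$ $\to$ destination, a walk of $O(k)=O(1/r)$ hops, all present in $G$.

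I would then lay out a TDMA schedule of period $\Pi=\Theta(1/r)$ that runs in parallel the standard odd/even two-round pipeline on every vertical highway and on every horizontal highway (these use disjoint pools, and within a column or row the per-round connections form a matching), together with a constant fraction of rounds reserved for source injections, interchange transfers at turn boxes, and destination extractions --- each auxiliary operation joins two nodes of one box and, within a slot, uses node-disjoint pairs, hence is a legal matching. Every highway sustains one packet per two rounds and is shared by $O(1/r)$ flows, so each flow gets a turn once per $\Theta(1/r)$ rounds; the aggregate traffic through any box is at most $m_j=\Theta(nr)$ flows at rate $\Theta(r)$, i.e.\ $\Theta(nr^2)$ packets per round, which matches a box's highway capacity of $h/2=\Theta(nr^2)$ packets per round, so the loads fit with a constant factor to spare. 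Since the pipelines have length $O(1/r)$, they prime within $O(1/r)$ rounds, after which every flow delivers $\Omega(1)$ packet per $\Pi=\Theta(1/r)$ rounds; the convergence round is $O(1/r)$ and the throughput is $\Omega(r)$.

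The step I expect to be the main obstacle is the collision-free orchestration of these pieces into one schedule: staggering the injection times along a highway that has on-ramps in several boxes so that no highway node ever holds two packets simultaneously, and scheduling the interchange transfers at each turn box $(i,j)$ --- $Y_{i,j}=\Theta(nr^2)$ of them per period, routed through the $\Theta(nr^2)$ interchange nodes there --- so that every turning flow reaches \emph{its} assigned horizontal highway within the period. Both are routine given the $\Theta(1/r)$ slack in the period length and the $\Theta(nr^2)$ slack in box capacity, but fixing the pool sizes, the number of highways, and the multiple of $k$ used for $\Pi$ so that all constants are mutually consistent (and in particular so that the guaranteed throughput is a genuine positive constant times $r$) is the delicate part of the write-up.
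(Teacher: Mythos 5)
Your construction is correct in substance, and it rests on the same foundations as the paper's proof: the $r$-grid, the goodness bounds on $X_{i,j}$ and $Y_{i,j}$, column-then-row routing, and the observation that side-sharing boxes are fully connected in $G$. Where you diverge is in the scheduling mechanism. The paper does not pipeline across packets at all: it delivers packet number $1$ of every flow in one batch of $\Theta(k)$ rounds, then packet number $2$, and so on. Within a batch it designates $n_{low}=(1/4)n/k^2$ labeled ``core'' nodes per box, links label $\ell$ in box $(i,j)$ to label $\ell$ in the adjacent box, and interleaves the up/down/left/right sweeps with explicit rebalancing (respectively delivery) steps, using $Y_{i,j}\le 16\,n_{low}$ to guarantee every core node carries only $O(1)$ packets at any time; this sidesteps entirely the part you flag as delicate, namely staggering injections and interchange transfers so that a continuously running pipeline never collides. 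Your node-disjoint highway/TDMA scheme with vertical, horizontal, and interchange pools does work --- the slack you identify ($\Theta(1/r)$ period length, $\Theta(nr^2)$ nodes per box, plus splitting highways by direction, which you should state explicitly since up- and down-traffic cannot share a pipelined path without a constant-factor fix) is enough to make the slot bookkeeping go through by a standard conveyor/slot-reservation argument --- and it buys cross-packet pipelining and hence a somewhat cleaner $O(1/r)$ convergence statement. But since the lemma only asks for throughput in the limit, the paper's batch-per-packet schedule achieves the same $\Omega(1/k)=\Omega(r)$ bound with considerably less machinery; if you keep your route, the slot-assignment and interchange orchestration you defer is exactly the part that must be written out carefully, as it is the only place where your argument is not yet a proof.
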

\begin{proof}
Fix some $n$, $r$, $G=(V,E)$, and $F$ as specified by the lemma statement.
Assume $G$ and $F$ are good.
We now construct a schedule for $G$ and $F$ that makes use of the
location that the $GK(n,r)$ process placed each node in the unit square.

In more detail, cover the unit square with an $r$-grid of grid radius $\hat r$ and grid size $k$.
By the definition of good, 
each grid box $(i,j)$ (i.e., the box in row $i$ and column $j$ of the $r$-grid)
contains at least $n_{low} = (1/4)n/k^2$ nodes and no more than $n_{high} = 4n/k^2$ nodes.
In each such $(i,j)$, choose $n_{low}$ nodes to be the {\em core}
nodes for this box. Label them arbitrarily, $1,2,...,n_{low}$.
Routing between adjacent boxes in our strategy will always take place on links
between core nodes in these boxes with the same label.

We now describe a multi-phase process that routes one packet from each source to each destination
in $O(k)$ rounds.
This process can then be repeated for each new packet, waiting for the current packet to be completely
delivered for all nodes before moving on to the next.
This will result in an overall throughput in $\Omega(1/k) = \Omega(r)$.
To simplify discussion, we will use the following notions of directionality:
{\em up}  means moving from larger to smaller row numbers,
{\em down} means smaller to larger rows,
{\em left} means larger to smaller columns, and {\em right} means smaller to larger columns.

The first phase is dedicated to routing packets up their current column in the grid 
to arrive at the {\em destination} row containing the packet's destination.
This phase only applies to packets that start below  their destination row (i.e., in a larger number row).
To do so, we first put aside $15$ rounds for non-core source nodes to send their packet to a core
node in their grid box. We assign non-core nodes to core nodes evenly.
Because there are $n_{low}$ core nodes, and at most $16n_{low}$ total nodes (leaving at most $15n_{low}$
non-core sources),
this load balancing ensures that $15$ rounds are sufficient,
and no core node has been assigned more then $16$ total packets (including its own).

We now route packets up the grid columns. To do so, the core node with label $\ell$ in grid
box $(i,j)$ (for $i>1$) forms a link with the core node with label $\ell$ in $((i-1), j)$.
This forms a {pipeline} of nodes of the same label in each column.\footnote{We assume
here the variation of the mobile telephone model in which you can have one outgoing and one incoming
link per round. If we instead assume the slightly more restrictive version where there is at most
one total link then we can easily simulate the former model at the cost of a factor of $2$ more rounds.}
Notice that $16$ rounds is enough for the core nodes in a given grid box,
to route all of their packets up to their corresponding core nodes in the grid box above.
Therefore, $16k$ rounds is enough to ensure every packet that needs to move up to get to
its destination row has successfully arrived at its destination row.

There is, however, a subtle complication that must be addressed.
Assume we are unlucky and that many (say, a super-constant number of) packets in a given column have destination row $i$,
and that they all happen to be assigned to core nodes with the same label $\ell$.
When the {\em up routing} phase as described above completes, these packets will have all successfully arrived
at row $i$, {\em but} they will only be known by core node $\ell$ in this box. Before we can successfully
route them left or right in subsequent phases, we would then have to spend too much time spreading
them out from core node $\ell$.

To avoid this complication, we add a {\em balancing} step to the up routing.
As stated above, our routing takes place in groups of $16$ rounds,
where in each such group, each box moves its current packets up to the box above.
We now introduce a rebalancing procedure in between each of these groups.
In more detail, fix any grid box $(i,j)$ and core node $\ell$.
If node $\ell$ has received {\em more} than $16$ packets with destination row $i$, 
then $\ell$ will evenly distributed these extra packets among other core nodes in its
grid box, one by one, until its count is back down to $16$.

We know this rebalancing is always possible as the definition of good
provides that $Y_{i,j} \leq 16n_{low}$,
so there is always room to rebalance packets to keep each of the $n_{low}$ 
core nodes count at $16$ or below.
Because at most $16$ new packets can arrive at a given core node in
each group of routing rounds, an additional
$16$ rounds is always sufficient to complete this rebalancing.

Combined, it follows that $O(k)$ rounds are sufficient not only to complete the up routing,
but to also ensure that packets are evenly distributed among core nodes at their destination row.
We follow this up routing phase with a symmetric down routing phase,
that routes packets that start above their destination row down to their destination row.
This requires an additional $O(k)$ rounds.
When these two phases are done, each packet is assigned to a core node
in its destination row, and no core node is assigned more than a constant
number of packets.

To complete the routing, 
we now turn our attention to moving packets across columns.
We being by using the above procedure to move packets to the left.
That is, this phase applies to packets that are in a column to the right of
their destination column.

For this phase, we replace the rebalancing steps with {\em delivery} steps.
In more detail, when a packet $p$ arrives at a core node in the grid box containing
its destination, the core node will deliver it to its destination during the next delivery step.
Because each core node can receive at most $16$ new packets per routing group, 
$16$ rounds is sufficient for the delivery step.

To conclude the routing, after routing packets right to left,
we execute a final phase that moves packet left to right.
The total time required to complete this routing of a single packet over all pairwise
flows is the time required by the four routing phases.
Because each phase requires $\Theta(k)$ rounds, the total time is $\Theta(k)$.
\end{proof}

To conclude
this section,
we note that the correctness of Theorem~\ref{thm:gk:optimal} is a direct corollary of Lemmas~\ref{lem:gk:good} and~\ref{lem:gk:routing}.

\fi

%%%%%%%%%%%%%%%%%%%%%%%%%%%%%%%%%%%%%%%%%%%%%%%%%%%%%%%%%%%%%%%%%%%%%%%%%%%%%%%%
% END gk.tex
%%%%%%%%%%%%%%%%%%%%%%%%%%%%%%%%%%%%%%%%%%%%%%%%%%%%%%%%%%%%%%%%%%%%%%%%%%%%%%%%

%%%%%%%%%%%%%%%%%%%%%%%%%%%%%%%%%%%%%%%%%%%%%%%%%%%%%%%%%%%%%%%%%%%%%%%%%%%%%%%%
% BEGIN general.tex
%%%%%%%%%%%%%%%%%%%%%%%%%%%%%%%%%%%%%%%%%%%%%%%%%%%%%%%%%%%%%%%%%%%%%%%%%%%%%%%%

\section{Broadcast Capacity}
\label{sec:bcast}

The broadcast capacity problem assumes a designated {\em source} node has an infinite sequence of packets to spread to
the entire network, implementing a one-to-all packet stream.
Formally, this version of the capacity problem constrains the flow set to only
contain a single pair of the form
$\{s, V\setminus \{s\}\}$, for some source $s\in V$. As we will show, the achievable throughput for this problem in a given network graph $G$
is strongly related to $d(G)$, the maximum degree of the minimum degree spanning tree (MDST)
for $G$ (see Section~\ref{sec:prelim}). 
\iflong
We begin in Section~\ref{sec:bcast:upper} by proving that in an arbitrary graph $G$, the achievable throughput is at most $O(1/d(G))$.
This result leverages the classical connection between graph {\em toughness} and spanning
trees discussed and extended in \iflong Section~\ref{sec:prelim}. \fi \ifshort the full version. \fi
We then prove in Section~\ref{sec:bcast:lower} that 
this bound is nearly tight by describing a distributed algorithm
that achieves throughput in $\Omega(1/d(G))$.
%The algorithm,
%which makes use of recent work on distributed MDST construction in the broadcast variation
%of the CONGEST model,
%requires $\tilde{O}(D + \sqrt{n})$ rounds to initialize this efficient routing.

We conclude in Section~\ref{sec:bcast:random} by proving that with high probability,
our algorithm from Section~\ref{sec:bcast:lower} will achieve
{\em constant} throughput in a network generated randomly by the $GK$ process.
This indicates that the mobile telephone model is well-suited for this variation of 
the capacity problem.

\fi

\subsection{A Bound on Achievable Throughput for Arbitrary Networks}
\label{sec:bcast:upper}

We establish that the maximum degree of an MDST in $G$---that is, $d(G)$---bounds
the achievable throughput, with larger values of $d(G)$ leading to lower throughput.
The bound is primarily graph theoretic: arguing a fundamental limit on the rate
at which packets can spread through a given topology. 

\iflong
The intuition for this result is the following.
Let $T$ be an MDST in $G$ of degree $d(G)$.
%with maximum degree $d(G)$.
Theorem~\ref{thm:toughness} tells us that there exists a set $S$ of {\em bridge}
nodes such that removing $S$ partitions the graphs into a set $C$ of at least $(d(G) - 2)\cdot |S|$ components.
To spread a given token to all nodes requires that it spread to all components in $C$.

Because removing $S$ creates these partitions, this spreading must pass through nodes in $S$ to conclude.
Because connections are pairwise, however, each bridge node can serve at most one component per round.
Since $|C| \geq (d(G) - 2)\cdot |S|$, it will thus require $\Omega(d(G))$ rounds to complete such a spread.
Hence the latency of the information spreading is lower bounded by the maximum degree of the MDST.  Moreover, this process does not benefit from pipelining: broadcasting $k$ packets will take $\Omega(k \cdot d(G))$ rounds.  This argument is formalized in the next theorem.
%\mhnote{This needs to be tightened. We don't want to refer to latency, and bounding "rounds" is not quite enough.}

%\mhnote{I've tightened the proof below to give an upper bound of $1/(d(G)-3)$ on the throughput.}
\fi

\begin{theorem}
Fix a connected network graph $G=(V,E)$ and broadcast flow set $F$ with source $s$.  Then every schedule achieves throughput at most $O(1/d(G))$.  
%For every routing algorithm ${\cal A}$, 
%and every assignment of random bits to ${\cal A}$ (if it is randomized),
%the best achievable throughput by ${\cal A}$ with respect to $G$ and $F$ is in $O(1/d(G))$.
\label{thm:bcast:lower}
\end{theorem}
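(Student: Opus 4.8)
The plan is to use the structural result from Theorem~\ref{thm:toughness} (the corollary of Win's theorem) to exhibit a bottleneck in any broadcast schedule. Since $G$ is connected we may assume $d(G) > 3$, as otherwise the claimed bound $O(1/d(G))$ is vacuous. Apply Theorem~\ref{thm:toughness} with $k = d(G) - 1$ (so $d(G) > k$ and $k \geq 3$): this yields a non-empty \emph{bridge set} $S \subseteq V$ such that removing $S$ leaves at least $(k-2)|S| = (d(G)-3)|S|$ connected components. Write $C = c(G \setminus S)$ for this count, and note $C = \Omega(d(G) \cdot |S|)$. Each of these components $G_1, \dots, G_C$ is disconnected from the source $s$ (unless $s$ lies in $S$ or in one particular component, but at most one component can contain $s$, so at least $C - 1 = \Omega(d(G)|S|)$ of them must receive every packet strictly through $S$).

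Next I would argue the per-packet latency lower bound. Consider any packet $p$ and any round $r$ in which $p$ is first received by some node in a component $G_j$ not containing $s$. By the path-existence requirement in the definition of a schedule, the edge in $M_r$ delivering $p$ into $G_j$ must have its tail in $S$ (it cannot come from another component, since components are pairwise non-adjacent after removing $S$). In round $r$ the matching $M_r$ restricted to edges from $S$ into the components has size at most $|S|$, since $M_r$ is a matching and each node of $S$ is matched to at most one neighbor. Hence in each round at most $|S|$ ``fresh'' components can be reached by $p$ for the first time. Since $p$ must eventually reach all $\Omega(d(G)|S|)$ source-free components, and each such component's first-reception of $p$ is charged to one of the at most $|S|$ per-round crossing edges available to packet $p$, it follows that $p$'s delivery spans at least $\Omega(d(G)|S|)/|S| = \Omega(d(G))$ rounds from the moment it first leaves $s$.

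Now I would aggregate this over packets to bound throughput, using that the argument does not pipeline. Fix a convergence round $r_0$ and any round $r \geq r_0$, and let $j = del_1(r)$ be the number of the $1$-packet such that packets $(1,1), \dots, (1,j)$ have all been delivered by round $r$. For each component $G_\ell$ not containing $s$, the total number of edges across the cut $(S, V \setminus S)$ used over rounds $1, \dots, r$ is at most $|S| \cdot r$, since each round contributes a matching. But to deliver all of packets $1, \dots, j$ to all of the $\Omega(d(G)|S|)$ source-free components, at least one cut edge must be ``spent'' for each (packet, component) pair, i.e.\ we need at least $j \cdot \Omega(d(G)|S|)$ cut-edge uses in total across rounds $1, \dots, r$. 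Wait---more carefully, each cut edge used in a round carries one packet into one component, so the number of (packet, component) first-deliveries that can be accomplished in $r$ rounds is at most $|S| \cdot r$. Therefore $j \cdot \Omega(d(G)|S|) \leq |S| \cdot r$, giving $j = del_1(r) \leq O(r / d(G))$, hence $del_1(r)/r \leq O(1/d(G))$, which is exactly the throughput bound.

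The main obstacle is the last step: being careful about what exactly is being counted across the cut. The naive ``latency is $\Omega(d(G))$ per packet, so throughput is $O(1/d(G))$'' inference is not automatically valid because packets could in principle be pipelined through the bridge set. The correct and robust accounting is the counting argument above: total cut-edge usage over $r$ rounds is at most $|S| \cdot r$ (matching constraint, one packet per edge per round), while delivering $j$ packets to $\Omega(d(G)|S|)$ components requires that many distinct cut-crossing events; dividing gives the bound with no appeal to pipelining at all. I would want to double-check the corner cases: $s \in S$ (then all $C$ components are source-free, only helps), and whether ``$C-1$ vs $C$'' and the $(k-2)$ vs $(k-3)$ slack affect only the hidden constant (they do). The graph-theoretic input (Theorem~\ref{thm:toughness}) does all the heavy lifting; the scheduling side is then a clean matching/counting argument.
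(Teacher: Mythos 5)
Your proposal is correct and follows essentially the same route as the paper's proof: apply Theorem~\ref{thm:toughness} with $k = d(G)-1$ to obtain the bridge set $S$ with $\Omega(d(G)\cdot|S|)$ components, observe that every first reception of a packet in a source-free component must cross from $S$ and that the matching constraint allows at most $|S|$ such crossing events per round, and then aggregate these first-reception events over all delivered packets to bound $del_1(r)/r = O(1/d(G))$. Your explicit counting step to rule out any gain from pipelining is exactly the accounting the paper uses (it charges $i\cdot|C|$ arrival events against an $|S|$-per-round budget), so no substantive difference remains.
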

% \iflong
\begin{proof}
Fix some $G=(V,E)$,  $s\in V$, and ${\cal A}$, as specified by the theorem statement.
If $d(G) \leq 4$ then the theorem is trivially true as all throughput values are in $O(1)$.
Assume therefore that $d(G) > 4$. This allows us to apply Theorem~\ref{thm:toughness} for $k=d(G)-1$,
which establishes that
there exists a non-empty subset $S\subset V$ such that $c(G\setminus S) > q\cdot |S|$,
for $q=k-2=d(G)-3>1$ (where, as defined in Section~\ref{sec:prelim}, $c(G \setminus S)$ is
the number of connected components after removing nodes in $S$ from graph $G$).

Let $C$ be the set of components in $G\setminus S$ that do not include the source $s$.
Fix a packet $t$ spread by $s$.
We say $t$ {\em arrives} at  $C_i \in C$ in round $r \geq 1$,
if this is the first round in which a node in $C_i$ receives packet $t$.
In this case, $t$ must have been previously received
by some bridge node in $S$ that is adjacent to $C_i$.
This holds because if $t$ can make it from $s$'s component to $C_i$ without passing through
a node in $S$, then removing $S$ would not  disconnect $C_i$.

Fix any packet count $i\geq 1$.
Each packet requires $|C| = c(G\setminus S) -1 \ge q|S|$ arrival events before it completes spreading.
As we established above, each arrival event requires a given node in $S$ to receive
the given packet. Because each node in $S$ can receive at most one packet per
round, there are at most $|S|$ arrival events per round in the network.

Putting together these pieces,
let $T_i$ be the number of rounds required to spread $i$ packets.
We can lower bound this value as:
\[ T_i \ge \frac{i\cdot |C|}{|S|} = \frac{i(q|S|)}{|S|} 
= iq\ . \]
%\[ T_i = \frac{i\cdot |C|}{|S|} = \frac{i(q|S|-1)}{|S|} \stackrel{q|S| -1 \geq (q|S|)/2}{\geq}  \frac{i\cdot q}{2}. \]
%
\noindent It follows that for every schedule, 
and every $i$,
at least $T_i$ rounds are required to spread $i$ packets---yielding
a throughput upper bounded by $\frac{i}{T_i} \leq \frac{i}{i\cdot q} = 1/q = 1/(d(G)-3)$, which yields the theorem.
\end{proof}
% \fi
%\bigskip

\subsection{An Optimal Routing Algorithm for Arbitrary Networks}
\label{sec:bcast:lower}

Here we describe a routing algorithm that achieves broadcast capacity throughput in \linebreak$\Omega(1/d(G))$,
when executed in a connected graph $G$.  The high-level idea is to first construct an MDST $T$ in the graph $G$.  We then edge color $T$ using $O(d(G))$ colors, and use this coloring to simulate
the standard CONGEST model, parameterized so that a constant number of packets can fit within its bandwidth limit.
We analyze a straightforward pipelining flooding algorithm for the CONGEST model that converges to constant
throughput. When combined with our simulator, which requires $O(d(G))$ real rounds to simulate each CONGEST
round, the result is a solution that achieves an average latency of $O(d(G))$ rounds per packet,
providing the claimed $\Omega(1/d(G))$ throughput.

As in the pairwise setting, we can do this in a centralized fashion at the cost of a large convergence time (in particular, it takes up to $O(n^2)$ rounds to gather the graph topology locally before we can run a centralized algorithm).    
In order to decrease the convergence time, we \iflong  subsequently \fi describe \ifshort in the full version \fi a distributed
version of this strategy that still converges
to an optimal $\Omega(1/d(G))$ throughput in $O(n^2)$ rounds,
but guarantees to converge to at least $\Omega\left(\frac{1}{d(G) + \log{n}}\right)$ throughput
in  $\tilde{O}(D(T)\cdot d(G) + \sqrt{n})$ rounds,
where $D(T) \leq n$ is the diameter of a spanning tree $T$ built by the algorithm and $\tilde O(\cdot)$ suppresses polylog$(n)$ factors. 

\ifshort
Formally, we prove the following theorem:
\begin{theorem}
There exists a (distributed) algorithm which, when executed in a connected network topology $G=(V,E)$ of size $n=|V|$, with a broadcast capacity flow set with
source $s\in V$, achieves throughput in $\Omega(1/(d(G) +\log n))$ with convergence round $\tilde O(n \cdot d(G))$ and achieves throughput in $\Omega(1/d(G))$ with convergence round $O(n^2)$.
\label{thm:bcast:lower:short}
\end{theorem}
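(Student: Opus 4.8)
The plan is to prove Theorem~\ref{thm:bcast:lower:short} by constructing a layered algorithm: a low-degree spanning tree, a TDMA schedule derived from an edge coloring of that tree, a CONGEST simulator running on top of the schedule, and finally a pipelined flooding procedure running in the simulated CONGEST model. First I would invoke an existing (centralized or distributed) MDST-approximation algorithm—e.g.\ the constructions of~\cite{fr94,dinitz:2019}—to build a spanning tree $T$ of $G$ with maximum degree $k \in \Theta(d(G))$. Since $T$ has maximum degree $k$, by Vizing's theorem its edges can be properly colored with $k+1 \in \Theta(d(G))$ colors, and such a coloring is computable in polynomial time. I would then use this edge coloring as a TDMA schedule of length $\Theta(d(G))$: in the round assigned to color $c$, every tree edge with color $c$ is activated as a mobile-telephone connection. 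Because a proper edge coloring assigns distinct colors to edges sharing an endpoint, the set of activated edges in each slot is a matching, so this is a legal schedule; and over one period of $\Theta(d(G))$ real rounds, every tree edge is activated at least once, letting each node exchange a constant number of packets with each of its (up to $d(G)$) tree neighbors. This is exactly a simulation of one round of CONGEST on $T$ at a cost of $O(d(G))$ real rounds.

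Next I would layer pipelined flooding on top of the simulated CONGEST model on $T$. Root $T$ at the source $s$; in each simulated CONGEST round every node forwards the next undelivered packet (in order of packet number) down to all its children (and, for robustness, to its parent). Standard pipelining analysis shows that packet $j$ reaches every node within $O(D(T) + j)$ simulated CONGEST rounds, where $D(T)$ is the diameter of $T$; hence after a convergence period of $O(D(T))$ simulated rounds, one packet is delivered to the whole network per simulated round, i.e.\ the CONGEST-level throughput is $\Omega(1)$. Translating back through the simulator, which costs $\Theta(d(G))$ real rounds per simulated round, gives real throughput $\Omega(1/d(G))$ with convergence round $O(D(T)\cdot d(G)) = O(n \cdot d(G)) = O(n^2)$. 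This establishes the $\Omega(1/d(G))$ / $O(n^2)$ half of the theorem for the centralized setting; for the fully distributed version, one also runs the distributed MDST construction and distributed edge-coloring/TDMA-setup as a preprocessing phase.

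The distributed part—getting the stronger convergence-round bound $\tilde O(n\cdot d(G))$ at the (small) cost of weakening throughput to $\Omega(1/(d(G)+\log n))$—is where the real work lies, and I expect it to be the main obstacle. The issue is that distributedly constructing an MDST from scratch, together with distributedly edge-coloring the tree and agreeing on a TDMA schedule, must be done without the $O(n^2)$ topology-gathering step. I would proceed in stages: (i) run the distributed MDST approximation of~\cite{dinitz:2019}, which produces an $O(d(G)+\log n)$-degree spanning tree in $\tilde O(D(T)\cdot d(G) + \sqrt n)$ rounds—this is the source of both the $\log n$ slack in the degree and the $\sqrt n$ term; (ii) distributedly compute a proper edge coloring of the resulting tree with $O(d(G)+\log n)$ colors, which can be done in $\mathrm{polylog}(n)$ rounds in the CONGEST-like setting since trees (bounded arboricity graphs) admit fast distributed edge coloring, all within the $\tilde O(\cdot)$ budget; (iii) have nodes adopt the corresponding TDMA schedule and begin the pipelined flooding. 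The throughput then becomes $\Omega(1/(\text{tree degree})) = \Omega(1/(d(G)+\log n))$, and the convergence round is $\tilde O(D(T)\cdot d(G) + \sqrt n) + O(D(T)\cdot(d(G)+\log n)) = \tilde O(n \cdot d(G))$, since $D(T)\le n$. Finally I would note that once the distributed flooding is running, it is cheap to also kick off the $O(n^2)$-round topology-gathering in the background and switch over to the optimal centralized schedule after $O(n^2)$ rounds, which recovers the $\Omega(1/d(G))$ throughput and yields the second clause of the theorem; the two clauses together give the trade-off claimed in the statement. The delicate points to get right are: verifying each TDMA slot is genuinely a matching under the mobile-telephone connection restriction (one incoming plus one outgoing, or a factor-2 loss if only one connection total is allowed), bounding the distributed setup costs so they are dominated by $\tilde O(n\cdot d(G))$, and checking that the pipelining argument survives the per-round packet-count constant of the model.
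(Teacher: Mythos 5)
Your overall route is the same as the paper's: build an approximate MDST (F\"urer--Raghavachari centrally, Dinitz et~al.\ distributedly), edge color the tree with $\Theta(\text{tree degree})$ colors, use the color classes as a TDMA schedule of matchings to simulate CONGEST at a cost of $\Theta(d)$ real rounds per simulated round, run pipelined flooding from $s$, and obtain the two-stage trade-off by starting from the distributed tree and switching to the $(d(G)+1)$-degree tree after an $O(n^2)$-round background topology-gathering phase. All of that matches the paper's construction and accounting (pipeline fill $O(D(T)\cdot d(G))$, MDST construction $\tilde O(D+\sqrt n)$, total $\tilde O(n\cdot d(G))$, then $O(n^2)$ for optimality).

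The one genuine gap is step (ii) of your distributed phase: you assert that the tree can be distributedly edge colored with $O(d(G)+\log n)$ colors in $\mathrm{polylog}(n)$ rounds because trees/bounded-arboricity graphs ``admit fast distributed edge coloring.'' Those fast colorings are LOCAL-style results that rely on large messages (e.g., exchanging palettes or simulating $\Theta(\Delta)$ virtual nodes); in the mobile telephone model the only pre-connection communication is an $O(\log n)$-bit advertisement, which is equivalent to broadcast-CONGEST with $O(\log n)$ bandwidth, and the paper explicitly argues that in this regime $\Omega(\Delta)$ rounds are unavoidable for such coloring activities. The paper closes this hole with a bespoke routine (EdgeColor-MTM): repeatedly compute a maximal matching via Israeli--Itai (which needs only a constant number of identifiers per advertisement), assign each matching a fresh color, and remove it; this yields a $(2\Delta-1)$-edge coloring in $O(\Delta\log n)$ rounds w.h.p., together with a convergecast validity check since the routine is Monte Carlo. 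So as written, your proposal rests on an unsupported primitive; the good news is that substituting the paper's $O(\Delta\log n)$-round coloring does not change your final bounds, since $O((d(G)+\log n)\log n)$ is absorbed by the $\tilde O(\cdot)$ and dominated by the $D(T)\cdot d(G)$ pipeline-fill term. (A small bookkeeping slip, harmless to the totals: the $\tilde O(D+\sqrt n)$ cost is for the MDST construction of Dinitz et~al.\ alone; the $D(T)\cdot d(G)$ contribution to the convergence round comes from filling the pipeline, not from the tree construction.)
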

\fi

% Our general strategy is the following.
% We first construct an MDST $T$ in the graph $G$ by adapting existing methods to the mobile telephone model.
% We then edge color $T$ using $O(d(G))$ colors, using a coloring algorithm adapted to the constraints
% of this model.
% We can now use this coloring to simulate
% the standard CONGEST model, parameterized so that a constant number of packets can fit within its bandwidth limit.
% We analyze a straightforward pipelining flooding algorithm for the CONGEST model that converges to constant
% throughput. When combined with our simulator, which requires $O(d(G))$ real rounds to simulate each CONGEST
% round, the result is a solution that achieves an average latency of $O(d(G))$ rounds per packet,
% providing the claimed $\Omega(1/d(G))$ throughput.

\iflong
\paragraph{Edge Coloring in the Mobile Telephone Model.}
We begin by formally defining an edge coloring:

\begin{definition}
Fix an undirected graph $G=(V,E)$ and palette size $c\geq q$.  A {\em $c$-edge coloring} of $G$
is a function $\pi: E \rightarrow c$ that satisfies the following: 
if $e_1,e_2\in E$ are adjacent then $\pi(e_1) \neq \pi(e_2)$.
\end{definition}

Let $\Delta$ be the maximum degree of $G$. 
Clearly, an edge coloring requires at least $\Delta$ colors.
Vizing showed that this trivial bound is close to optimal
by proving that every graph admits a $(\Delta+1)$-coloring~\cite{vizing}.

Achieving a $(\Delta+1)$-coloring with a centralized algorithm is straightforward.
Because we will also consider distributed broadcast algorithms, however,
we must also discuss how to produce an efficient edge coloring in a distributed manner
in the mobile telephone model.

One of the first {\em distributed} edge coloring algorithms is described
in Luby's seminal paper on the maximal independent set (MIS) problem~\cite{luby}.
He produces a $(2\Delta - 1)$-edge coloring of a graph in $O(\log{n})$ rounds, with high probability,
in the LOCAL model of distributed computing by performing a $(\Delta+1)$-vertex coloring
of the line graph of $G$.
%Subsequent work on this problem has aimed to reduce the palette size (e.g.,~\cite{coloring1}),
%reduce the time complexity under some conditions (e.g.,~\cite{coloring2}),
%or remove the need for randomization (e.g.,~\cite{coloring3}).
%

We cannot, however, directly run this (or related) distributed coloring strategies in the mobile telephone
model as their efficient time complexities heavily leverage the property of the LOCAL
model that allows unbounded message sizes.\footnote{In Luby's vertex coloring subroutine, for example, each node is responsible for simulating $\Theta(\Delta)$ virtual nodes during an execution of an MIS algorithm. Each node must send an MIS message on behalf of each of its virtual nodes, requiring at least $\Omega(\Delta)$ bits. The message size grows larger when this strategy is applied to the line graph of the original graph. More recent solutions require, at the very least, that nodes frequently describe their current palette of used or unused colors, which also requires $\Omega(\Delta)$ bits.}
Our distributed broadcast capacity algorithm
will need to execute the distributed coloring using the $O(\log{n})$-bit advertisement
tags allowed by our model---a challenge that is equivalent to edge coloring in the broadcast-CONGEST setting with
an $O(\log{n})$ bandwidth limit.
Each broadcast message is therefore only large enough to describe a constant number of colors and/or nodes. 

\begin{figure}
  \begin{center}
   \begin{tabular}{|l|}
   \hline
   {\bf EdgeColor-MTM$(\Delta)$}\\
   \hline
    {\bf for} $i \gets 1$ to $2\Delta-1$ \\
    $\ \ \ \ \ $ {\bf construct} a maximal matching $M$ using II\\
     $\ \ \ \ \ $ {\bf color} edges in $M$ with color $i$  \\
     $\ \ \ \ \ \ \ \ $  (i.e., for each $e\in M$ set $\pi(e) \gets i$) \\
      $\ \ \ \ \ $ {\bf remove} edges in $M$\\
   {\bf return} $\pi$ \\
   \hline
   \end{tabular}
     \end{center}
  \caption{Edge coloring strategy for the mobile telephone model. Notice all actual distributed
  coordination occurs during the maximal matching step which uses the II algorithm
  due to Israeli and Itai~\cite{israeli}.}
  \label{fig:color}
\end{figure}

Notice that the small broadcast messages in our setting implies that a $\Omega(\Delta)$ bound is unavoidable,
as these many rounds are required for even basic coloring activities like describing your current used/unused palette,
or assigning a color to each neighbor. 
On the positive side, 
the necessity of a slower bound enables us to explore simpler solutions.
In particular, we propose the strategy summarized in Figure~\ref{fig:color},
in which nodes repeatedly construct a maximal matching, 
coloring the edges in the current matching with a new color and then removing them from consideration for future
matchings.

As we establish, this strategy always terminates in at most $2\Delta-1$ matchings (creating a palette of the same 
size), and the $O(\log{n})$-round maximal matching algorithm of Israeli and Itai~\cite{israeli} is easily adapted
to work with the small advertisement tags in our model (it requires nodes to broadcast,
at most, a constant number of identifiers per round). The result is a randomized
$(2\Delta-1)$-edge coloring algorithm that works in $O(\Delta\log{n})$ rounds, with high probability.\footnote{The high probability
in the maximal matching algorithm is on the time complexity. Formally, we run the maximal matching algorithm for a fixed duration
of rounds. With low probability, these rounds are not enough for one of the maximal matchings to succeed,
potentially resulting in an incomplete edge coloring. Our edge coloring strategy is therefore
a Monte Carlo algorithm, which will simplify its later use as a subroutine in a larger distributed system.}
Formally:

\begin{theorem} \label{thm:edge-color}
%When run in a graph $G$ with $n$ nodes and maximum degree no more than $\Delta$,
% MMH: Redundant IMHO
The {\em EdgeColor-MTM$(\Delta)$} algorithm produces a $(2\Delta-1)$-edge coloring in $O(\Delta\log{n})$
rounds, with high probability in $n$.
\end{theorem}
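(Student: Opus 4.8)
The plan is to analyze the {\em EdgeColor-MTM} algorithm by separately establishing its two guarantees: correctness (the output is a valid $(2\Delta-1)$-edge coloring) and time complexity ($O(\Delta \log n)$ rounds, w.h.p.). For correctness, I would argue by a simple invariant maintained across the loop iterations. Let $G_i$ denote the graph of uncolored edges at the start of iteration $i$ (so $G_1 = G$). The key observation is that each iteration extracts a {\em maximal} matching $M$ from $G_i$ and removes it, so every vertex whose degree in $G_i$ is at least $1$ has at least one incident edge removed — wait, that's not quite what maximality gives directly, so the cleaner route is: maximality of $M$ in $G_i$ means that for every edge $e \in G_i$, at least one endpoint of $e$ is matched by $M$. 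Hence if a vertex $v$ has degree $\delta_i(v) \ge 1$ in $G_i$ but is {\em unmatched} by $M$, then all of its $\delta_i(v)$ incident edges have their other endpoint matched, but that does not immediately drop $v$'s degree. The right statement to prove by induction is: after iteration $i$, the maximum degree of $G_{i+1}$ is at most $\Delta - \lceil i/?? \rceil$... Actually the standard bound is cleaner: I claim $\Delta(G_{i+1}) \le \Delta(G_i) - 1$ whenever $\Delta(G_i) \ge 1$ is {\em not} guaranteed either. Let me restate the genuinely correct claim below.

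The correct and simplest argument: in any graph $H$, repeatedly removing a maximal matching reduces the maximum degree by at least $1$ every {\em two} iterations, but in fact for edge coloring the sharp fact is that a maximal matching in $H$ hits every edge, so after removing it, any vertex that still has an uncolored incident edge saw at least one neighbor get matched — this is the content behind the classical $(2\Delta-1)$ bound, and I would cite/reprove it as: greedily, each edge $e=\{u,v\}$ gets the first color not used by the (at most $\Delta-1$) other edges at $u$ and the (at most $\Delta-1$) other edges at $v$, so $2\Delta-1$ colors suffice; the matching-based procedure is exactly a parallel realization of this greedy schedule, so it also terminates within $2\Delta-1$ rounds of the outer loop and never assigns the same color to adjacent edges (edges in a common matching are non-adjacent by definition of matching, and edges in different matchings get different color indices). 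This gives a valid $(2\Delta-1)$-edge coloring, and I would spell out the non-adjacency/termination points in two or three sentences.

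For the time bound, I would invoke the Israeli--Itai maximal matching algorithm~\cite{israeli}, noting (as the surrounding text already argues) that it runs in $O(\log n)$ rounds with high probability and that each of its rounds requires each node to broadcast only $O(1)$ identifiers, hence it fits within the $O(\log n)$-bit advertisement budget of the mobile telephone model; I would treat this adaptation as routine given the model's scan phase. Running $2\Delta - 1$ invocations in sequence, each costing $O(\log n)$ rounds, yields $O(\Delta \log n)$ rounds total. For the "with high probability in $n$" claim, the main subtlety — and the step I expect to be the chief obstacle — is the probability bookkeeping across $2\Delta - 1 = O(n)$ sequential matching computations: I would run each invocation for a fixed $c \log n$ rounds chosen so that a single invocation fails with probability at most $n^{-c'}$ for a sufficiently large constant $c'$, then take a union bound over the $O(n)$ invocations to conclude that all succeed with probability $1 - n^{-(c'-1)}$; picking $c'$ large enough absorbs the loss and gives the desired high-probability guarantee. (As the footnote in the text anticipates, on the low-probability failure event the coloring may be incomplete, making this a Monte Carlo algorithm — I would note this explicitly rather than attempt to make it Las Vegas.) The only real care needed is ensuring independence or at least a clean union bound across invocations, which follows since we simply fix the round budget in advance and treat each invocation's randomness separately.
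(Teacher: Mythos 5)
Your proposal is correct and follows essentially the same route as the paper: properness is immediate because each color class is a matching with distinct colors across iterations, and the $2\Delta-1$ bound is exactly the paper's counting argument once your ``parallel realization of greedy'' phrasing is stated directly --- whenever an uncolored edge $e=\{u,v\}$ is skipped by a maximal matching, maximality forces one of its at most $2(\Delta-1)$ adjacent edges to be colored and removed, so $e$ must be matched by iteration $2\Delta-1$. The runtime analysis (a fixed $O(\log n)$ round budget per Israeli--Itai invocation, a union bound over the $O(n)$ invocations, and the resulting Monte Carlo guarantee) coincides with the paper's; the false starts about degree reduction are unnecessary and can simply be dropped.
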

\begin{proof}
It follows directly from the definition of the algorithm that no two adjacent edges are colored the same color,
as this would require two adjacent edges to be included in the same matching.
It is sufficient, therefore, to show that $2\Delta-1$ maximal matchings are sufficient to cover every edge in $E$.

To see why this is true, fix some edge $(u,v)$ in $G$.
The only event that can prevent $(u,v)$ from being included in a given maximal matching is if at least one other
edge adjacent to $u$ or $v$ is included in the matching.
There are at most $2(\Delta-1)$ such other edges,
so $(u,v)$ must be matched after at most $2(\Delta-1)+1 = 2\Delta -1$ matchings.

The high probability comes from the Israeli and Itai maximal matching algorithm,
which always produces a maximal matching, but terminates in $O(\log{n})$ rounds with high probability.
By a union bound,  with high probability all $2\Delta-1 = O(n)$ instances of the algorithm
terminate in time. 
\end{proof}

\begin{figure}
  \begin{center}
   \begin{tabular}{|l|}
   \hline
   {\bf SB($s$)}\\
   \hline
   {\bf construct} an approximate MDST $T$ \\
   %
%   {\bf construct} a BFS tree $B$ in $G$ rooted at $s$ \\
%   {\bf construct} a BFS tree in $T$ rooted at $s$ \\
   %
   {use} $T$ to {\bf convergecast} and {\bf broadcast} the max degree $d$ of $T$ \\
%   {use} BFS tree to {\bf convergecast} and {\bf broadcast} max degree $d$ of $T$ \\
   %
   {\bf color} the edges in $T$ using {\em EdgeColor-MTM$(d)$}\\
   (check validity of coloring and repeat if problem found)\\
   {\bf use} edge colors to simulate the following CONGEST strategy:\\
    $\ \ \ \ $ {\bf pipeline} message floods from $s$ in $T$\\
   \hline
   \end{tabular}
     \end{center}
  \caption{Broadcast strategy for a given source node $s$. The final step requires
  nodes to flood messages to their children, participating in a new flood in each round,
  which requires a simulation of the CONGEST model
  in which the number of connections at each node in each round is unrestricted.}
  \label{fig:bcast}
\end{figure}

\paragraph{A Tight Broadcast Capacity Algorithm.} 
In Figure~\ref{fig:bcast},
we describe our streaming broadcast strategy: we create an MDST, edge color it, and then use this edge coloring to simulate pipelined flooding.  
%\mdnote{As written the algorithm talk about a BFS tree on top of a MDST, which is weird since an MDST is already a tree.  Am I missing something?  Why do the ``construct a BFS tree" step?} 
%\mhnote{This was a mistake; I've rewritten to be an BFS tree in $G$.} 
%\cnnote{I just fixed another mistake. The algorithm had the messages pipelined in the BFS tree, but the messages should be pipelined in $T$. Fixed that in the algorithm description. Also fixed the remaining leftover references to u-MTM (which is now CONGEST)} 
%\mdnote{I don't understand why we're bothing with a BFS tree at all.  Why not just do everything in the MDST?}
In a centralized setting, we use the best-known approximation for MDST~\cite{fr94}, which gives a spanning tree $T$ with max degree $d(G) + 1$.  We also use the centralized edge-coloring strategy for $T$ that uses only $d + 1$ colors (which is easy on trees).  

In the distributed setting, we assume the MDST algorithm is a Las Vegas algorithm that terminates with a tree with a degree in $O(d(G))$
in $f(n)$ rounds, for some complexity function $f(n)$ that we discuss below,
with high probability in $n$ (and with probability $1$ in the limit).  We also use the edge coloring algorithm from above as a subroutine.
As described earlier, {\em EdgeColor-MTM} is a Monte Carlo algorithm.
With low probability it can fail to color all edges.
The validity checking step of our streaming broadcast algorithm simply checks for failures with a convergecast
on the tree. If no problems are reported, $s$ can broadcast a message telling the network to proceed.

Once the edges are colored with $c=2d-1$ colors (or $c = d+1$ colors for the centralized algorithm),
we can easily simulate the CONGEST model using $c$ real rounds for each simulated round.  To simulate one round of CONGEST, we just cycle through the the $c$ colors, allowing, for each color $i$, 
all edges colored $i$ to connect. By the definition of edge coloring, all edges with the same color form a matching, so in $c$ rounds we can simulate one round of CONGEST (where every node sends a message to \emph{all} of its neighbors rather than just one).  

The pipeline flood we run on this simulation is the simple strategy in which $s$ floods messages
down the MDST tree, starting a new flood in each round.
The result is a pipeline of floods in which nodes receive a new message in every round.
Because each simulated round requires $c = \Theta(d)$
rounds, and $d$ is the maximum degree of an approximate MDST,
the result is a throughput that converges to $\Omega(1/d(G))$, as needed.
Formally:

\begin{theorem}
When executed in a connected network topology $G=(V,E)$ of size $n=|V|$, with a broadcast capacity flow set with
source $s\in V$,
with high probability in $n$: the {\em SB$(s)$} algorithm achieves a throughput in $\Omega(1/d(G))$ with 
respect to $G$ and $F$.
\label{thm:bcast}
\end{theorem}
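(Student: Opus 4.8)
The plan is to bound, for each broadcast packet $p_j$ (the $j$-th packet injected by the source), the real round by which $p_j$ has reached every node of $V\setminus\{s\}$, and then invert this relation to read off the throughput. The run of SB$(s)$ decomposes into a one-time \emph{setup phase} of finite length (length $\tilde O(\mathrm{poly}(n))$ with high probability) followed by a \emph{steady state} in which a new packet is delivered network-wide every $\Theta(d(G))$ real rounds. Since the throughput definition of Section~\ref{sec:problem} only constrains $del_1(r)/r$ beyond an arbitrarily large convergence round, the setup cost is irrelevant to the value obtained, and essentially all the work is in the steady-state analysis.

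For the setup phase: SB$(s)$ first builds an approximate MDST $T$, which has maximum degree $d = O(d(G))$ (in the centralized case exactly $d(G)+1$), with high probability. A convergecast/broadcast on $T$ propagates $d$ to all nodes, which then run EdgeColor-MTM$(d)$ with this common parameter; by Theorem~\ref{thm:edge-color} this yields a valid $(2d-1)$-edge coloring of $T$ in $O(d\log n)$ rounds with high probability. The validity-check convergecast in Figure~\ref{fig:bcast} detects the low-probability failure and triggers a retry, so the coloring step succeeds with probability $1$ and, with high probability, on the first attempt. Hence, with high probability, after some finite $T_0 = \tilde O(\mathrm{poly}(n))$ rounds the algorithm possesses a spanning tree $T$ of degree $\Theta(d(G))$ together with a proper edge coloring using $c := 2d-1 = \Theta(d(G))$ colors.

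Next, the steady-state simulation and pipeline. Root $T$ at $s$ and orient each tree edge parent$\to$child. For color $i$, the color-$i$ edges form a matching $M_i$ of $T$; in a single real round we let the parent endpoint of each $\{u,v\}\in M_i$ connect to the child and push one packet. Since $M_i$ is a matching, every node initiates at most one connection and accepts at most one, so this is a legal mobile-telephone round (consistent even with the single-connection variant up to a factor $2$). Cycling through the $c$ colors therefore realizes, in $c$ real rounds, one \emph{virtual round} in which every node forwards a packet to all of its children simultaneously. On top of this we run pipelined flooding: $s$ injects $p_j$ in virtual round $j$, and each node forwards to all children, in each virtual round, the packet it received in the previous virtual round. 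An induction on tree depth shows a node at depth $\ell\ge 1$ receives exactly one packet per virtual round in FIFO order, namely $p_j$ in virtual round $j+\ell-1$; in particular no node ever has two packets competing for the same outgoing edge, so the pipeline never stalls, and $p_j$ is received by all of $V\setminus\{s\}$ by the end of virtual round $j+D-1$, where $D\le D(T)\le n-1$ is the depth of $T$ from $s$. Translating back, $p_j$ is delivered by real round $T_0 + c\,(j+D-1)$.

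Finally, the conclusion: for all $r\ge T_0$ this gives $del_1(r)\ge (r-T_0)/c - D$, hence
\[
\frac{del_1(r)}{r}\;\ge\;\frac{1}{c}-\frac{T_0/c + D}{r}\;\ge\;\frac{1}{2c}\qquad\text{for all }r\ge r_0:=2(T_0+cD),
\]
and $1/(2c)=\Omega(1/d(G))$ since $c=\Theta(d(G))$. (If $d(G)=O(1)$ the statement is vacuous, as every throughput is $O(1)$.) All randomness lives in the MDST construction and the edge coloring, each succeeding with high probability in $n$, so a union bound yields the bound with high probability. I expect the main obstacle to be not any single deep step but the careful handling of the setup phase: verifying that the edge-coloring-based CONGEST simulation is faithful under the model's matching constraint, that the Monte-Carlo coloring with its validity-check/retry loop terminates almost surely (so that $T_0$ is finite and the limiting argument is valid) while also being fast with high probability, and that the pipeline over $T$ introduces no congestion beyond the unavoidable $\Theta(d(G))$ simulation slowdown.
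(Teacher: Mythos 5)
Your proposal is correct and follows essentially the same route as the paper's (much terser) proof: build an approximate MDST, edge color it, use the color classes as matchings to simulate CONGEST in $\Theta(d(G))$ real rounds per virtual round, run pipelined flooding from $s$, and amortize the one-time setup cost via the convergence-round slack in the throughput definition; your contribution is simply to spell out the depth-induction for the pipeline and the explicit $del_1(r)/r \ge 1/(2c)$ calculation. (One small aside: the case $d(G)=O(1)$ is not ``vacuous''---constant throughput still has to be achieved---but your steady-state bound $1/(2c)$ already covers it, so nothing is lost.)
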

\begin{proof}
With high probability in $n$, we successful construct an MDST with maximum degree in $O(d(G))$,
and successfully color the edges with $O(d(G))$ colors.
Once this initialization is complete, the claimed throughput is achieved
once sufficient packet deliveries have passed to amortize the setup costs of the MDST, edge coloring, and
pipeline initialization.
\end{proof}

\paragraph{Convergence Time.}
To understand the convergence round of $SB(s)$ (that is, how fast it converges
to its claimed throughput), we must consider the three setup costs the algorithm
pays before converging to its eventual throughput: (1) MDST setup; (2) edge coloring;
(3) time required to fill the pipeline.
Tackling these in reverse order, the third requires $D(T)$ CONGEST rounds 
(where $T$ is the MDST tree built by the algorithm), which works out to $O(D(T)\cdot d(G))$ real rounds,
and (2) is $O(\Delta(T)\log{n})$ rounds with high probability by Theorem~\ref{thm:edge-color} (where $\Delta(T)$ is the maximum degree
of $T$). We note that $d(G) = \tilde \Theta(\Delta(T))$, so the third cost dominates the second cost when ignoring log factors.

The cost of (1) depends on the algorithm deployed.
The centralized MDST algorithm due to F{\"u}rer and Raghavachari~\cite{fr94}, generates a tree with maximum
degree $d(G) + 1$ in a polynomial number of computational steps. %rounds \mdnote{not really rounds -- computation time.  Not sure how to compare the two}. 
In our distributed setting,
this algorithm can be deployed by spending $O(n^2)$ rounds to gather
the entire network topology by flooding edge descriptions using the mobile telephone model advertisements
(which can fit a constant number of edges per advertisement), 
and then have each node run the centralized algorithm locally.

In recent work, Dinitz et~al.~\cite{dinitz:2019} present a distributed algorithm for the broadcast-CONGEST
model (with bandwidth bound $O(\log{n})$), 
that constructs a tree with maximum degree $O(d(G) + \log{n})$ in $\tilde{O}(D + \sqrt{n})$ rounds,
with high probability in $n$. We can directly run this algorithm in the mobile telephone model
using the advertisements to implement the small broadcast messages from the broadcast-CONGEST model.
This distributed solution is more efficient, but for networks with small $d(G)$ values, it does not enable
optimal throughput.

As hinted earlier, we can balance these competing interests by combining the two algorithms.
In particular, we can implement $SB(s)$ such that it begins by constructing a tree using
the distributed algorithm from~\cite{dinitz:2019}. It can then, in the background,
improve this tree down to degree $d(G) + 1$ using the algorithm from~\cite{fr94},
switching to the new tree once it is complete.

By combining these various costs, we converge to $\Omega(1/(d(G) + \log{n}))$ throughput
in $\tilde{O}(D(T)\cdot d(G) + \sqrt{n})$ rounds,
which then improves to $\Omega((1/d(G)))$ throughput within $O(n^2)$ rounds. %\mdnote{Do we want a theorem statement about this?}

\fi

\subsection{Random Networks}
\label{sec:bcast:random}

The preceding broadcast capacity results hold for any connected network graph.
Here we study the problem in networks randomly generated by the $GK$ process 
with a communication radius sufficiently larger than the threshold $r_c(n)$.

%In Theorem~\ref{thm:gk:threshold},
%we identified for each network size $n$ a {\em connectivity threshold} $r_c(n)$,
%such that for any radius $r \leq r_c(n)$,
%it is likely that the network generated by $GK(n,r)$ is disconnected.
%Therefore, we turn our attention here to radius values larger than the threshold.

\iflong
In more detail, we prove that for any radius that is a sufficiently large constant factor bigger
than $r_c(n)$, with high probability in $n$,
our $SB(s)$ routing algorithm from Section~\ref{sec:bcast:lower} will achieve
{\em constant} throughput in a network generated by $GK(n, r)$---indicating 
that in a natural network topology, the mobile telephone model is well-suited for broadcast
capacity.
The proof of the below theorem leverages results from Section~\ref{sec:gk:lower}
to prove that the network likely has a constant degree MDST. 
\fi

\ifshort
Leveraging techniques from Section~\ref{sec:gk:lower}, we prove
that such random networks are likely to contain a constant degree MDST,
which, as established in Theorem~\ref{thm:bcast:lower:short}, support constant throughput.

\begin{theorem}
There exists a (distributed) algorithm,
such that for any sufficiently large network size $n>1$ and constant $\beta \geq 1$,
and radius $r \geq \beta r_c(n)$, 
if $G \sim GK(n,r)$ then with high probability the algorithm achieves constant throughput (for any $s$).
\label{thm:bcast:random}
\end{theorem}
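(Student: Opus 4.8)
The plan is to reduce everything to the arbitrary-graph result of Theorem~\ref{thm:bcast:lower:short}. That theorem gives a distributed algorithm that, in \emph{any} connected graph, converges within $O(n^2)$ rounds to throughput $\Omega(1/d(G))$ for any source $s$. Hence it suffices to show that a random graph $G\sim GK(n,r)$ with $r\ge \beta r_c(n)$ satisfies $d(G)=O(1)$ with high probability; plugging this in makes the achieved throughput $\Omega(1)$, which is what we want. So the entire task is a graph-theoretic one: bound the maximum degree of the minimum-degree spanning tree of $G$. In fact I would prove something slightly stronger and cleaner, namely that $G$ contains a Hamiltonian path, which is a spanning tree of maximum degree $2$, so $d(G)\le 2$.

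To do this I would reuse the grid machinery of Section~\ref{sec:gk:lower}. Cover the unit square with an $r$-grid of grid radius $\hat r$ and grid size $k=\Theta(1/r)$, so that $n/k^2=\Theta(n\hat r^2)=\Omega(\log n)$ since $r\ge \beta r_c(n)$. By the node-count half of the goodness property (Lemma~\ref{lem:gk:good}) -- equivalently, by a direct union bound over the at most $n$ boxes, each empty with probability $(1-\hat r^2)^n\le e^{-\Omega(\log n)}$ -- with high probability every box of the grid contains at least one (indeed $\Omega(\log n)$) node; condition on this event. Now use two geometric facts that are immediate from the definition of an $r$-grid: (i) every box has Euclidean diameter $\hat r\sqrt 2 < r$, so the nodes of a single box induce a clique in $G$; and (ii) any point of a box and any point of a side-adjacent box are within distance $r$, so $G$ contains an edge between every vertex of one and every vertex of the other. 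Order the $k^2$ boxes by a boustrophedon (``snake'') traversal of the $k\times k$ array -- left to right along row $1$, right to left along row $2$, and so on -- so consecutive boxes are side-adjacent. Inside each box pick an arbitrary Hamiltonian path of its clique (fact (i)); then splice the box-paths end to end, joining the last vertex of box $i$'s path to the first vertex of box $(i{+}1)$'s path, a legal edge by fact (ii). This is a Hamiltonian path of all of $G$, so $d(G)\le 2$, completing the argument.

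The routine parts -- the two geometric inequalities and the union bound -- are straightforward. The one step requiring care, and the actual idea, is the spanning-tree construction itself: the obvious choice of a single representative per box joined by a star to the rest of its box would give that representative degree $\Theta(n/k^2)=\Theta(\log n)$, not $O(1)$, so one must instead thread a path through each box and chain the paths together, which is precisely why cliqueness of a box and full bipartite-connectedness of side-adjacent boxes are the facts one needs. A minor point is the ``for any $s$'' quantifier in the statement: this is automatic, since the spanning tree (and hence the $d(G)=O(1)$ bound) is independent of the source, and Theorem~\ref{thm:bcast:lower:short} already holds for every source.
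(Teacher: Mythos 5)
Your proposal is correct and follows essentially the same route as the paper: reduce to the arbitrary-graph broadcast algorithm and show that, with high probability, every box of the $r$-grid from Section~\ref{sec:gk:lower} is occupied (via Lemma~\ref{lem:gk:good}), so intra-box cliques and side-adjacent-box edges yield a spanning structure of constant maximum degree, hence $d(G)=O(1)$ and constant throughput. The only difference is cosmetic: you thread a Hamiltonian path through the boxes in snake order (giving $d(G)\le 2$), whereas the paper connects one core node per box into a grid overlay and strings the remaining nodes of each box on a line attached to its core node, giving a spanning tree of maximum degree $5$; both bounds are constant, and both arguments (yours and the paper's) implicitly need $\beta$ to be a sufficiently large constant for the occupancy concentration and union bound to go through.
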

\fi

\iflong
\begin{theorem}
There exists a constant $\beta \geq 1$,
such that for any significantly large network size $n>1$
and radius $r \geq \beta r_c(n)$,
if $G \sim GK(n,r)$ then with high probability $SB(s)$ achieves constant throughput (for any $s$).
\label{thm:bcast:random}
\end{theorem}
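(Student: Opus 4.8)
The plan is to show that for a random geometric graph $G \sim GK(n,r)$ with $r \geq \beta r_c(n)$ (for a sufficiently large constant $\beta$), with high probability $d(G) \in O(1)$, i.e., $G$ has a spanning tree of constant maximum degree. Once this is established, Theorem~\ref{thm:bcast} (resp.\ Theorem~\ref{thm:bcast:lower:short}) applied to $SB(s)$ immediately gives throughput $\Omega(1/d(G)) = \Omega(1)$, since $d(G)$ is a constant. So the entire content of the proof is the geometric/combinatorial claim that the random graph has a bounded-degree spanning tree.

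First I would set up the grid machinery from Section~\ref{sec:gk:lower}. Cover the unit square by an $r$-grid with grid radius $\hat r = \Theta(r)$ and grid size $k = 1/\hat r$, so that any two points in side-adjacent boxes are within distance $r$ (hence adjacent in $G$). Now I invoke the concentration argument underlying Lemma~\ref{lem:gk:good}: for $r \geq \beta r_c(n)$ with $\beta$ large enough, with high probability \emph{every} grid box is nonempty, and in fact contains $\Theta(n/k^2) = \Theta(n r^2) = \Omega(\log n)$ nodes. (This is exactly the lower-bound half of the ``good'' event; the upper bound $X_{i,j} \le 4n/k^2$ is not even needed here, only that no box is empty.) The required Chernoff bound is identical to the one in the proof of Lemma~\ref{lem:gk:good}, with a union bound over the $k^2 \le n$ boxes.

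Given that every box is nonempty, I would build an explicit spanning tree of constant degree. The idea is a ``comb'': in each column $j$ pick one node per box and link these chosen nodes into a vertical path (legal, since vertically adjacent boxes are within distance $r$); then link the top node of each column to the top node of the next column to form a horizontal spine along row $1$ (again legal by horizontal adjacency). This gives a spanning \emph{subtree} on the set of $k^2$ chosen representatives, each of whom has degree at most $4$ in it (two vertical neighbors, possibly two horizontal ones at the top). Every remaining node $v$ lies in some box $(i,j)$; attach $v$ directly to the representative of its own box. A representative then additionally absorbs all the other (up to $4 n/k^2 - 1$) nodes of its box as children---which would blow up the degree. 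To fix this, inside each box I instead organize the box's nodes into a balanced tree of constant degree (say a binary tree) rooted at the representative, using the fact that \emph{all} nodes within a single box are mutually within distance $\hat r \le r$, hence form a clique in $G$; a binary tree on a clique has max degree $3$. The box-internal trees, the vertical column paths, and the horizontal spine together form a spanning tree of $G$ with maximum degree $O(1)$, establishing $d(G) = O(1)$ with high probability, and hence the theorem by Theorem~\ref{thm:bcast}.

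The main obstacle is purely bookkeeping: making sure the three layers of the construction (intra-box binary trees, inter-box vertical paths, inter-box horizontal spine) are glued at the representatives without any representative accumulating super-constant degree, and verifying every edge used is genuinely present in $G$ (which reduces to the $\hat r \le r/\sqrt 5$ property of the $r$-grid and the clique-within-a-box observation). The probabilistic part is not the hard part---it is a direct re-run of the Chernoff estimate already carried out in Lemma~\ref{lem:gk:good}, specialized to the single event ``no box is empty,'' which holds for $\beta$ chosen so that $\beta^2 \alpha / 96$ exceeds the desired polynomial exponent. I would state the degree bound as a standalone lemma ($d(G) = O(1)$ w.h.p.\ when $r \ge \beta r_c(n)$) and then derive Theorem~\ref{thm:bcast:random} in one line.
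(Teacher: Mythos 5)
Your proposal is correct and follows essentially the same route as the paper: the paper's proof also invokes Lemma~\ref{lem:gk:good} to guarantee every grid box is occupied, picks one core node per box, connects cores of side-adjacent boxes and takes a spanning tree of that overlay, then attaches each box's remaining nodes to its core (as a path with the core as endpoint), yielding a spanning tree of maximum degree $5$ and hence constant throughput via Theorem~\ref{thm:bcast}. Your comb-plus-binary-tree variant differs only cosmetically (and your intra-box clique claim should cite the diagonal $\hat r\sqrt{2}\le r$, which holds since $\hat r\le r/\sqrt{5}$, rather than $\hat r\le r$), but the argument is the same.
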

\begin{proof}
By Lemma~\ref{lem:gk:good}, there exists a constant $\beta$,
such that for any sufficiently large network size $n$,
and any radius $r \geq \beta\cdot r_c(n)$,
the graph $G=(V,E)$ generated by the $GK(n,r)$ process is good (see Definition~\ref{def:good}).
Assume this holds.
Consider the $r$-grid. 
By the definition of good and $r$-grid,
each grid box is non-empty, and
each node is within range of every node in grid boxes that share an edge with its own.

With this in mind, fix one {\em core} node in each grid box. To construct a spanning tree, first
connect each chosen core node
to the chosen core nodes in the (at most) four adjacent boxes.
This creates an overlay with at least one node in every grid box.  We then take an arbitrary spanning tree of this overlay.  Finally, for each box, connect the other nodes into a line that includes the box's chosen core node as its endpoint.
The result is a spanning tree with maximum degree $5$.
It follows that $d(G) \leq 5$.
By Theorem~\ref{thm:bcast}, the $SB(s)$ algorithm will achieve throughput in 
$\Omega(1/(d(G)) = \Theta(1)$ in this graph.
\end{proof}

\fi

\section{All-to-All Capacity}
\label{sec:all}

We now consider the all-to-all capacity problem,
which assumes all nodes begin with an infinite sequence of packets
to spread to all other nodes.
Formally, this variation of the capacity problem considers only the 
following canonical flow
set:  $F_{all} = \{ (s, V \setminus \{s\}) : s\in V \}$.

In Section~\ref{sec:bcast},
we described and analyzed an algorithm that achieved a throughput
in $\Omega(1/d(G))$ for delivering packets from a single source to the whole network.
To solve all-to-all capacity, we could run $n$ instances of this algorithm:
one for each source, rotating through the different instances in a round robin fashion.
This approach provides a baseline throughput result of $\Omega(1/(n\cdot d(G)))$.
The key questions are whether or not this bound is tight,
and whether there are simpler or more natural strategies than deploying round robin interleaving of single-source
broadcast.

\ifshort 
In the full version, we answer both questions in the affirmative
by generalizing our argument from Theorem~\ref{thm:bcast:lower}
to prove that no schedule achieves better than $O\left(\frac{1}{d(G)\cdot n}\right)$ throughput,
and then exhibiting a matching distributed algorithm {\em SG} that uses
 a more natural strategy than round robin broadcast.
Formally:

\begin{theorem}
When executed in a connected network topology $G=(V,E)$ of size $n=|V|$, 
with high probability in $n$: the {\em SG} algorithm achieves throughput in $\Omega\left(\frac{1}{d(G)\cdot n}\right)$ with 
respect to $G$ and $F_{all}$. Furthermore,
every schedule achieves throughput at most $O\left(\frac{1}{n\cdot d(G)}\right)$ with respect to $G$ and $F_{all}$.
\end{theorem}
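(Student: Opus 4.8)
The plan is to prove the two halves separately, reusing machinery already developed in the paper. For the upper bound on throughput I would generalize the toughness argument behind Theorem~\ref{thm:bcast:lower} from one source to $n$ sources. For the matching algorithmic result I would build a single low-degree spanning tree, simulate CONGEST over it exactly as in Section~\ref{sec:bcast:lower}, and then run a sequence of all-to-all gossip instances on that tree, one per packet index.

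First, the upper bound on throughput. If $d(G)\le 4$ the claim is immediate: in each round a node lies in at most one connection and hence receives at most one packet total across all $n$ types, so for any $\ell$ a node that has delivered its first $\ell$ packets of every type has received at least $(n-1)\ell$ packets, forcing throughput $O(1/n)=O(1/(n\,d(G)))$. Otherwise apply Theorem~\ref{thm:toughness} with $k=d(G)-1$ to obtain a non-empty $S\subseteq V$ with $c(G\setminus S)>q|S|$ for $q=d(G)-3$. As in the proof of Theorem~\ref{thm:bcast:lower}, the first node of any component of $G\setminus S$ to receive a given packet must receive it along a matching edge incident to $S$; since a matching touches each node of $S$ at most once, at most $|S|$ such component-entry events occur per round. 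Now fix a packet count $\ell$ and let $T_\ell$ be the round by which every source has delivered its first $\ell$ packets. Each of the $n\ell$ packets must enter at least $c(G\setminus S)-1>q|S|-1$ components (all components except possibly the one containing its own source), so at least $n\ell(q|S|-1)$ entry events occur by round $T_\ell$, giving $T_\ell\ge n\ell(q|S|-1)/|S|\ge n\ell(q-1)$. Since the throughput is at most $\limsup_\ell \ell/T_\ell$, it is $O(1/(n(q-1)))=O(1/(n\,d(G)))$.

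Second, the algorithm SG. Construct an approximate MDST $T$ of $G$ exactly as in Section~\ref{sec:bcast:lower}: the centralized F\"urer--Raghavachari tree has degree $d(G)+1$ (computable distributedly in $O(n^2)$ rounds), while the distributed construction of~\cite{dinitz:2019} gives degree $O(d(G)+\log n)$ in $\tilde O(D+\sqrt n)$ rounds; this is the source of the ``with high probability'' qualifier and of the two convergence-round regimes. Edge-color $T$ with $O(d(G))$ colors via {\em EdgeColor-MTM} (trivial in the centralized case), and use the coloring to simulate one CONGEST round on $T$ in $O(d(G))$ real rounds. On this simulated CONGEST network we run, for each packet index $j=1,2,\dots$ in turn, one instance of all-to-all gossip of the $j$-th packets: root $T$ arbitrarily, pipeline a convergecast of all $n$ tokens to the root and then a broadcast of all $n$ tokens back down; standard pipelining on a tree of $n$ nodes completes this in $O(n)$ CONGEST rounds, and a convergecast can signal completion so instance $j+1$ can start. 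Each instance therefore costs $O(n\,d(G))$ real rounds and delivers packet $j$ of every source to every node, so once the one-time setup is amortized every node obtains one fresh packet of every type per $O(n\,d(G))$ rounds, yielding throughput $\Omega(1/(n\,d(G)))$; the stated convergence rounds follow by adding the MDST and edge-coloring setup costs.

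I expect the main obstacle to be the algorithmic side: making the per-instance tree gossip actually run in $O(n)$ CONGEST rounds while controlling congestion at high-degree nodes of $T$ (a pipelining/load-balancing issue reminiscent of the rebalancing step in Lemma~\ref{lem:gk:routing}), and then verifying carefully that the claimed throughput is attained in the limit after the one-time MDST, edge-coloring, and pipeline-fill costs are smoothed out. The toughness-based upper bound, by contrast, is essentially a bookkeeping extension of Theorem~\ref{thm:bcast:lower}.
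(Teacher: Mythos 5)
Your proposal is correct and follows essentially the same route as the paper: the throughput upper bound is the same extension of the toughness argument of Theorem~\ref{thm:bcast:lower} (via Theorem~\ref{thm:toughness}, with the $d(G)=O(1)$ case handled by the one-packet-per-round observation), and the algorithm is the paper's SG scheme of approximate MDST plus edge coloring plus CONGEST simulation, running one all-to-all gossip instance per packet index in $O(n)$ simulated rounds, hence $O(n\cdot d(G))$ real rounds per index. The only deviation is the gossip subroutine inside the simulated CONGEST: the paper uses a FIFO flood (``broadcast gossip'') with an inductive $D+k$ bound (Theorem~\ref{thm:gossip}), while you use a rooted pipelined convergecast followed by a pipelined downcast, which is a standard $O(n+D(T))$-round alternative and yields the same throughput.
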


Finally, notice that a direct corollary of our argument from Section~\ref{sec:bcast:random},
which establishes that a random graph contains a constant degree MDST (for sufficiently large radius) with high probability,
is that with this same probability {\em SG} achieves $\Omega(1/n)$ throughput (which is best possible for all-to-all capacity).
\fi

\iflong
In this section we answer both questions in the affirmative.
We first prove that every schedule achieves a throughput in $O(1/(n\cdot d(G)))$,
then describe and analyze a distributed algorithm that achieves optimal throughput
by solving all-to-all gossip with a simple flood on a good spanning tree for each packet.
This algorithm then becomes our basis for largely resolving
an open question from~\cite{newport:2017} regarding
one-shot gossip in the mobile telephone model.

\subsection{A Bound on Achievable Throughput for Arbitrary Networks}
\label{sec:all:upper}

In Section~\ref{sec:bcast}, we proved a tight connection between the achievable broadcast capacity
and the degree of an MDST in the graph (i.e., $d(G)$ for graph $G$).
Here we formalize the intuition that this connection also exists for the related problem
of all-to-all capacity.

\begin{theorem}
Fix a connected network graph $G=(V,E)$ of size $n=|V|$.  Every schedule achieves throughput at most $O\left(\frac{1}{n\cdot d(G)} \right)$ with respect to $G$ and $F_{all}$.
\label{thm:all:upper}
\end{theorem}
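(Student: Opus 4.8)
The plan is to generalize the argument used in the proof of Theorem~\ref{thm:bcast:lower} to the all-to-all setting. As before, if $d(G) \leq 4$ the bound is trivial since all throughput values are $O(1/n) = O(1/(n\cdot d(G)))$, so assume $d(G) > 4$. We apply Theorem~\ref{thm:toughness} with $k = d(G) - 1$ to obtain a non-empty set $S \subset V$ with $c(G \setminus S) > q\cdot|S|$ for $q = d(G) - 3 > 1$. Let $C$ be the collection of components of $G \setminus S$; note $|C| \geq q|S| + 1 > q|S|$ (we do not discard a source component here, since in the all-to-all problem every node is a source, so every component contains packet sources). The key structural fact, exactly as in the broadcast proof, is that every packet must \emph{arrive} at each of the components not containing its own source, and each such arrival event requires some bridge node in $S$ to receive that packet; since each node in $S$ receives at most one packet per round, there are at most $|S|$ arrival events per round across the whole network.

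Next I would count the total number of arrival events that must occur. Fix a packet count $i \geq 1$: we consider the first $i$ packets of each of the $n$ flows, i.e., $in$ distinct packets in total. Each such packet must arrive at every component of $C$ that does not contain its source. A packet whose source lies in component $C_j$ must still arrive at all $|C| - 1$ other components (and if its source lies in $S$ it must arrive at all $|C|$ components). So each of the $in$ packets generates at least $|C| - 1$ arrival events, giving at least $in(|C|-1) \geq in\cdot q|S|/2$ arrival events in total (using $|C| > q|S| \geq 2$, so $|C| - 1 \geq |C|/2 > q|S|/2$; one can also just keep $|C|-1$ and note $|C|-1 \geq q|S|$ when $|S|\geq 1$ and $q \geq 1$ force $|C| \geq q|S|+1$, so actually $|C|-1 \geq q|S|$ directly). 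Hence if $T_i$ is the number of rounds needed for every flow to deliver its first $i$ packets to the whole network, then
\[
T_i \geq \frac{in\cdot(|C|-1)}{|S|} \geq \frac{in\cdot q|S|}{|S|} = inq.
\]
Therefore the throughput of any schedule is at most $i/T_i \leq 1/(nq) = 1/(n(d(G)-3)) = O(1/(n\cdot d(G)))$, which is the claimed bound.

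The main obstacle — though it is more a matter of care than of genuine difficulty — is making sure the arrival-event accounting is valid simultaneously for all $n$ flows rather than just one. In the broadcast proof there was a single stream of packets from one source, so ``at most $|S|$ arrival events per round'' immediately translated into a lower bound on the time for that one stream. Here, arrival events for packets from \emph{different} sources all compete for the same $|S|$ slots per round, so the bound $|S|$ per round is a global budget shared across all flows; the argument must charge the full $in(|C|-1)$ required arrival events against this shared budget, which is exactly what the display above does. A secondary subtlety is the bookkeeping for packets whose source lies in $S$ itself versus in some component, and for flows whose source shares a component with many other sources: in every case the number of \emph{other} components a packet must reach is at least $|C| - 1 \geq q|S|$, so the bound is uniform and the per-flow distinctions wash out. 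One should also note, as in the broadcast case, that this is genuinely a throughput (not just latency) bound because the $\Omega(inq)$ lower bound on $T_i$ scales linearly in $i$, so pipelining across packet numbers cannot help.
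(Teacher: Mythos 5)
Your proof is correct and follows essentially the same route as the paper: the paper's own argument handles the small-$d(G)$ case trivially and otherwise invokes the toughness-based arrival-event counting from Theorem~\ref{thm:bcast:lower} with $k = n\cdot i$ total packets to get $T_k \in \Omega(d(G)\cdot n\cdot i)$. Your write-up simply makes explicit the bookkeeping the paper leaves implicit (that all $in$ packets, regardless of which component or bridge node their source lies in, each force at least $q|S|$ arrivals charged against the shared per-round budget of $|S|$), and in doing so even avoids the paper's factor-of-$2$ slack.
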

\begin{proof}
Fix some $G=(V,E)$ as specified by the theorem statement.
Because a given node in our model can receive at most one packet per round,
it is trivial to calculate that for every packet count $i$,
it requires  $\Omega(n\cdot i)$ rounds for all nodes to deliver their first $i$ packets to all other nodes.
Therefore, if $d(G)$ is constant, the theorem is trivially true.

On the other hand, if $d(G)$ is a sufficiently large
constant, we can apply the same argument from our proof of Theorem~\ref{thm:bcast:lower},
which in turn leverages Theorem~\ref{thm:toughness},
to establish that $T_k = \frac{(d(G)-3) \cdot k}{2}$ rounds
are required for $k$ packets to spread.

Focusing on this case, fix some packet count $i$.
For all $n$ nodes to successfully spread $i$ distinct packets requires $k= n\cdot i$ total packets to spread in the network,
requiring at least $T_{k} \in \Omega(d(G) \cdot n \cdot i)$ total rounds, yielding
a throughput in $O(i/T_k) = O(1/(d(G)\cdot n))$.
Since this holds for all $i$, the theorem claim follows.
\end{proof}

\subsection{An Optimal Routing Algorithm for Arbitrary Networks}
\label{sec:all:lower}

In Section~\ref{sec:bcast:lower},
we described an algorithm that simulates the CONGEST model (with a bandwidth bound sufficient to fit
a constant number of packets) in the mobile telephone model.
The strategy first builds an approximate MDST and then edge colors the tree edges.
This coloring is used to schedule the mobile telephone model connections needed
to simulate on round of CONGEST. If the tree has maximum degree $d$, then the simulation
requires $O(d)$ real rounds for each simulated CONGEST round.

To match the all-to-all capacity bound from Theorem~\ref{thm:all:upper},
we deploy this same CONGEST simulation.
This time, however, we run an all-to-all gossip algorithm on top of the simulation, and show that this algorithm spreads $k$ gossip messages to all nodes
in $O(D + k)$ rounds in the CONGEST model in a network with diameter $D$.
Combining this bound with the simulation overhead will yield a throughput result
that asymptotically matches the bound from Theorem~\ref{thm:all:upper}. 

We begin below by describing and analyzing our CONGEST gossip algorithm,
before analyzing how it combines with our simulator.

\paragraph{Broadcast Gossip in CONGEST.}
Here we describe and analyze a simple strategy call {\em broadcast gossip}, 
that is designed for the CONGEST.
The strategy works as follows. Every node maintains a FIFO message queue initialized to
holds its initial gossip message (if it starts with such a message),
and a list of sent messages initialized to be empty.
At the beginning of each round, each node $v$ does the following.
If its queue is non-empty, it dequeues a message,  broadcasts it to all of its neighbors,
and adds it to its sent messages list.
For each message $m$ received by $v$ in this round,
if $m$ is on $v$'s sent messages list, it discards it, otherwise it enqueues it to its message queue.

The time complexity of this strategy is well-known as folklore,
and for some models, it has a concrete proof in the literature; e.g.,~\cite{mmb-absmac}.
For the sake of completeness, we clarify and generalize the result from~\cite{mmb-absmac}:

%In earlier work~\cite{mmb-absmac},
%we proved this same general strategy solves gossip with optimal pipelining characteristics in the related
%abstract MAC layer model.
%Here we both adapt and clarify our original proof to apply here to broadcast gossip and the CONGEST model.

\begin{theorem}
Consider the broadcast gossip algorithm used to spread $k\geq 1$ messages in connected network topology $G=(V,E)$
with diameter $D$. All messages spread to all nodes by the end of round $D+k$.
\label{thm:gossip}
\end{theorem}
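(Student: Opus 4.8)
The plan is to analyze the broadcast gossip algorithm by tracking, for each message $m$ and each node $v$, the round in which $m$ first reaches $v$, and to show this is bounded by $\mathrm{dist}(s_m, v) + (\text{queue position of } m \text{ at } s_m)$, where $s_m$ is the originating node of $m$. Fix an arbitrary ordering of the $k$ messages $m_1, \dots, m_k$ by the round (breaking ties arbitrarily) in which they are first dequeued and broadcast by their sources; since each source has at most one message and dequeues one message per round when its queue is nonempty, message $m_j$ is broadcast by its source no later than round $j$. The key structural observation is a FIFO/pipelining invariant: along any shortest path from a source, the messages propagate in a fixed relative order without any message ever being overtaken, because each node processes its queue in FIFO order and never re-enqueues a message it has already sent.

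The central claim to prove by induction is: for every message $m_j$ and every node $v$, node $v$ has received (equivalently, broadcast or discarded-as-duplicate) $m_j$ by the end of round $j + \mathrm{dist}(s_{m_j}, v)$. I would prove this by induction on $\mathrm{dist}(s_{m_j}, v)$. The base case $\mathrm{dist} = 0$ is exactly the statement that $s_{m_j}$ broadcasts $m_j$ by round $j$. For the inductive step, let $v$ be at distance $\ell+1$ and let $u$ be a neighbor of $v$ on a shortest path, so $u$ is at distance $\ell$; by the inductive hypothesis $u$ has $m_j$ by round $j + \ell$. The subtle point is bounding how long after receiving $m_j$ the node $u$ actually broadcasts it — a priori $u$'s queue could be congested with other messages. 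Here is where I would use a counting argument: the only messages that can sit ahead of $m_j$ in $u$'s queue are messages $m_{j'}$ with $j' < j$ that reached $u$ no later than $m_j$ did, and by the FIFO invariant each such message is broadcast by $u$ in a distinct round, each no later than (by induction on the message index, or a careful simultaneous induction) its own deadline. A cleaner way to organize this: prove simultaneously, by induction on $r$, that by the end of round $r$ every node $v$ has broadcast every message $m_j$ with $j + \mathrm{dist}(s_{m_j}, v) \le r$. This makes the queue-congestion bookkeeping automatic, since at the start of round $r+1$ node $v$'s queue contains exactly the not-yet-broadcast messages among those that have arrived, and the arrival deadlines force at most one "new" message to become eligible per round along each incoming shortest path.

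The main obstacle is precisely this congestion argument — showing that a message is never delayed in a queue beyond what the $\mathrm{dist} + \text{index}$ bound allows, despite potentially many messages arriving at a node. I expect the right formalization is the "by induction on round $r$" version stated above, where one shows the set $\{m_j : j + \mathrm{dist}(s_{m_j}, v) \le r\}$ has been fully broadcast by $v$ by round $r$; the inductive step observes that any message $m_j$ with $j + \mathrm{dist}(s_{m_j}, v) = r+1$ either was already broadcast, or has a shortest-path neighbor $u$ with $j + \mathrm{dist}(s_{m_j}, u) = r$ who therefore broadcast it by round $r$, so $v$ received it by round $r$; and then one must check $v$'s queue has room to broadcast it in round $r+1$, which follows because the messages that could precede it all have strictly smaller deadlines and hence (by the same inductive hypothesis applied at $v$) were already cleared. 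Finally, taking $r = D + k$ and noting $j \le k$ and $\mathrm{dist}(s_{m_j}, v) \le D$ for all $j$ and $v$ gives $j + \mathrm{dist}(s_{m_j}, v) \le D + k$, so every message has reached every node by round $D+k$, which is the theorem. A brief remark would note that this recovers and slightly generalizes the folklore bound and the special case proved in~\cite{mmb-absmac}.
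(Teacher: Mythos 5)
Your high-level plan runs into a genuine gap: the per-message deadline invariant at the heart of your argument is false. You order the messages $m_1,\dots,m_k$ by the round their sources first broadcast them (ties broken arbitrarily) and claim every node $v$ receives $m_j$ by round $j+\mathrm{dist}(s_{m_j},v)$, equivalently that by round $r$ each node has broadcast every $m_j$ with $j+\mathrm{dist}(s_{m_j},v)\le r$. Consider a path $v_0-v_1-\cdots-v_D$ with $t$ extra leaves attached to $v_1$, where $v_0$ and each leaf hold one message (so $k=t+1$, and in the all-to-all setting every source broadcasts in round $1$, making your ordering arbitrary). If $v_0$'s message is indexed $m_1$ but $v_1$'s FIFO queue happens to enqueue the $t$ leaf messages ahead of it, then $v_1$ broadcasts $m_1$ only in round $t+2$, violating your invariant (its deadline at $v_1$ is $1+1=2$), and $v_D$ receives it only in round $t+D>1+D$. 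The specific step that fails is your claim that any message sitting ahead of $m_j$ in a queue has a strictly smaller deadline: a message with a much larger index originating close to the congested node can arrive earlier and occupy the broadcast slot, exactly as the leaf messages do here; the FIFO ``no overtaking'' property holds for messages from a single source along a path, but not across different sources, and no single global ordering is consistent with the queue orders at all intermediate nodes.

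The theorem itself is of course still true, and the paper's proof shows how to repair the congestion bookkeeping: instead of assigning each message an individual deadline, it proves the weaker dichotomy that for every fixed message $m$ and every $v,\ell$ with $d_m(v)+\ell=r$, after round $r$ node $v$ has either already sent $m$ or has sent at least $\ell$ distinct messages. This is exactly what holds at a congested node (in the example, $v_1$ delays $m_1$ only by sending other distinct messages), and the conclusion follows by taking $\ell=k$: since only $k$ messages exist, ``sent at least $k$ distinct messages'' forces $m$ among them, so every node has sent (hence received) every message by round $D+k$. If you want to salvage your write-up, replace the indexed deadline $j+\mathrm{dist}(s_{m_j},v)$ with this ``sent $m$ or sent $\ell$ distinct messages'' potential argument; the rest of your induction scaffolding then goes through.
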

\begin{proof}
We begin by defining two useful pieces of notation: for a given gossip message $m$, let $u_m$ be 
the node that starts with $m$, and for each $v \in V$, let $d_m(v)$ be the shortest path distance between $v$ and $u_m$.
Fix a specific message $m$.
We first study the spread of this message through the network using induction on the round number.
In particular, consider the following inductive hypothesis: 

\begin{quote}
{\em For every $r\geq 1$: for every $v$ and $\ell$
such that $d_{m}(u) + \ell = r$, one of the following two properties must be true of $v$ after round $r$: 
(1) $v$ has sent $m$; (2) $v$ has sent at least $\ell$ distinct gossip messages. }
\end{quote}

We begin with the base case ($r=1$).
There are only two relevant combinations of $v$ and $\ell$ values for $r=1$.
The first is when $d_{m}(v) = 0$ and $\ell = 1$.
In this case, $v=u_m$ so we know $v$ starts with $m$ and therefore has a message to broadcast during round $1$, satisfying
property (2) for $\ell=1$.
The second case is when $d_{m}(v) = 1$ and $\ell = 0$.
In this case, property (2) is vacuously true. 

We continue with the inductive step ($r>1$). 
Fix any $v$ and $\ell$ such that $d_{m}(v) + \ell = r$.
Let us consider what has happened by the end of round $r-1$.
If $v$ has sent $m$ by the end of round $r-1$, then we are done. 
Moving forward, therefore, assume $v$ has not sent $m$ by the end of $r-1$.

Fix some $w$ that is one hop closer to $u_m$ than $v$.
Notice that $r-1 = d_{m}(v) + \ell - 1 = d_{m}(w) + \ell$.
Therefore, by the inductive hypothesis, and our assumption that $v$ has not yet
sent $m$, we know that after round $r-1$,
node $v$ has broadcast at least $\ell-1$ messages and node $w$ has either broadcast
$m$ or broadcast $\ell$ messages.
Either way, $v$ has at least one new message to broadcast in $r$,
meaning that by the end of this round it will have at least satisfied property (2) of the inductive hypothesis. 

Stepping back, we can now pull together the pieces to prove the main theorem.
The inductive claim above establishes that for each $v$,
$v$ has sent $m$ by round $d_m(v) + k \leq D + k$.
This follows because by the above hypothesis,
by the end of round $d_m(v) + k$,
$v$ has either sent $m$ or at least $k$ other messages.
Given that there are only $k$ total messages, the latter property also implies it has sent $m$.

The inductive claim applies for every message $m$.
Therefore, every node has sent (and therefore received) every message by round $D+k$, as claimed.
\end{proof}

%\begin{wrapfigure}{r}{0.55\textwidth}
\begin{figure}
  \begin{center}
   \begin{tabular}{|l|}
   \hline
   {\bf SG}\\
   \hline
   {\bf construct} an approximate MDST $T$ \\
   %
%   {\bf construct} a BFS tree $B$ in $G$ rooted at $s$ \\
%   {\bf construct} a BFS tree in $T$ rooted at $s$ \\
   %
   {use} $T$ to {\bf convergecast} and {\bf broadcast} the max degree $d$ of $T$ \\
%   {use} BFS tree to {\bf convergecast} and {\bf broadcast} max degree $d$ of $T$ \\
   {\bf color} the edges in $T$ using {\em EdgeColor-MTM$(d)$}\\
   (check validity of coloring and repeat if problem found)\\
   {\bf for each token} $i = 1, 2, 3, ...$: \\
   $\ \ \ \ $ {\bf use} edge colors to simulate {\bf broadcast gossip} for token $i$. \\
   \hline
   \end{tabular}
     \end{center}
  \caption{Streaming gossip strategy. We proved earlier that the broadcast gossip algorithm strategy 
  terminates in at most $\alpha n$ simulated rounds, for a fixed constant $\alpha$.
  Nodes can therefore simulate run each simulation of this strategy for this fixed number
  of simulated rounds before moving on to the next token.}
  \label{fig:all}
  \end{figure}
%\end{wrapfigure}

\paragraph{A Tight All-to-All Capacity Algorithm.}
As with broadcast capacity, we build an approximate MDST, edge color it, and then
use the colors to simulate CONGEST.
For each packet count $i$,
we use our simulation to 
 run the broadcast gossip strategy analyzed above to gossip each node's packet number $i$.
 If $\hat d$ is the maximum degree of the approximate MDST,
 and $\hat D$ is its diameter,
 then each instance of broadcast gossip requires $O(n + \hat D) = O(n)$ simulated
 rounds, which in turn requires $O(n\cdot \hat d)$ total rounds---providing throughput values
 that match the $O(1/(d(G)\cdot n))$ bound proved in Theorem~\ref{thm:all:upper}.
 Formally:

\begin{theorem}
When executed in a connected network topology $G=(V,E)$ of size $n=|V|$, 
with high probability in $n$: the {\em SG} algorithm achieves throughput in $\Omega\left(\frac{1}{d(G)\cdot n}\right)$ with 
respect to $G$ and $F_{all}$.
\label{thm:all}
\end{theorem}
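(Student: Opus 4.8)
The plan is to reduce this to the machinery already built for broadcast capacity, since \emph{SG} reuses the same CONGEST simulator and merely swaps in broadcast gossip at the top level. First I would condition on the single high-probability event that both randomized subroutines succeed: the approximate-MDST algorithm returns a spanning tree $T$ with maximum degree $\hat d \in O(d(G))$, and \emph{EdgeColor-MTM}$(\hat d)$ returns a valid $(2\hat d - 1)$-edge coloring of $T$ within $O(\hat d \log n)$ rounds (Theorem~\ref{thm:edge-color}); a union bound keeps the total failure probability polynomially small, and the validity-check step detects the rare coloring failures and restarts. On this event we have $c := 2\hat d - 1 \in O(d(G))$ colors on $T$, and exactly as in Section~\ref{sec:bcast:lower}, cycling once through the $c$ color classes (each a matching) lets every node exchange one packet with each of its at most $\hat d$ tree-neighbors, so one round of CONGEST on $T$ is simulated in $c \in \Theta(d(G))$ real rounds with bandwidth large enough for one packet.

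Next I would bound the cost of a single token. For token number $i$, \emph{SG} runs broadcast gossip on $T$ with the $n$ messages $\{$token $i$ of node $s : s \in V\}$. Since $T$ is a spanning tree of an $n$-node graph its diameter $\hat D$ satisfies $\hat D \le n-1$, so Theorem~\ref{thm:gossip} (with $k = n$) guarantees that all $n$ of these messages reach all nodes within $\hat D + n \le 2n$ simulated CONGEST rounds, hence within $O(n \cdot d(G))$ real rounds. Crucially this bound, say $\rho \le c' \, n\, d(G)$ for a fixed constant $c'$, does not depend on $i$, so nodes can run each gossip instance for exactly $\rho$ rounds and then advance to the next token in lockstep; the per-token FIFO queues and sent-lists are re-initialized, so distinct instances do not interfere.

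Finally I would assemble the throughput bound by amortization. Let $r_0$ denote the total one-time setup cost: building $T$, convergecasting and broadcasting its max degree $d$, running \emph{EdgeColor-MTM}, and the validity check — a quantity polynomial in $n$ and independent of the token index. After round $r_0$, completing tokens $1,\dots,m$ takes at most $m\rho$ further rounds, and at that point every flow in $F_{all}$ has had its first $m$ packets delivered, i.e. $del_i(r) \ge m$ for every type $i$ and $r = r_0 + m\rho$. By monotonicity of $del_i$, for any $r \in [r_0 + m\rho,\, r_0 + (m+1)\rho)$ we have $del_i(r)/r \ge m/(r_0 + (m+1)\rho)$, which tends to $1/\rho = \Omega(1/(n\,d(G)))$ as $m \to \infty$; taking the convergence round sufficiently large then makes throughput $\Omega(1/(n\,d(G)))$ hold for all later rounds, on the high-probability event.

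The individual steps are short because the real work lives in Theorems~\ref{thm:gossip} and~\ref{thm:edge-color} and the CONGEST-simulation argument of Section~\ref{sec:bcast:lower}. The one place that needs genuine care is the interface between the randomized, finite-time setup and the ``in the limit'' definition of throughput: the high-probability event must be a single event (not re-sampled per token), and the amortization must absorb not only $r_0$ but also the slack between the $2n$-simulated-round worst case of Theorem~\ref{thm:gossip} and the (possibly smaller) actual spreading time — running every instance for the full $\rho$ rounds regardless of when it finishes handles this cleanly. I expect this bookkeeping, rather than any new combinatorial idea, to be the main obstacle.
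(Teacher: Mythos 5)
Your proposal is correct and follows essentially the same route as the paper's proof: condition on the high-probability success of the approximate-MDST construction and edge coloring, simulate CONGEST at a cost of $\Theta(d(G))$ real rounds per simulated round, invoke Theorem~\ref{thm:gossip} to bound each per-token gossip instance by $O(n)$ simulated rounds run in lockstep, and amortize the one-time setup cost. The extra bookkeeping you supply about the convergence round and the fixed per-token budget is exactly what the paper leaves implicit (partly in the caption of the SG figure), so there is no substantive difference.
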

\begin{proof}
Fix some $G=(V,E)$ as specified by the theorem.
With high probability in $n$,
we successfully setup a simulation of the CONGEST model in a tree with a maximum degree in $O(d(G))$.
The {\em SG} algorithm solves all-to-all gossip for each packet, finishing the current packet
before moving onto the next.
By Theorem~\ref{thm:gossip},
each instance requires $O(D+n) = O(n)$ simulated rounds,
which requires $O(n\cdot d(G))$ real rounds.
For all packet counts, the throughput is therefore in $\Omega\left(\frac{1}{d(G)\cdot n}\right)$, as claimed.
\end{proof}

When it comes to convergence time,
the same arguments as in Section~\ref{sec:bcast:lower} apply.
That is, we can use a hybrid of the approximate MDST algorithms from~\cite{dinitz:2019,fr94}
to efficiently achieve a throughput in $\Omega\left(\frac{1}{(d(G) + \log{n})\cdot n}\right)$
that then improves to $\Omega\left(\frac{1}{d(G)\cdot n}\right)$ rounds by $O(n^2)$ rounds.
(In this case, the definition of {\em efficient} is slightly improved
as compared to broadcast
capacity, as we solve all-to-all gossip from scratch for each packet, eliminating a relevant setup cost
related to filling a pipeline.)

\subsection{Random Networks}
\label{sec:all:random}

We now prove that as with broadcast capacity,
randomly generated networks are likely to enable efficient packet spreading.
In particular,
we prove that for any sufficiently large network size $n$,
and any sufficiently large radius $r$ compared to the connectivity threshold $r_c(n)$ (see Section~\ref{sec:bcast:random}),
with high probability in $n$,
our $SG$ algorithm will achieve a throughput in $\Omega(1/n)$ with respect a graph generated
by the $GK(n,r)$ process and flow set $F_{all}$.

Notice, because a node can receive at most one new packet per round, $O(1/n)$ is a trivial bound
on achievable throughput in {\em every} graph, so this shows that graphs from the $GK$ process are in some sense the easiest graphs.  
The below result follows directly from Theorem~\ref{thm:all} and the argument
used in the proof of Theorem~\ref{thm:bcast:random} that establishes
 for a sufficiently large radius, the resulting graph is likely to have a constant degree spanning tree.

\begin{theorem}
There exists a constant $\beta \geq 1$,
such that for any significantly large network size $n>1$
and radius $r \geq \beta r_c(n)$,
if $G \sim GK(n,r)$ then with high probability 
$SG$ achieves throughput in $\Omega(1/n)$ with respect to $G$ and $F_{all}$.
\label{thm:all:random}
\end{theorem}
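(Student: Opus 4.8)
The plan is to derive this as an almost immediate corollary of Theorem~\ref{thm:all} together with the structural facts about $GK(n,r)$ graphs already established in Section~\ref{sec:gk:lower} and reused in the proof of Theorem~\ref{thm:bcast:random}. The key observation is that Theorem~\ref{thm:all} already gives $SG$ a throughput of $\Omega(1/(d(G)\cdot n))$ on $(G,F_{all})$, so it suffices to argue that with high probability the randomly generated graph satisfies $d(G) = O(1)$; then the bound collapses to the claimed $\Omega(1/n)$.

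First I would invoke Lemma~\ref{lem:gk:good}: there is a constant $\beta \geq 1$ such that for every sufficiently large $n$ and every radius $r \geq \beta r_c(n)$, a graph $G \sim GK(n,r)$ is \emph{good} (Definition~\ref{def:good}) with high probability in $n$. Here we actually only need the node-count half of goodness, namely that every box of the associated $r$-grid is non-empty (indeed contains $\Theta(n/k^2)$ nodes); the $Y_{i,j}$ bounds on demand pairs are irrelevant for the all-to-all flow set $F_{all}$. Condition on this event.

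Next I would reconstruct the constant-degree spanning tree exactly as in the proof of Theorem~\ref{thm:bcast:random}. By the definition of the $r$-grid, any two locations in boxes that share a side are within distance $r$, so each node is adjacent in $G$ to every node lying in a side-adjacent box. Pick one \emph{core} node per box; connect core nodes of side-adjacent boxes to obtain a connected overlay (connected since every box is non-empty), take an arbitrary spanning tree of that overlay, and within each box hang the remaining nodes off the box's core node as a simple path. This yields a spanning tree of $G$ of maximum degree at most $5$, and hence $d(G) \leq 5 = O(1)$.

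Finally, apply Theorem~\ref{thm:all}: with high probability in $n$, $SG$ builds an $O(d(G))$-degree approximate MDST, edge-colors it, and runs the broadcast-gossip simulation, achieving throughput $\Omega(1/(d(G)\cdot n)) = \Omega(1/n)$ on $(G,F_{all})$. The total failure probability is a union bound over the goodness event and the high-probability events internal to Theorem~\ref{thm:all} (successful MDST construction and successful edge coloring), which remains $1 - n^{-\Omega(1)}$. I expect no genuine obstacle here; the only point worth stressing is that $\Omega(1/n)$ is simultaneously a \emph{trivial upper bound} on all-to-all throughput—since a node receives at most one packet per round—so the statement in fact certifies that $SG$ is (asymptotically) optimal on these random graphs, but establishing that optimality requires no extra argument beyond the above.
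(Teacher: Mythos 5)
Your proposal is correct and follows essentially the same route as the paper: the paper also derives this result directly from Theorem~\ref{thm:all} combined with the argument in the proof of Theorem~\ref{thm:bcast:random} (via Lemma~\ref{lem:gk:good} and the explicit degree-$5$ spanning tree on the $r$-grid) showing $d(G) = O(1)$ with high probability. Your additional remarks---that only the node-count half of goodness is needed and that $\Omega(1/n)$ is also a trivial upper bound---are accurate but not required.
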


\fi

\subsection{Implications for One-Shot Gossip}
\label{sec:all:oneshpt}

Existing results for one-shot gossip in the mobile telephone model
are expressed with respect to the vertex expansion (denoted $\alpha$) of the graph topology~\cite{newport:2019,newport:2017}.
The best known results requires $O((n/\alpha)\text{polylog}(n))$ rounds,
which is not tight in all graphs as vertex expansion does not necessarily characterize optimal gossip.\footnote{Consider, for example, a path of length $n$, which has $\alpha = 2/n$. It is possible to pipeline $n$ messages through this network
in $\Theta(n)$ rounds, which is much faster than $\tilde{O}(n/\alpha) = \tilde{O}(n^2)$.}
A key open question from~\cite{newport:2017} is whether it is possible to produce a gossip algorithm
that is optimal (or within log factors of optimal) in {\em all} network topology graphs.
The techniques used in the above capacity bounds help us prove the following,
which largely resolves this open question:

\begin{theorem}
Fix a connected network topology $G = (V,E)$ with diameter $D$, size $n=|V|$,
and MDST degree $d(G)$.
Every solution to the one-shot gossip in $G$ requires $\Omega(d(G)\cdot n)$ rounds.
There exists an algorithm solves the problem in
$O((D+\sqrt{n})\text{\em polylog}(n) + n(d(G) + \log{n}))= \tilde{O}(d(G)\cdot n)$ rounds, with high probability in $n$. 
\label{thm:oneshot}
\end{theorem}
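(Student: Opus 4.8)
The plan is to prove the two halves of Theorem~\ref{thm:oneshot} separately, each reusing machinery already developed for all-to-all capacity.

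For the lower bound, I would first invoke the trivial $\Omega(n)$ bound (each node receives at most one message per round and must learn $n-1$ of them), so it suffices to show $\Omega(d(G)\cdot n)$ when $d(G)$ exceeds a small constant. For that range I would adapt the toughness argument behind Theorem~\ref{thm:bcast:lower}: apply Theorem~\ref{thm:toughness} with $k=d(G)-1$ to obtain a nonempty $S\subset V$ with $c(G\setminus S) > q|S|$ for $q=d(G)-3>1$. For any gossip message $m$ and any component $C_i$ of $G\setminus S$ not containing $m$'s origin, the first node of $C_i$ to learn $m$ must learn it across an edge from $S$; since the connections in any round form a matching, at most $|S|$ such ``arrival'' events occur per round. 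Each of the $n$ messages must reach all but at most one of the $>q|S|$ components, so at least $n(q|S|-1)$ arrival events are needed, forcing at least $n(q|S|-1)/|S| = n(q - 1/|S|) = \Omega(d(G)\cdot n)$ rounds. The only new wrinkle relative to Theorem~\ref{thm:bcast:lower} is that a message starting inside a component still owes arrivals to every other component, which the $q|S|-1$ term already absorbs.

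For the upper bound, I would run the \emph{SG} algorithm of Figure~\ref{fig:all} restricted to a single token per source, and total the costs carefully. First build a spanning tree $T$ distributedly via the broadcast-CONGEST MDST algorithm of Dinitz et~al.~\cite{dinitz:2019}, which in $\tilde O(D+\sqrt n)$ rounds yields $T$ with maximum degree $\hat d = O(d(G)+\log n)$ w.h.p.; then convergecast and broadcast $\hat d$ on $T$ in $O(D(T))$ rounds; then edge-color $T$ with $2\hat d-1$ colors using \emph{EdgeColor-MTM}, costing $O(\hat d\log n)$ rounds w.h.p.\ by Theorem~\ref{thm:edge-color} (with a convergecast validity check, retried on the low-probability failure). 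Using the coloring to simulate CONGEST at $2\hat d-1 = O(d(G)+\log n)$ real rounds per simulated round, run broadcast gossip on $T$ for the $n$ tokens; by Theorem~\ref{thm:gossip} this finishes within $D(T)+n = O(n)$ simulated rounds, i.e.\ $O(n(d(G)+\log n))$ real rounds. Since $d(G) = \tilde\Theta(\hat d)$ and $D(T)\le n$, the setup phases are dominated, giving a total of $O((D+\sqrt n)\,\text{polylog}(n) + n(d(G)+\log n))$; and because $D\le n\le d(G)\cdot n$, this simplifies to $\tilde O(d(G)\cdot n)$. Correctness holds w.h.p.\ by a union bound over the randomized subroutines. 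Here I would deliberately use the distributed tree of~\cite{dinitz:2019} rather than the optimal centralized construction of~\cite{fr94}, trading an additive $\log n$ in the degree for a $\tilde O(D+\sqrt n)$ setup instead of $O(n^2)$.

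The main obstacle is the accounting in the upper bound: verifying that the tree construction, the degree dissemination, and the edge-coloring validity check are all asymptotically dominated by the $O(n)$-simulated-round gossip phase, and that the high-probability guarantees of the randomized subroutines compose cleanly into a single w.h.p.\ statement. The lower bound, by contrast, is essentially a bookkeeping variation on Theorem~\ref{thm:bcast:lower}.
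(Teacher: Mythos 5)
Your proposal is correct and follows essentially the same route as the paper: the lower bound is the toughness/arrival-counting argument of Theorems~\ref{thm:toughness} and~\ref{thm:bcast:lower} applied to the $n$ packets of one-shot gossip (exactly how the paper invokes Theorem~\ref{thm:all:upper}), and the upper bound runs the \emph{SG} machinery with the distributed MDST of~\cite{dinitz:2019}, edge-coloring-based CONGEST simulation, and the $D+k$ bound of Theorem~\ref{thm:gossip}. Your accounting of the setup phases is just a more explicit version of the paper's own cost analysis, so there is nothing substantively different to flag.
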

\iflong
\begin{proof}
The lower bound of $\Omega(d(G)\cdot n)$ follows directly
from the argument of Theorem~\ref{thm:all:upper}.
To derive the upper bound,
we consider the $SG$ algorithm run with the distributed approximate MDST
algorithm from~\cite{dinitz:2019}.
After initialization,
$SG$ solves one-shot gossip in $O((D+n)\cdot \hat d) = O(n\cdot \hat d)$ rounds,
where $\hat d$ is the maximum degree of the tree used by the algorithm.
Using the distributed algorithm from~\cite{dinitz:2019},
this tree requires $\tilde{O}(D + \sqrt{n})$ rounds
to construct and yields $\hat d \in O(d(G) + \log{n})$.
The theorem claim follows directly.
\end{proof}

Notice,
the solution described in Theorem~\ref{thm:oneshot}
is asymptotically optimal in any graph with $d(G) \in \Omega(\log{n})$ and $D \in O(n/\log^x{n})$ (where $x$
is the constant from the polylog in the MDST construction time), which describes a large family of graphs.
Furthermore, for the subset of graphs with small MDST degrees and/or large degrees,
the solution can be expressed as $\tilde{O}(d(G)\cdot n)$, 
which is at most a polylog factor slower than optimal.

This is the first known gossip solution to be optimal, or within log factors of optimal, in {\em all} 
graphs, largely answering the challenge presented by~\cite{newport:2017}.

\fi

%%%%%%%%%%%%%%%%%%%%%%%%%%%%%%%%%%%%%%%%%%%%%%%%%%%%%%%%%%%%%%%%%%%%%%%%%%%%%%%%
% END general.tex
%%%%%%%%%%%%%%%%%%%%%%%%%%%%%%%%%%%%%%%%%%%%%%%%%%%%%%%%%%%%%%%%%%%%%%%%%%%%%%%%

%%%%%%%%%%%%%%%%%%%%%%%%%%%%%%%%%%%%%%%%%%%%%%%%%%%%%%%%%%%%%%%%%%%%%%%%%%%%%%%%
% Bibliography
%%%%%%%%%%%%%%%%%%%%%%%%%%%%%%%%%%%%%%%%%%%%%%%%%%%%%%%%%%%%%%%%%%%%%%%%%%%%%%%%

\bibliographystyle{plainurl}% the recommnded bibstyle
\bibliography{main}

%%%%%%%%%%%%%%%%%%%%%%%%%%%%%%%%%%%%%%%%%%%%%%%%%%%%%%%%%%%%%%%%%%%%%%%%%%%%%%%%
% Appendix
%%%%%%%%%%%%%%%%%%%%%%%%%%%%%%%%%%%%%%%%%%%%%%%%%%%%%%%%%%%%%%%%%%%%%%%%%%%%%%%%
\newpage
\appendix

\end{document}